\newlength{\commentWidth}
\newtheorem{definition}{Definition}
\newtheorem{theorem}{Theorem}
\newtheorem{proposition}[theorem]{Proposition}
\newtheorem{lemma}[theorem]{Lemma}
\newtheorem{corollary}[theorem]{Corollary}
\renewcommand{\vec}[1]{\mathbf{#1}}
\begin{document}

\begin{frontmatter}

\title{Refined Kolmogorov Complexity of Analog, Evolving and Stochastic Recurrent Neural Networks}
\author[address1,address2]{J\'er\'emie Cabessa}
\author[address1]{Yann Strozecki}
\address[address1]{
Laboratoire DAVID, UVSQ -- University Paris-Saclay, 78035 Versailles, France
\href{mailto:jeremie.cabessa@uvsq.f}{{\rm \texttt{jeremie.cabessa@uvsq.fr}}}
\href{mailto:yann.strozecky@uvsq.f}{{\rm \texttt{yann.strozecki@uvsq.fr}}}
\medskip }
\address[address2]{
Institute of Computer Science of the Czech Academy of Sciences, \\18207 Prague 8, Czech Republic 
}


\begin{abstract}
We provide a refined characterization of the super-Turing computational power of analog, evolving, and stochastic neural networks based on the Kolmogorov complexity of their real weights, evolving weights, and real probabilities, respectively. First, we retrieve an infinite hierarchy of classes of analog networks defined in terms of the Kolmogorov complexity of their underlying real weights. This hierarchy is located between the complexity classes $\mathbf{P}$ and $\mathbf{P/poly}$. Then, we generalize this result to the case of evolving networks. A similar hierarchy of Kolomogorov-based complexity classes of evolving networks is obtained. This hierarchy also lies between $\mathbf{P}$ and $\mathbf{P/poly}$. Finally, we extend these results to the case of stochastic networks employing real probabilities as source of randomness. An infinite hierarchy of stochastic networks based on the Kolmogorov complexity of their probabilities is therefore achieved. In this case, the hierarchy bridges the gap between $\mathbf{BPP}$ and $\mathbf{BPP/log^*}$. Beyond proving the existence and providing examples of such hierarchies, we describe a generic way of constructing them based on classes of functions of increasing complexity. For the sake of clarity, this study is formulated within the framework of echo state networks. Overall, this paper intends to fill the missing results and provide a unified view about the refined capabilities of analog, evolving and stochastic neural networks.
\end{abstract}


\begin{keyword}
Recurrent Neural Networks; Echo state networks; Computational Power; Computability Theory; Analog Computation; Stochastic Computation; Kolmogorov Complexity.
\end{keyword}

\end{frontmatter}

\pagebreak

\section{Introduction}
\label{sec:introduction}

Philosophical considerations aside, it can reasonably be claimed that several brain processes are of a computational nature. ``The idea that brains are computational in nature has spawned a range of explanatory hypotheses in theoretical neurobiology''~\cite{ChurchlandSejnowski16}. In this regard, the question of the computational capabilities of neural networks naturally arises, among many others. 

Since the early 1940s, the theoretical approach to neural computation has been focused on comparing the computational powers of neural network models and abstract computing machines. In 1943, McCulloch and Pitts proposed a modeling of the nervous system as a finite interconnection of logical devices and studied the computational power of ``nets of neurons'' from a logical perspective~\cite{McCullochPitts43}. Along these lines, Kleene and Minsky proved that recurrent neural networks composed of McCulloch and Pitts (i.e., Boolean) cells are computationally equivalent to finite state automata~\cite{Kleene56,Minsky67}. 
These results paved the way for a future stream of research motivated by the expectation to implement abstract machines on parallel hardware architectures (see for instance~\cite{CleeremansEtAl89,AlonEtAl91,Kremer95,Indyk95,HorneHush96,OmlinGiles96a,Siegelmann96}). 

In 1948, Turing introduced the B-type unorganized machine, a kind of neural network composed of interconnected NAND neuronal-like units~\cite{Turing48}. He suggested that the consideration of sufficiently large B-type unorganized machines could simulate the behavior of a universal Turing machine with limited memory. The Turing universality of neural networks involving infinitely many Boolean neurons has been further investigated (see for instance~in~\cite{Pollack87,HartleySzu87,GarzonFranklin89,FranklinGarzon90,Schmidhuber90}). Besides, Turing brilliantly anticipated the concepts of ``learning'' and ``training'' that would later become central to machine learning. These concepts took shape with the introduction of the {\it perceptron}, a formal neuron that can be trained to discriminate inputs using Hebbian-like learning~\cite{Rosenblatt57,Rosenblatt58,Hebb49}. But the computational limitations of the perceptron dampened the enthusiasms for artificial neural networks~\cite{MinskyPapert69}. The ensuing winter of neural networks lasted until the 1980s, when the 
popularization of the backpropagation algorithm, among other factors, paved the way for the great success of deep learning~\cite{RumelhartEtAl86,Schmidhuber15}. 

Besides, in the late 50's, von Neumann proposed an alternative approach to brain information processing from the hybrid perspective of digital and analog computations~\cite{vonNeumann58}. 
Along these lines, Siegelmann and Sontag studied the capabilities of {\it sigmoidal neural networks}, (instead of Boolean ones). They showed that recurrent neural networks composed of linear-sigmoid cells and rational synaptic weights are Turing complete~\cite{SiegelmannSontag95,Hyotyniemi96,Neto97}. This result has been generalized to a broad class of sigmoidal networks~\cite{KilianSiegelmann96}. 

Following the developments in analog computation~\cite{Siegelmann99}, Siegelmann and Sontag argued that the variables appearing in the underlying chemical and physical phenomena could be modeled by continuous rather than discrete (rational) numbers. Accordingly, they introduced the concept of an {\it analog neural network} -- a sigmoidal recurrent neural net equipped with real instead of rational weights. They proved that analog neural networks are computationally equivalent to Turing machines with advice, and hence, decide the complexity class $\mathbf{P/poly}$ in polynomial time of computation~\cite{SiegelmannSontag94,Siegelmann03}. Analog networks are thus capable of {\it super-Turing} capabilities and could capture chaotic dynamical features that cannot be described by Turing machines~\cite{Siegelmann95}. Based to these considerations, Siegelmann and Sontag formulated the so-called Thesis of Analog Computation -- an analogous to the Church-Turing thesis in the realm of analog computation -- stating that no reasonable abstract analog device can be more powerful than first-order analog recurrent neural networks~\cite{SiegelmannSontag94,Siegelmann99}.

Inspired by the learning process of neural networks, Cabessa and Siegelmann studied the computational capabilities of evolving neural networks~\cite{CabessaSiegelmann11,CabessaSiegelmann14}. In summary, evolving neural networks using either rational, real, or binary evolving weights are all equivalent to analog neural networks. They also decide the class $\mathbf{P/poly}$ in polynomial time of computation. 

The computational power of stochastic neural networks has also been investigated in detail. For rational-weighted networks, the addition of a discrete source of stochasticity increases the computational power from $\mathbf{P}$ to $\mathbf{BPP/log^*}$, while for the case of real-weighted networks, the capabilities remain unchanged to the $\mathbf{P/poly}$ level~\cite{Siegelmann99b}. On the other hand, the presence of analog noise would strongly reduce the computational power of the systems to that of finite state automata, or even below~\cite{Ben-HurEtAl04,MaassOrponen98,MaassSontag99}. 

Based on these considerations, a refined approach to the computational power of recurrent neural networks has been undertaken. On the one hand, the sub-Turing capabilities of Boolean rational-weighted networks containing $0$, $1$, $2$ or $3$ additional sigmoidal cells have been investigated~\cite{Sima19,Sima20}. On the other hand, a refinement of the super-Turing computational power of analog neural networks has been described in terms of the Kolmogorov complexity of the underlying real weights~\cite{SiegelmannEtAl97}. The capabilities of analog networks with weights of increasing Kolmogorov complexity shall stratify the gap between the complexity classes $\mathbf{P}$ and $\mathbf{P/poly}$.

The capabilities of analog and evolving neural networks have been generalized to the context of infinite computation, in connection with the attractor dynamics of the networks~\cite{CabessaVilla12,CabessaSiegelmann12,CabessaVilla13,CabessaVilla14,CabessaVilla14b,CabessaDuparc15,CabessaVilla15,CabessaDuparc16,CabessaVilla16}. In this framework, the expressive power of the networks is characterized in terms of topological classes from the Cantor space (the space of infinite bit streams). A refinement of the computational power of the networks based on the complexity of the underlying real and evolving weights has also been described in this context~\cite{CabessaFinkel17,CabessaFinkel19}. 

The computational capabilities of {\it spiking neural networks} (instead of sigmoidal one) has also been extensively studied~\cite{Maass99,MaassBishop99}. In this approach, the computational states are encoded into the temporal differences between spikes rather than within the activation values of the cells. Maass proved that single spiking neurons are strictly more powerful than single threshold gates~\cite{MaassSchmitt97,MaassSchmitt99}. He also characterized lower and upper bounds on the complexity of networks composed of classical and noisy spiking neurons (see~\cite{Maass94,Maass96,Maass97a,Maass97b,MaassRuf99,MaassMarkram04} and~\cite{Maass95,Maass96b}, respectively). He further showed that networks of spiking neurons are capable of simulating analog recurrent neural networks~\cite{Maass98}.

In the 2000s, P\u{a}un introduced the concept of a {\it P system} -- a highly parallel abstract model of computation inspired by the membrane-like structure of the biological cell~\cite{Paun00,Paun02}. His work led to the emergence of a highly active field of research. The capabilities of various models of so-called {\it neural P systems} have been studied (see for instance~\cite{PaunEtAl06,PaunEtAl06b,PaunEtAl07,Paun07,Paun08}). In particular, neural P systems provided with a bio-inspired source of acceleration were shown to be capable of hypercomputational capabilities, spanning all levels of the arithmetical hierarchy~\cite{PaunCalude04,GheorgheStannett12}.

In terms of practical applications, recurrent neural networks are natural candidates for sequential tasks, involving time series or textual data for instance. Classical recurrent architectures, like LSTM and GRU, have been applied with great success in many situations~\cite{SchmidhuberEtAl17}. A $3$-level formal hierarchy of the sub-Turing expressive capabilities of these architectures, based on the notions of space complexity and rational recurrence, has been established~\cite{MerrillEtAl20}. Echo state networks are another kind of recurrent neural networks enjoying an increasing popularity due to their training efficiency~\cite{Jaeger01,Jaeger02,Jaeger04,LukoseviciusJaeger09}. The computational capabilities of echo state networks have been studied from the alternative perspective of universal approximation theorems~\cite{HornikEtAl89,Hornik91}. In this context, echo state networks are shown to be universal, in the sense of being capable of approximating different classes of filters of infinite discrete time signals~\cite{GrigoryevaOrtega18a,GrigoryevaOrtega18b,GononOrtega20,GononOrtega21}. These works fit within the field of functional analysis rather than computability theory.

In this paper, we extend the refined Kolmogorov-based complexity of analog neural networks~\cite{SiegelmannEtAl97} to the cases of evolving and stochastic neural networks~\cite{CabessaSiegelmann14,Siegelmann99b}. More specifically, we provide a refined characterization of the super-Turing computational power of analog, evolving, and stochastic neural networks based on the Kolmogorov complexity of their real weights, evolving weights, and real probabilities, respectively. First, we retrieve an infinite hierarchy of complexity classes of analog networks defined in terms of the Kolmogorov complexity of their underlying real weights. This hierarchy is located between the complexity classes $\mathbf{P}$ and $\mathbf{P/poly}$. Using a natural identification between real numbers and infinite sequences of bits, we generalize this result to the case of evolving networks. Accordingly, a similar hierarchy of Kolomogorov-based complexity classes of evolving networks is obtained. This hierarchy also lies between $\mathbf{P}$ and $\mathbf{P/poly}$. Finally, we extend these results to the case of stochastic networks employing real probabilities as source of randomness. An infinite hierarchy of complexity classes of stochastic networks based on the Kolmogorov complexity of their real probabilities is therefore achieved. In this case, the hierarchy bridges the gap between $\mathbf{BPP}$ and $\mathbf{BPP/log^*}$. Beyond proving the existence and providing examples of such hierarchies, we describe a generic way of constructing them based on classes of functions of increasing complexity. Technically speaking, the separability between non-uniform complexity classes is achieved by means of a generic diagonalization technique, a result of interest per se which improves upon the previous approach~\cite{SiegelmannEtAl97}. For the sake of clarity, this study is formulated within the framework of echo state networks. Overall, this paper intends to fill the missing results and provide a unified view about the refined capabilities of analog, evolving and stochastic neural networks.

This paper is organized as follows. Section~\ref{sec:related_works} describes the related works. Section~\ref{sec:prelim} provides the mathematical notions necessary to this study. Section~\ref{sec:RNNs} presents recurrent neural networks within the formalism of echo state networks. Section~\ref{sec:AES-NNs} introduces the different models of analog, evolving and stochastic recurrent neural networks, and establishes their tight relations to non-uniform complexity classes defined in terms of Turing machines with advice. Section~\ref{sec:results} provides the hierarchy theorems, which in turn, lead to the descriptions of strict hierarchies of classes of analog, evolving and stochastic neural network. Section~\ref{sec:discussion} offers some discussion and concluding remarks.

\section{Related Works}
\label{sec:related_works}

Kleene and Misnky showed the equvalence between Boolean recurrent neural networks and finite state automata~\cite{Kleene56,Minsky67}. 
Siegelmann and Sontag proved the Turing universality of rational-weighted neural networks~\cite{SiegelmannSontag95}. 
Kilian and Siegelmann generalized the result to a broader class of sigmoidal neural networks~\cite{KilianSiegelmann96}. 
In connection with analog computation, Siegelmann and Sontag characterized the super-Turing capabilities of real-weighted neural networks~\cite{SiegelmannSontag95,Siegelmann95,Siegelmann03}. 
Cabessa and Siegelmann extended the result to evolving neural networks~\cite{CabessaSiegelmann14}. 
The computational power of various kinds of stochastic and noisy neural networks has been characterized~\cite{Siegelmann99,Ben-HurEtAl04,MaassOrponen98,MaassSontag99}. 
Sima graded the sub-Turing capabilites of Boolean networks containing $0$, $1$, $2$ or $3$ additional sigmoidal cells~\cite{Sima19,Sima20}. 
Balc\'azar et al.~hierarchized the super-Turing computational power of analog networks in terms of the Kolmogorov complexity of their underlying real weights~\cite{SiegelmannEtAl97}. 
Cabessa et al.~pursued the study of the computational capabilities of analog and evolving neural networks from the perspective of infinite computation~\cite{CabessaVilla12,CabessaSiegelmann12,CabessaVilla13,CabessaVilla14,CabessaVilla14b,CabessaDuparc15,CabessaVilla15,CabessaDuparc16,CabessaVilla16}. 

Besides, the computational power of spiking neural networks has been extensively studied by Maass~\cite{Maass99,MaassBishop99,MaassSchmitt97,MaassSchmitt99,Maass94,Maass96,Maass97a,Maass97b,MaassRuf99,MaassMarkram04,Maass95,Maass96b,Maass98}. In addition, since the 2000s, the field of P systems, which involves neural P systems in particular, has been booming (see for instance~\cite{Paun00,Paun02,PaunEtAl06,PaunEtAl06b,PaunEtAl07,Paun07,Paun08}). The countless variations of proposed models are generally Turing complete.

Regarding modern architectures, a hierarchy of the sub-Turing expressive power of LSTM and GRU neural networks has been established~\cite{MerrillEtAl20}. Furthermore, the universality of echo state networks has been studied from the perspective of universal approximation theorems~\cite{GrigoryevaOrtega18a,GrigoryevaOrtega18b,GononOrtega20,GononOrtega21}.

\section{Preliminaries}
\label{sec:prelim}

The binary alphabet is denoted by $\Sigma = \{ 0,1 \}$, and the set of finite words, finite words of length $n$, infinite words, and finite or infinite words over $\Sigma$ are denoted by $\Sigma^*$, $\Sigma^n$, $\Sigma^\omega$, and $\Sigma^{\leq\omega}$, respectively. Given some finite or infinite word $w \in \Sigma^{\leq\omega}$, the $i$-th bit of $w$ is denoted by $w_i$, the sub-word from index $i$ to index $j$ is $w[i:j]$, and the length of $w$ is $|w|$, with $|w| = \infty$ if $w \in \Sigma^\omega$.

A \textit{Turing machine (TM)} is defined in the usual way. A \textit{Turing machine with advice (TM/A)} is a TM provided with an additional advice tape and function $\alpha : \mathbb{N} \rightarrow \Sigma^*$. On every input $w \in \Sigma^n$ of length $n$, the machine first queries its advice function $\alpha(n)$, writes this word on its advice tape, and then continues its computation according to its finite program. The advice $\alpha$ is called {\it prefix} if $m \leq n$ implies that $\alpha(m)$ is a prefix of $\alpha(n)$, for all $m, n \in \mathbb{N}$. The advice $\alpha$ is called {\it unbounded} if the length of the successive advice words tends to infinity, i.e., if $\lim_{n \rightarrow \infty} | \alpha(n) | = \infty$.\footnote{Note that if $\alpha$ is not unbounded, then it can be encoded into the program of a TM, and thus, doesn't add any computational power to the TM model.} In this work, we assume that every advice $\alpha$ is prefix and unbounded, which ensures that $\lim_{n \rightarrow \infty} \alpha(n) \in \Sigma^\omega$ is well-defined. For any non-decreasing function $f : \mathbb{N} \rightarrow \mathbb{N}$, the advice $\alpha$ is said to be of size $f$ if $|\alpha(n)| = f(n)$, for all $n \in \mathbb{N}$. We let  $\mathrm{poly}$ be the set of univariate polynomials with integer coefficients and $\mathrm{log}$ be the set of functions of the form $n \rightarrow C\log(n)$ where $C\in \mathbb{N}$.
The advice $\alpha$ is called {\it polynomial} or {\it logarithmic} if it is of size $f \in \mathrm{poly}$ or $f \in \mathrm{log}$, respectively. A TM/A $\mathcal{M}$ equipped with some prefix unbounded advice is assumed to satisfy the following additional consistency property: for any input $w \in \Sigma^n$, $\mathcal{M}$ accepts $w$ using advice $\alpha(n)$ iff $\mathcal{M}$ accepts $w$ using advice $\alpha(n')$, for all $n' \geq n$. 

The class of languages decidable in polynomial time by some TM is $\mathbf{P}$. The class of languages decidable in time $t : \mathbb{N} \rightarrow \mathbb{N}$ by some TM/A with advice $\alpha$ is denoted by $\mathbf{TMA}[\alpha, t]$. Given some class of advice functions $\mathcal{A} \subseteq ({\Sigma^*})^\mathbb{N}$ and some class of time functions $\mathcal{T} \subseteq \mathbb{N}^\mathbb{N}$, we naturally define
$$
\mathbf{TMA} \left[ \mathcal{A}, \mathcal{T} \right] = \bigcup_{\alpha \in \mathcal{A}} \bigcup_{t \in \mathcal{T}} \mathbf{TMA} \left[ \alpha, t \right] .
$$
The class of languages decidable in polynomial time by some TM/A with polynomial prefix and non-prefix advice are $\mathbf{P/poly^*}$ and $\mathbf{P/poly}$, respectively. It can be noticed that $\mathbf{P/poly^*} = \mathbf{P/poly}$. 

A \textit{probabilistic Turing machine (PTM)} is a TM with two transition functions. At each computational step, the machine chooses one or the other transition function with probability $\tfrac{1}{2}$, independently from all previous choices, and updates its state, tapes' contents, and heads accordingly. A PTM $\mathcal{M}$ is assumed to be a decider, meaning that for any input $w$, all possible computations of $\mathcal{M}$ end up either in an accepting or in a rejecting state. Accordingly, the random variable corresponding to the decision ($0$ or $1$) that $\mathcal{M}$ makes at the end of its computation over $w$ is denoted by $\mathcal{M}(w)$. 
Given some language $L \subseteq \Sigma^*$, we say that the PTM $\mathcal{M}$ decides $L$ in time $t : \mathbb{N} \rightarrow \mathbb{N}$ if, for every $w \in \Sigma^*$, $\mathcal{M}$ halts in $t(|w|)$ steps regardless of its random choices, and $Pr[\mathcal{M}(w) = 1] \geq \tfrac{2}{3}$ if $w \in L$ and $Pr[\mathcal{M}(w) = 0] \geq \tfrac{2}{3}$ if $w \not \in L$. The class of languages decidable in polynomial time by some PTM is $\mathbf{BPP}$. A \textit{probabilistic Turing machine with advice (PTM/A)} is a PTM provided with an additional advice tape and function $\alpha : \mathbb{N} \rightarrow \Sigma^*$. The class of languages decided in time $t$ by some PTM/A with advice $\alpha$ is denoted by $\mathbf{PTMA}[\alpha, t]$. Given some class of advice functions $\mathcal{A} \subseteq ({\Sigma^*})^\mathbb{N}$ and some class of time functions, we also define
$$
\mathbf{PTMA} \left[ \mathcal{A}, \mathcal{T} \right] = \bigcup_{\alpha \in \mathcal{A}} \bigcup_{t \in \mathcal{T}} \mathbf{PTMA} \left[ \alpha, t \right] .
$$
The class of languages decidable in polynomial time by some PTM/A with logarithmic prefix and non-prefix advice are $\mathbf{BPP/log^*}$ and $\mathbf{BPP/log}$, respectively. In this probabilistic case however, it can be shown that $\mathbf{BPP/log^*} \subsetneq \mathbf{BPP/log}$. 

In the sequel, we will be interested in the size of the advice functions. Hence, we define the following {\it non-uniform complexity classes}.\footnote{This definition is non-standard. Usually, non-uniform complexity classes are defined with respect to a class of advice functions $\mathcal{H} \subseteq (\Sigma^*)^{\mathbb{N}}$ instead of a class of advice functions' size $\mathcal{F} \subseteq \mathbb{N}^{\mathbb{N}}$.} 
Given a class of languages (or associated machines) $\mathcal{C} \subseteq 2^{\Sigma^*}$ and 
a function $f : \mathbb{N} \rightarrow \mathbb{N}$, 
we say that 
$L \in \mathcal{C} / f^*$
if there exist some $L' \in \mathcal{C}$
and some prefix advice function $\alpha : \mathbb{N} \rightarrow \Sigma^*$ such that, for all $n \in \mathbb{N}$ and for all $w \in \Sigma^n$, the following properties hold:
\begin{enumerate}[label=(\roman*),itemsep=0pt]
    \item $| \alpha(n) | = f(n)$, for all $n \geq 0$;
    \item $w \in L \Leftrightarrow \langle w, \alpha(n) \rangle \in L'$;
    \item $\langle w, \alpha(n) \rangle \in L' \Leftrightarrow \langle w, \alpha(k) \rangle \in L', \text{ for all } k \geq n$.
\end{enumerate}
Given a class of functions $\mathcal{F} \subseteq \mathbb{N}^{\mathbb{N}}$, we naturally set
$$
\mathcal{C} / \mathcal{F}^* = \bigcup_{f \in \mathcal{F}} \mathcal{C} / f^* .
$$
The non-starred complexity classes $\mathcal{C} / f$ and $\mathcal{C} / \mathcal{F}$ are defined analogously, except that the prefix property of $\alpha$ and the last condition are not required. 
For instance, the class of languages decidable in polynomial time by some Turing machine (resp.~probabilistic Turing machines) with prefix advice of size $f$ is $\mathbf{P} / f^*$ (resp.~$\mathbf{BPP} / f^*$).

Besides, for any $w = w_0 w_1 w_2 \cdots \in \Sigma^{\leq \omega}$, we consider the {\it base-2} and {\it base-4 encoding} functions $\delta_2 : \Sigma^{\leq \omega} \rightarrow [0, 1]$ and $\delta_4 : \Sigma^{\leq \omega} \rightarrow [0, 1]$ respectively defined by
\begin{equation*}
\delta_2(w) = \sum_{i = 0}^{|w|} \frac{w_i + 1}{2^{i+1}} \text{ ~and~ } \delta_4(w) = \sum_{i = 0}^{|w|} \frac{2 w_i + 1}{4^{i+1}} .
\end{equation*}
The use of base $4$ ensures that $\delta_4$ is an injection. Setting $\Delta := \delta_4(\Sigma^\omega) \subseteq [0, 1]$ ensures that the restriction $\delta_4 : \Sigma^{\leq \omega} \rightarrow \Delta$ is bijective, and thus that $\delta^{-1}$ is well-defined on the domain $\Delta$. In the sequel, for any real $r \in \Delta$, its base-4 expansion will be generally be denoted as $\bar r = r_0 r_1 r_2 \cdots \in \delta_4^{-1}(r) \in \Sigma^\omega$. For any $R \subseteq \Delta$, we thus define $\bar R = \{ \bar r : r \in R \}$.


Finally, for every probability space $\Omega$ and every events $A_1, \dots, A_n \subseteq \Omega$, the probability that at least one event $A_i$ occurs can be bounded by the {\it union bound} defined by
\begin{equation*}
    \mathrm{Pr} \left( \bigcup_{i=1}^n A_i \right) \leq \sum_{i=1}^n \mathrm{Pr} \left( A_i \right) .
\end{equation*}


\section{Recurrent Neural Networks}
\label{sec:RNNs}

We consider a specific model of recurrent neural networks complying with the {\it echo state networks} architecture~\cite{Jaeger01,Jaeger02,Jaeger04,LukoseviciusJaeger09}. More specifically, a recurrent neural network is composed of an input layer, a pool of interconnected neurons sometimes referred to as the {\it reservoir}, and an output layer, as illustrated in Figure~\ref{fig_RNN}. The networks read and accept or reject finite words over the alphabet $\Sigma = \{ 0,1 \}$ using a input-output encoding described below. Accordingly, they are capable of performing decisions of formal languages.

\begin{definition}
\label{RNN:def}
A {\em rational-weighted recurrent neural network (RNN)} is a tuple
$$
\mathcal{N} = \left( \vec{x}, \vec{h}, \vec{y}, \vec{W_{in}},\vec{W_{res}}, \vec{W_{out}}, \vec{h}^0 \right)
$$
where
\begin{itemize}[itemsep=0pt]
    \item $\vec{x} = ( x_0, x_1 )$ is a sequence of two {\it input cells}, the {\it data input} $x_0$ and the {\it validation input} $x_1$;
    \item $\vec{h} = ( h_0,\dots,h_{K-1} )$ is a sequence of $K$ {\it hidden cells}, sometimes referred to as the {\it reservoir};
    \item $\vec{y} = ( y_0, y_1 )$ is a sequence of two {\it output cells}, the {\it data output} $y_0$ and the {\it validation output} $y_1$;
    \item $\vec{W_{in}} \in \mathbb{Q}^{K \times (2+1)}$ is a matrix of {\it input weights} and {\it biases}, where $w_{ij}$ is the weight from input $x_j$ to cell $h_i$, for $j \neq 2$, and $w_{i2}$ is the bias of $h_i$; 
    \item $\vec{W_{res}} \in \mathbb{Q}^{K \times K}$ is a matrix of {\it internal weights}, where $w_{ij}$ is the weight from cell $h_j$ to cell $h_i$; 
    \item $\vec{W_{out}} \in \mathbb{Q}^{2 \times K}$ is a matrix of {\it output weights}, where $w_{ij}$ is the weight from cell $h_j$ to output $y_i$; 
    \item $\vec{h}^0 = (h^0_0, \dots, h^0_{K-1}) \in [0, 1]^K$ is the {\it initial state} of $\mathcal{N}$, where each component $h_i^0$ is the initial activation value of cell $h_i$.
\end{itemize}
\end{definition}

The {\it activation value} of the cell $x_i$, $h_j$ and $y_k$ at time $t$ is denoted by $x_i^t \in \{ 0, 1 \}$, $h_j^t \in [0, 1]$ and $y_k^t \in \{ 0, 1 \}$, respectively. Note that the activation values of input and output cells are Boolean, as opposed to those of the hidden cells. The {\it input}, {\it output} and {\it (hidden) state} of $\mathcal{N}$ at time $t$ are the vectors
$$
\vec{x}^t = (x_0^t, x_1^t) \in \mathbb{B}^2, ~~~ \vec{y}^t = (y_0^t, y_1^t) \in \mathbb{B}^2 ~\text{ and }~ \vec{h}^t = (h_0^t, \dots, h_{K-1}^t) \in [0, 1]^K \,,
$$
respectively. Given some input $\vec{x}^t$ and state $\vec{h}^t$ at time $t$, the state $\vec{h}^{t+1}$ and the output $\vec{y}^{t+1}$ at time $t+1$ are computed by the following equations:
\begin{eqnarray}
    \vec{h}^{t+1} & = & \sigma \left( \vec{W_{in}} (\vec{x}^t : 1) + \vec{W_{res}} \vec{h}^t \right) \label{rnn:eq1} \\ 
    \vec{y}^{t+1} & = & \theta \left( \vec{W_{out}} \vec{h}^{t+1} \right) \label{rnn:eq2}
\end{eqnarray}
where $(\vec{x}^t : 1)$ denotes the vector $(x_0^t, x_1^t, 1)$, and $\sigma$ and $\theta$ are the {\it linear sigmoid} and the {\it hard-threshold} functions respectively given by
$$
\sigma(x) = 
\begin{cases}
    0 & \text{if } x < 0 \\
    x & \text{if } 0 \leq x \leq 1 \\
    1 & \text{if } x > 1
\end{cases}
\text{ ~~and~~ }
\theta(x) = 
\begin{cases}
    0 & \text{if } x < 0 \\
    1 & \text{if } x \geq  1
\end{cases}
.
$$
The constant value $1$ in the input vector $(\vec{x}^t, 1)$ ensure that the hidden cells $\vec{h}$ receive the last column of $\vec{W_{in}} \in \mathbb{Q}^{K \times (2+1)}$ as biases at each time step $t \geq 0$. In the sequel, the bias of cell $h_i$ will be denoted as $w_{i2}$.

\begin{figure}[h!]
    \centering
    \includegraphics[width=12cm]{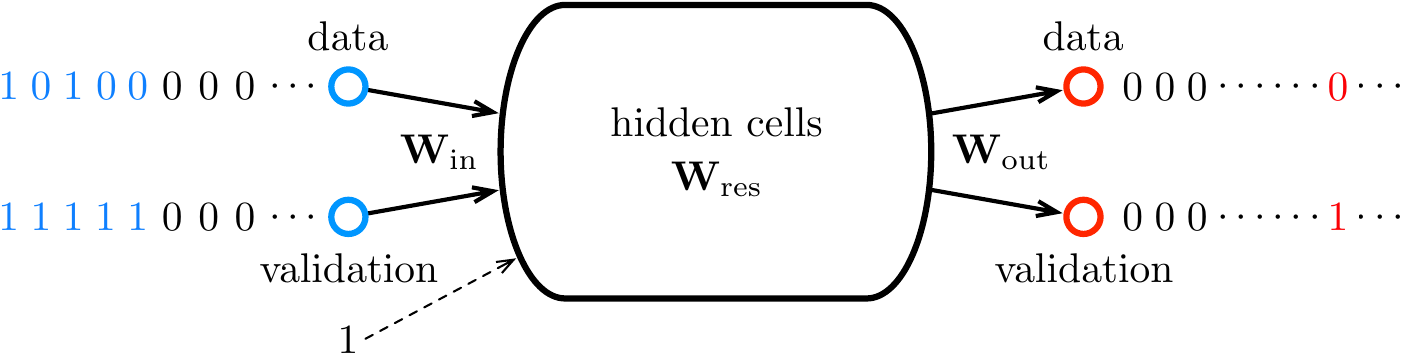}
    \caption{A recurrent neural network. The network is composed of two Boolean input cells (data and validation), two Boolean output cells (data and validation) and a set of $K$ hidden cells, the reservoir, that are recurrently interconnected. The weight matrices $\vec{W_{in}}$, $\vec{W_{res}}$, and $\vec{W_{out}}$ labeling the connections between these layers are represented. In this illustration, the network reads the finite word $10100$ by means of its data and validation input cells (blue), and eventually rejects it, as shown by the pattern of the data and validation output cells (red).}
    \label{fig_RNN}
\end{figure}

An {\it input} $\vec{x}$ of length $n$ for the network $\mathcal{N}$ is an infinite sequence of inputs at successive time steps $t=0, 1, 2, \dots$, such that the $n$ first validation bits are equal to $1$, while the remaining data and validation bits are equal to $0$, i.e.,
$$
\vec{x} = \vec{x}^0\vec{x}^1 \cdots \vec{x}^{n-1} \vec{0}^\omega \in \left( \mathbb{B}^2 \right)^\omega
$$
where $\vec{x}^i = (x_0^i, 1)$ and $x_0^i \in \{ 0,1 \}$, for $i = 1, \dots, n-1$, and $\vec{0} = (0, 0)$. Suppose that the network $\mathcal{N}$ is in the initial state $\vec{h}^0$ and that input $\vec{x}$ is presented to $\mathcal{N}$ step by step. The dynamics given by Equations~(\ref{rnn:eq1}) and (\ref{rnn:eq2}) ensures that that $\mathcal{N}$ will generate the sequences of states and outputs
\begin{eqnarray*}
    \vec{h} & = & \vec{h}^0\vec{h}^1\vec{h}^2 \cdots \in \left( [0, 1]^K \right)^\omega \\
    \vec{y} & = & \vec{y}^1\vec{y}^2\vec{y}^3 \cdots \in \left( \mathbb{B}^2 \right)^\omega
\end{eqnarray*}
step by step, where $\vec{y} := \mathcal{N}(\vec{x})$ is the {\it output} of $\mathcal{N}$ associated with input $\vec{x}$.

Now, let $w = w_0w_1 \cdots w_{n-1} \in \Sigma^*$ be some finite word of length $n$, let $\tau \in \mathbb{N}$ be some integer, and let $f : \mathbb{N} \rightarrow \mathbb{N}$ be some non-decreasing function. The word $w \in \Sigma^*$ can naturally be associated with the input 
$$
\vec{w} = \vec{w}^0\vec{w}^1 \cdots \vec{w}^{n-1} \vec{0}^\omega 
\in \left( \mathbb{B}^2 \right)^\omega
$$
defined by $\vec{w}^i = (x^i_0, x^i_1) = (w_i, 1)$, for $i = 1, \dots, n-1$. The input pattern is thus given by
\arraycolsep=2pt
$$
\begin{array}{c c c c c c c c c c c c c}
    x^0_0 & x^1_0 & x^2_0 & \cdots & = &
    w_0 & w_1 & \cdots & w_{n-1} & 0 & 0 & 0 & \cdots \\
    x^0_1 & x^1_1 & x^2_1& \cdots & = & 
    1 & 1 & \cdots & 1 & 0 & 0 & 0 & \cdots
\end{array}
$$
where the validation bits indicate whether an input is actively being processed or not, and the corresponding data bits represent the successive values of the input (see Figure~\ref{fig_RNN}). The word $w$ is said to be {\it accepted} or {\it rejected by $\mathcal{N}$ in time $\tau$} if the output 
$$
\mathcal{N}(\vec{w}) = \vec{y} = \vec{y}^1\vec{y}^2 \cdots \vec{y}^{\tau} \vec{0}^\omega \in \left( \mathbb{B}^2 \right)^\omega
$$
is such that $\vec{y^{i}} = (0, 0)$, for $i = 1, \dots, \tau-1$, and $\vec{y}^{\tau} = (1, 1)$ or $\vec{y}^{\tau} = (0, 1)$, respectively. The output pattern is thus given by
\arraycolsep=2pt
$$
\begin{array}{c c c c c c c c c c c c}
    y^0_0 & y^1_0 & y^2_0 & \cdots & = &
    0 & 0 & \cdots & y^{\tau}_1 & 0 & 0 & \cdots \\
    y^0_1 & y^1_1 & y^2_1 & \cdots & = & 
    0 & 0 & \cdots & 1 & 0 & 0 & \cdots
\end{array}
$$
where $y^{\tau}_1 = 1$ or $y^{\tau}_1 = 0$, respectively. 

In addition, the word $w$ is said to be {\it accepted} or {\it rejected by $\mathcal{N}$ in time $f$} if it is accepted or rejected in time $\tau \leq f(n)$, respectively.\footnote{The choice of the letter $f$ (instead of $t$) for referring to a computation time is deliberate, since the computation time of the networks will later be linked to the advice length of the Turing machines.} 
A language $L \subseteq \Sigma^*$ is {\it decided by $\mathcal{N}$ in time $f$} if for every word $w \in \Sigma^*$, 
\begin{align*}
    w \in L & \text{ implies that } \vec{w} \text{ is accepted by $\mathcal{N}$ in time $f$ and} \\
w \not \in L & \text{ implies that } \vec{w} \text{ is rejected by $\mathcal{N}$ in time $f$}.
\end{align*}
A language $L \subseteq \Sigma^*$ is {\it decided by $\mathcal{N}$ in polynomial time} if there exists a polynomial $f$ such that $L$ is decided by $\mathcal{N}$ in time $f$. If it exists, the {\it language decided by $\mathcal{N}$ in time $f$} is denoted by $L_f(\mathcal{N})$. Besides, a network $\mathcal{N}$ is said to be a {\it decider} if any finite word is eventually accepted or rejected by it. In this case, 
the {\it language decided by $\mathcal{N}$} is unique and is denoted by $L(\mathcal{N})$. We will assume that all the networks that we consider are deciders.
s

Recurrent neural networks with rational weights have been shown to be computationally equivalent to Turing machines~\cite{SiegelmannSontag95}.

\begin{theorem}
\label{RRNN_thm1}
Let $L \subseteq \Sigma^*$ be some language. The following conditions are equivalent:
\begin{enumerate}[label=(\roman*),itemsep=0pt]
    \item $L$ is decidable by some TM; \label{RRNN_thm1_c2}
    \item $L$ is decidable by some RNN. \label{RRNN_thm1_c3}
\end{enumerate}
\end{theorem}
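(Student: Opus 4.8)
The plan is to prove the two implications separately, both by direct simulation, exploiting the fact that the RNN model here has rational weights and linear-sigmoid activations, so the well-known Siegelmann--Sontag construction applies essentially verbatim, with only the input-output encoding conventions of Section~\ref{sec:RNNs} needing care.

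\textbf{From RNN to TM.} First I would show \ref{RRNN_thm1_c3}$\Rightarrow$\ref{RRNN_thm1_c2}. Given a decider RNN $\mathcal{N}$ with rational weights, a TM can simulate it step by step: the state $\vec{h}^t \in \mathbb{Q}^K$ is stored exactly as a vector of rationals (numerators and denominators in binary), and Equations~(\ref{rnn:eq1})--(\ref{rnn:eq2}) involve only finitely many rational multiplications, additions, and the piecewise-linear clippings $\sigma$ and $\theta$, all of which are computable exactly in finite time. On input $w \in \Sigma^n$, the TM writes down the input pattern $\vec{w}$ (validation bits $1$ for the first $n$ steps, $0$ afterwards, data bits $w_0,\dots,w_{n-1}$ then $0$), iterates the update equations, and halts accepting or rejecting as soon as the validation output $y_1^t$ first equals $1$, reading off $y_0^t$ for the decision. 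Since $\mathcal{N}$ is a decider, this halts on every input, so $L(\mathcal{N})$ is decidable. One subtlety worth a remark is bounding the bit-size of the rationals $h_i^t$ during the computation; but since the denominators are products of the fixed denominators appearing in the weight matrices, their growth is controlled and each step costs only polynomial time in $t$ and the input size.

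\textbf{From TM to RNN.} The converse \ref{RRNN_thm1_c2}$\Rightarrow$\ref{RRNN_thm1_c3} is the substantive direction and is exactly the content of the Siegelmann--Sontag theorem~\cite{SiegelmannSontag95}. Given a TM $\mathcal{M}$ deciding $L$, I would build an RNN that encodes each tape content as a rational number in a Cantor-like set via a base-$4$ (or base-$9$) encoding of the tape symbols, so that a finite stack of such numbers represents the full configuration of $\mathcal{M}$; the head moves and symbol rewrites then become affine operations composed with $\sigma$, implementable by a fixed finite reservoir of linear-sigmoid neurons. The finite control of $\mathcal{M}$ is simulated by a small Boolean sub-network. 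The two input cells are used exactly as prescribed: the validation cell being $1$ signals that a data bit is currently being fed, so the network first reads $w$ bit by bit into an encoding neuron, and then, once the validation input drops to $0$, it runs the simulation of $\mathcal{M}$ on that encoded input; when $\mathcal{M}$ halts, the network raises $y_1$ to $1$ and sets $y_0$ to the accept/reject bit. I would cite~\cite{SiegelmannSontag95} for the core construction and restrict the new work to checking that it conforms to Definition~\ref{RNN:def} and to the acceptance convention (in particular that the output is of the required form $\vec{y}^1 \cdots \vec{y}^{\tau}\vec{0}^\omega$ with $\vec{y}^i = (0,0)$ before the halting step).

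\textbf{Main obstacle.} The genuinely delicate part is the TM-to-RNN direction, specifically the Cantor-set encoding of unbounded tape contents into bounded rationals and the verification that reading, writing, and shifting can all be done with the clipped-linear $\sigma$ without loss of information (e.g.\ avoiding the ambiguous binary-expansion points, which is why base $4$ rather than base $2$ is used, mirroring the role of $\delta_4$ in Section~\ref{sec:prelim}). Since this is classical, I expect the proof in the paper to mostly defer to~\cite{SiegelmannSontag95} and merely reconcile conventions; the forward direction and the decider/halting bookkeeping are routine. I would therefore present the proof compactly, emphasizing the encoding idea and the input-output protocol, and citing the literature for the detailed neuron-level simulation.
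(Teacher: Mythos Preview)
Your proposal is correct and follows the same overall strategy as the paper: both directions are by direct simulation, the RNN-to-TM direction is dismissed as a routine exact simulation of the rational-weighted dynamics, and the TM-to-RNN direction defers to the Siegelmann--Sontag construction with a base-$4$ Cantor encoding.

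The one organisational difference worth noting is in the TM-to-RNN direction. You propose to encode \emph{tape} contents directly and implement head moves and symbol rewrites as affine-plus-$\sigma$ operations. The paper instead inserts the classical intermediate step: it first replaces the TM by an equivalent $p$-stack machine ($p\geq 2$), then encodes each \emph{stack} content via $\delta_4$, and implements only the stack primitives $\mathrm{top}$, $\mathrm{push_0}$, $\mathrm{push_1}$, $\mathrm{pop}$, $\mathrm{empty}$ as explicit one-line $\sigma$-formulas. This buys a cleaner neuron-level construction, since stack operations touch only the top symbol and translate to single affine maps, whereas a direct tape simulation must handle bidirectional head motion and in-place rewriting, which is doable but messier to express with the clipped-linear $\sigma$. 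Your version is not wrong, but if you want to match the paper (and the original~\cite{SiegelmannSontag95}) you should route through $p$-stack machines; this also makes the ``main obstacle'' you flag essentially disappear, because the five stack operations are each a single $\sigma$ of an affine form and no further verification is needed.
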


\begin{proof}{(sketch)}
\ref{RRNN_thm1_c3} $\rightarrow$ \ref{RRNN_thm1_c2}: 
The dynamics of any RNN $\mathcal{N}$ is governed by Equations (\ref{rnn:eq1}) and (\ref{rnn:eq2}), which involves only rational weights, and thus can clearly be simulated by some Turing machine $\mathcal{M}$. 

\ref{RRNN_thm1_c2} $\rightarrow$ \ref{RRNN_thm1_c3}: We provide a sketch of the original proof of this result~\cite{SiegelmannSontag95}.  This proof is based on the fact that any finite (and infinite) binary word can be encoded into and the activation value of a neuron, and decoded from this activation value bit by bit. This idea will be reused in some forthcoming proofs. First of all, recall that any TM $\mathcal{M}$ is computationally equivalently to, and can be simulated in real time by, some $p$-stack machine $\mathcal{S}$ with $p \geq 2$. We thus show that any $p$-stack machine $\mathcal{S}$ can be simulated by some RNN $\mathcal{N}$. Towards this purpose, we encode every stack content
$$w = w_0 \cdots w_{n-1} \in \{0,1\}^*$$
as the rational number
$$q_w = \delta_4(w) = \sum_{i=0}^{n-1} \frac{2 \cdot w(i) + 1}{4^{i+1}} \in [0,1].$$
For instance, $w = 1110$ is encoded into $q_w = \frac{3}{4} + \frac{3}{16} + \frac{3}{64} + \frac{1}{256}$. With this base-4 encoding, the required stack operations can be performed by simple functions involving the sigmoid-linear function $\sigma$, as described below:
\begin{itemize}[itemsep=0pt]
\item Reading the top of the stack: $\mathrm{top}(q_w) = \sigma(4q_w - 2)$
\item Pushing $0$ into the stack: $\mathrm{push_0}(q_w) = \sigma(\frac{1}{4}q_w + \frac{1}{4})$
\item Pushing $1$ into the stack: $\mathrm{push_1}(q_w) = \sigma(\frac{1}{4}q_w + \frac{3}{4})$
\item Popping the stack: $\mathrm{pop}(q_w) = \sigma(4q_w - (2 \mathrm{top}(q_w) + 1))$
\item Emptiness of the stack: $\mathrm{empty}(q_w) = \sigma(4q_w)$
\end{itemize}
Hence, the content $w$ of each stack can be encoded into the rational activation value $q_w$ of a stack neuron, and the stack operations (reading the top, pushing $0$ or $1$, popping and testing the emptiness) can be performed by simple neural circuits implementing the functions described above.

Based on these considerations, we can design an RNN $\mathcal{N}$ which correctly simulates the $p$-stack machine $\mathcal{S}$. The network $\mathcal{N}$ contains $3$ neurons per stack: one for storing the encoded content of the stack, one for reading the top element of the stack, and one for storing the answer of the emptiness test of the stack. Moreover, $\mathcal{N}$ contains two pools of neurons implementing the computational states and transition function of $\mathcal{S}$, respectively. For any computational state, input bit, and contents of the stacks, $\mathcal{N}$ computes the next computational state and updates the stacks' contents in accordance with the transition function of $\mathcal{S}$. In this way, the network $\mathcal{N}$ simulates the behavior of the $p$-stack machine $\mathcal{S}$ correctly. The network $\mathcal{N}$ is illustrated in Figure \ref{RNN_proof_fig}. 
\end{proof}

\begin{figure}
\begin{center}
\includegraphics[width=10cm]{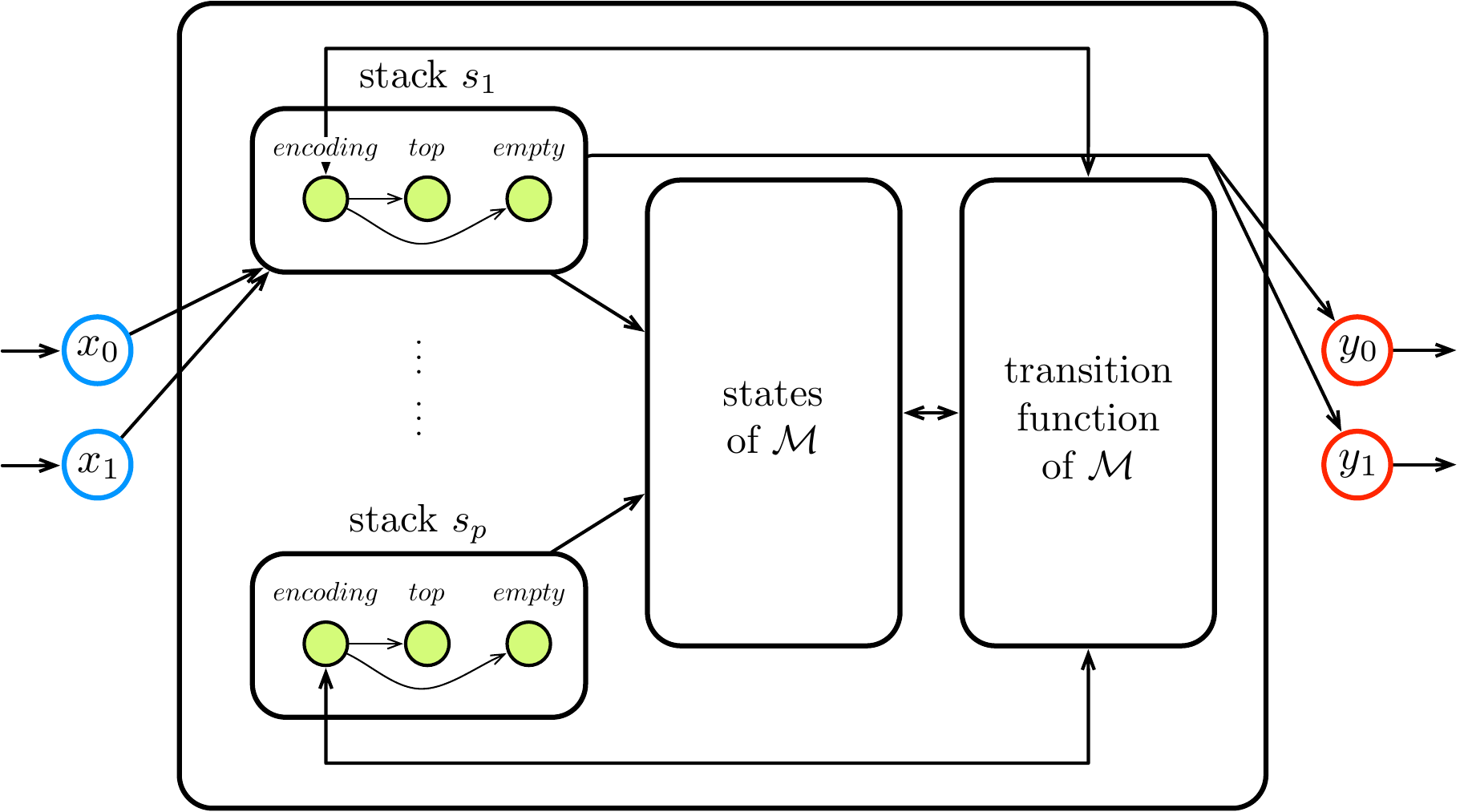}
\caption{Construction of an RNN that simulates a $p$-stack machine.}
\label{RNN_proof_fig}
\end{center}
\end{figure}

It can be noticed that the simulation process described in the proof of Theorem~\ref{RRNN_thm1} is performed in real time. More precisely, if a language $L \subseteq \Sigma^*$ is decided by some TM in time $f(n)$, then $L$ is decided by some RNN in time $f(n) + O(n)$. Hence, when restricted to polynomial time of computation, RNNs decide the complexity class $\mathbf{P}$.

\begin{corollary}
\label{RRNN_thm2}
Let $L \subseteq \Sigma^*$ be some language. The following conditions are equivalent:
\begin{enumerate}[label=(\roman*),itemsep=0pt]
    \item $L \in \mathbf{P}$; \label{RRNN_thm2_c1}
    \item $L$ is decidable by some RNN in polynomial time. \label{RRNN_thm2_c3}
\end{enumerate}
\end{corollary}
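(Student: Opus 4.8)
The plan is to prove the corollary as an immediate consequence of Theorem~\ref{RRNN_thm1} together with the real-time (or linear-time overhead) remark that precedes the statement. The key observation is that both directions of the equivalence preserve polynomial time bounds, so the machine-level equivalence of TMs and RNNs upgrades directly to an equivalence of the corresponding polynomial-time classes.

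First I would prove \ref{RRNN_thm2_c1} $\rightarrow$ \ref{RRNN_thm2_c3}. Assume $L \in \mathbf{P}$, so $L$ is decided by some TM $\mathcal{M}$ in time $f(n) = O(n^k)$ for some $k$. Applying the construction in the direction \ref{RRNN_thm1_c2} $\rightarrow$ \ref{RRNN_thm1_c3} of Theorem~\ref{RRNN_thm1} (via the $p$-stack machine simulation), one obtains an RNN $\mathcal{N}$ that simulates $\mathcal{M}$ in real time. By the remark following the proof of Theorem~\ref{RRNN_thm1}, $\mathcal{N}$ decides $L$ in time $f(n) + O(n)$, which is still polynomial. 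One must also check that the input/output encoding conventions of Section~\ref{sec:RNNs} are respected: the simulating network should assert its validation output cell exactly when $\mathcal{M}$ halts, and its data output cell according to accept/reject; since $\mathcal{M}$ is a decider every word is eventually accepted or rejected, so $\mathcal{N}$ is a decider as well. Hence $L$ is decidable by an RNN in polynomial time.

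For the converse \ref{RRNN_thm2_c3} $\rightarrow$ \ref{RRNN_thm2_c1}, assume $L$ is decided by some RNN $\mathcal{N}$ in time $f$ for a polynomial $f$. By the direction \ref{RRNN_thm1_c3} $\rightarrow$ \ref{RRNN_thm1_c2} of Theorem~\ref{RRNN_thm1}, the dynamics of $\mathcal{N}$ given by Equations~(\ref{rnn:eq1}) and~(\ref{rnn:eq2}) involve only finitely many fixed rational weights, so a TM $\mathcal{M}$ can simulate one step of $\mathcal{N}$ in time polynomial in the bit-sizes of the rationals involved; since the weights are constants of $\mathcal{N}$ and each activation value stays in $[0,1]$ with controlled denominators, each step costs time polynomial in the running time so far, and the $f(n)$ steps of $\mathcal{N}$ are simulated by $\mathcal{M}$ in time polynomial in $n$. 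Reading off the output cells at the halting time then yields the decision, so $L \in \mathbf{P}$.

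The main obstacle — really the only point requiring care rather than routine bookkeeping — is verifying in the \ref{RRNN_thm2_c3} $\rightarrow$ \ref{RRNN_thm2_c1} direction that the TM simulation of an RNN incurs only polynomial overhead. This is not entirely trivial because, a priori, iterating the affine-sigmoid map could blow up the denominators of the activation values. The resolution is that in the stack-encoding regime the activations are base-$4$ encodings $\delta_4(w)$ of stack words $w$ whose length is bounded by the running time, so the bit-size of each activation is linear in the number of steps, and each update (a fixed rational affine map followed by $\sigma$) is computable in polynomial time in that bit-size; summing over $f(n)$ steps keeps everything polynomial. I would state this as the crux and otherwise keep the proof short, citing Theorem~\ref{RRNN_thm1} and the real-time remark for everything else.
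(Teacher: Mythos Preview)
Your proposal is correct and follows the same route as the paper, which in fact gives no separate proof of the corollary at all: the paper simply states it as an immediate consequence of Theorem~\ref{RRNN_thm1} together with the real-time remark that precedes the corollary. Your two directions are exactly the intended ones.

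One small confusion worth flagging in your last paragraph: the ``stack-encoding regime'' justification is misplaced. The base-$4$ encodings $\delta_4(w)$ arise in the \emph{specific} RNN built to simulate a $p$-stack machine in the direction \ref{RRNN_thm2_c1}$\rightarrow$\ref{RRNN_thm2_c3}; they say nothing about an \emph{arbitrary} rational-weighted RNN $\mathcal{N}$ in the direction \ref{RRNN_thm2_c3}$\rightarrow$\ref{RRNN_thm2_c1}. The correct argument there is the one you already sketched just before: with finitely many fixed rational weights, each application of Equations~(\ref{rnn:eq1})--(\ref{rnn:eq2}) increases the bit-size of the activation values by at most an additive constant, so after $t$ steps the activations are rationals representable with $O(t)$ bits, and each step is simulable in time polynomial in $t$. (This is precisely the bound the paper later invokes explicitly in the proof of Proposition~\ref{ANN_prop}, where simulating $f(n)$ steps of a truncated network costs $O(f(n)^3)$.) Drop the stack-encoding sentence and keep the controlled-denominators argument, and your write-up is clean.
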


\section{Analog, Evolving and Stochastic Recurrent Neural Networks}
\label{sec:AES-NNs}

We now introduce analog, evolving and stochastic recurrent neural networks, which are all variants of the RNN model. In polynomial time, these models capture the complexity classes $\mathbf{P/poly}$, $\mathbf{P/poly}$ and $\mathbf{BPP/log^*}$, respectively, which all strictly contain the class $\mathbf{P}$ and include non-recursive languages. According to these considerations, these augmented models have been qualified as {\it super-Turing}. For each model, a tight characterization in terms of Turing machines with specific kinds of advice is provided.

\subsection{Analog networks}

An {\it analog recurrent neural network (ANN)} is an RNN as defined in Definition~\ref{RNN:def}, except that the weight matrices are real instead of rational~\cite{SiegelmannSontag94}. Formally, an ANN is an RNN
$$
\mathcal{N} = \left( \vec{x}, \vec{h}, \vec{y}, \vec{W_{in}},\vec{W_{res}}, \vec{W_{out}}, \vec{h}^0 \right)
$$
such that
$$
\vec{W_{in}} \in \mathbb{R}^{K \times (2+1)} \text{,~~} \vec{W_{res}} \in \mathbb{R}^{K \times K} \text{~~and~~} \vec{W_{out}} \in \mathbb{R}^{2 \times K}.
$$
The definitions of acceptance and rejection of words as well as of decision of languages is the same as for RNNs.

It can been shown that any ANN $\mathcal{N}$ containing the irrational weights $r_1, \dots, r_k \in \mathbb{R} \setminus \mathbb{Q}$ is computationally equivalent to some ANN $\mathcal{N}'$ using only a single irrational weight $r \in \mathbb{R} \setminus \mathbb{Q}$ such that $r \in \Delta \subseteq [0, 1]$ and $r$ is the bias $w_{02}$ of the hidden cell $h_0$~\cite{SiegelmannSontag94}. 
Hence, without loss of generality, we restrict our attention to such networks. Let $r \in \Delta$ and $R \subseteq \Delta$.
\begin{itemize}[itemsep=0pt]
    \item $\text{ANN}[r]$ denotes the class of ANNs such that all weights but $w_{02}$ are rational and $w_{02} = r$.
    \item $\text{ANN}[R]$ denotes the class of ANNs such that all weights but $w_{02}$ are rational and $w_{02} \in R$.
\end{itemize}
In this definition, $r$ is allowed to be a rational number. In this case, an $\text{ANN}[r]$ is just a specific RNN.

In exponential time of computation, analog recurrent neural networks can decide any possible language. In fact, any language $L \subseteq \Sigma^*$ can be encoded into the infinite word $\bar r_L \in \Sigma^\omega$, where the $i$-th bit of $\bar r_L$ equals $1$ iff the $i$-th word of $\Sigma^*$ belongs to $L$, according to some enumeration of $\Sigma^*$. Hence, we can build some ANN containing the real weight $r_L = \delta_4(\bar r_L)$, which, for every input $w$, decides whether $w \in L$ or $w \not \in L$ by decoding $\bar r_L$ and reading the suitable bit. In polynomial time of computation, however, the ANNs decide the complexity class $\mathbf{P/poly}$, and hence, are computationally equivalent to Turing machines with polynomial advice (TM/poly(A)). The following result holds~\cite{SiegelmannSontag94}:

\begin{theorem}
\label{ANN_thm}
Let $L \subseteq \Sigma^*$ be some language. The following conditions are equivalent:
\begin{enumerate}[label=(\roman*),itemsep=0pt]
    \item $L \in \mathbf{P/poly}$; \label{ANN_thm_c1}
    \item $L$ is decidable by some $\text{ANN}$ in polynomial time. \label{ANN_thm_c3}
\end{enumerate}
\end{theorem}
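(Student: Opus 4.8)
The plan is to prove the two implications separately, establishing a mutual simulation between ANNs running in polynomial time and Turing machines with polynomial advice (which decide $\mathbf{P/poly}$).

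For the direction \ref{ANN_thm_c3} $\rightarrow$ \ref{ANN_thm_c1}, I would take an ANN $\mathcal{N}$ deciding $L$ in time $f(n) = n^c$ and simulate it by a TM/poly(A). By the reduction discussed after Definition of the ANN classes, we may assume $\mathcal{N}$ has a single real weight $r = w_{02} \in \Delta$ and all other weights rational. On an input $w$ of length $n$, the computation of $\mathcal{N}$ lasts at most $f(n)$ steps, and during these steps the dynamics given by Equations~(\ref{rnn:eq1}) and~(\ref{rnn:eq2}) only ever ``uses'' finitely many bits of the base-4 expansion $\bar r$ of $r$: each application of $\sigma$ and each linear combination can amplify the needed precision by only a constant factor, so after $f(n)$ steps it suffices to know a prefix of $\bar r$ of length polynomial in $n$, say $\beta(n) = p(n)$ for a suitable polynomial $p$. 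I would then define the advice function $\alpha(n)$ to be (an encoding of) this prefix $\bar r[0 : p(n)]$; this advice is polynomial and can be taken prefix since longer prefixes of a fixed $\bar r$ are extensions of shorter ones. The TM with advice then simply carries out the rational arithmetic of Equations~(\ref{rnn:eq1})--(\ref{rnn:eq2}) using truncated rationals of polynomial bit-length, which is polynomial-time; a careful truncation-error analysis shows the Boolean output cells are computed correctly. Hence $L \in \mathbf{P/poly^*} = \mathbf{P/poly}$.

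For the direction \ref{ANN_thm_c1} $\rightarrow$ \ref{ANN_thm_c3}, I would start from a TM $\mathcal{M}$ with prefix polynomial advice $\alpha$ deciding $L$ in polynomial time, and let $a = \lim_{n\to\infty}\alpha(n) \in \Sigma^\omega$ be the associated infinite advice word (well-defined by the prefix, unbounded assumptions in Section~\ref{sec:prelim}). I would encode $a$ into a single real weight $r = \delta_4(a) \in \Delta$. The construction then has two ingredients. First, using the ``bit-extraction'' technique from the proof of Theorem~\ref{RRNN_thm1} -- the $\mathrm{top}$, $\mathrm{pop}$ operations on a base-4-encoded stified content implemented by $\sigma$-circuits -- the network can, given $n$, read off the prefix $\alpha(n)$ of $a$ bit by bit from the activation value $r$ and write it onto a simulated advice tape; since $|\alpha(n)| = p(n)$ for a polynomial $p$, this retrieval takes polynomial time. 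Second, by Theorem~\ref{RRNN_thm1} (and the real-time remark following it), the rest of $\mathcal{M}$'s polynomial-time computation on $\langle w, \alpha(n)\rangle$ can be simulated by a rational-weighted subnetwork in polynomial time. Composing these, we obtain an $\text{ANN}[r]$ that decides $L$ in polynomial time. One must use the consistency property of the advice to argue that, even though the network computes $n=|w|$ only implicitly while reading the input, it retrieves a prefix of $a$ long enough to match the behaviour of $\mathcal{M}$ on $\langle w,\alpha(n)\rangle$.

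The main obstacle is the precision/error analysis in the first direction: one must verify that truncating $r$ to polynomially many bits, and performing all intermediate arithmetic at polynomial precision, never corrupts the finitely many Boolean decisions made by the output cells within the $f(n)$-step window -- i.e., that the linear-sigmoid dynamics does not amplify rounding errors beyond control over a polynomial number of steps. Getting the right polynomial bound $p(n)$ on the advice length, and checking that the $\theta$-thresholded outputs are robust to the accumulated error, is the delicate part; the rest is a fairly direct combination of the stack-simulation machinery of Theorem~\ref{RRNN_thm1} with bitwise decoding of the real weight.
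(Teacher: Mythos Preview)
Your proposal is correct and follows essentially the same route the paper takes: the paper does not prove Theorem~\ref{ANN_thm} directly (it is cited from~\cite{SiegelmannSontag94}), but the machinery it develops in Lemma~\ref{lemma_truncation_analog} and Proposition~\ref{ANN_prop} is exactly your two-direction simulation---truncation of $r$ to $O(f(n))$ bits plus rational simulation for \ref{ANN_thm_c3}$\to$\ref{ANN_thm_c1}, and bit-by-bit extraction of $\bar r=\delta_4^{-1}(r)$ via the stack operations of Theorem~\ref{RRNN_thm1} for \ref{ANN_thm_c1}$\to$\ref{ANN_thm_c3}. The ``main obstacle'' you flag (that the linear-sigmoid dynamics amplifies precision by only a constant factor per step, so $cf(n)$ bits suffice up to time $f(n)$) is precisely the content of Lemma~\ref{lemma_truncation_analog}, and your use of the consistency property to handle the case where the network reads past $\alpha(n)$ matches the argument in the proof of Proposition~\ref{ANN_prop}\ref{ANN_prop_c2}.
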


Given some $\text{ANN}$ $\mathcal{N}$ and some $q \in \mathbb{N}$, the {\it truncated network} $\mathcal{N} |_q$ is defined as the network $\mathcal{N}$ whose all weights and activation values are truncated after $q$ precision bits at each step of the computation. The following result shows that, up to time $q$, the network $\mathbb{N}$ can limit itself to $O(q)$ precision bits without affecting the result of its computation~\cite{SiegelmannSontag94}.

\begin{lemma}
\label{lemma_truncation_analog}
Let $\mathcal{N}$ be some $\text{ANN}$ computing in time $f : \mathbb{N} \rightarrow \mathbb{N}$. Then, there exists some constant $c > 0$ such that, for every $n \in \mathbb{N}$ and every input $w \in \Sigma^n$, the networks $\mathcal{N}$ and $\mathcal{N} |_{c f(n)}$ produce the same outputs up to time $f(n)$.
\end{lemma}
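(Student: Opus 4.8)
The statement is a standard "bounded precision suffices" lemma for analog recurrent networks. The plan is to track how truncation errors propagate through the network dynamics over $f(n)$ time steps and show that, with a precision budget linear in $f(n)$, the accumulated error never crosses the decision thresholds used in the hard-threshold output cells, so the Boolean outputs $\vec{y}^t$ are unaffected up to time $f(n)$.

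\medskip

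\noindent\textbf{Key steps.} First I would set up the error recursion. Let $\vec{h}^t$ be the exact state of $\mathcal{N}$ and $\tilde{\vec{h}}^t$ the state of the truncated network $\mathcal{N}|_q$, and write $\varepsilon^t = \|\vec{h}^t - \tilde{\vec{h}}^t\|_\infty$. Truncating the (fixed, real) weight matrices and all activation values after $q$ bits introduces a per-coordinate error of at most $2^{-q}$ in each stored quantity. Since $\sigma$ is $1$-Lipschitz and the update $\vec{h}^{t+1} = \sigma(\vec{W_{in}}(\vec{x}^t:1) + \vec{W_{res}}\vec{h}^t)$ is affine before applying $\sigma$, one gets a recursion of the form $\varepsilon^{t+1} \le \|\vec{W_{res}}\|_\infty\,\varepsilon^t + c_0 2^{-q}$ for a constant $c_0$ depending only on $K$ and the magnitudes of the (finitely many) weights of $\mathcal{N}$ — note these constants are independent of $n$. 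Let $M = \max(1, \|\vec{W_{res}}\|_\infty)$. Unrolling for $t \le f(n)$ steps and using $\varepsilon^0 = 0$ (or $\le 2^{-q}$), the total error is bounded by roughly $c_0 2^{-q} \cdot t \cdot M^{t} \le c_1 2^{-q} M^{f(n)} = c_1 2^{-q + f(n)\log_2 M}$.

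\medskip

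\noindent Second, I would choose the precision. Picking $q = c f(n)$ with $c$ large enough that $c > \log_2 M$ (plus a small additive slack to absorb the constants $c_1$ and the $1$-bit initial error), the bound becomes $\varepsilon^t \le c_1 2^{-(c - \log_2 M) f(n)} \to 0$ geometrically in $f(n)$; in particular it can be forced below, say, $\tfrac14$ for all $n$. Third — and this is the only genuinely delicate point — I would argue that this small state error does not change the Boolean outputs. The output cell applies $\theta$ to $\vec{W_{out}}\vec{h}^{t+1}$, and $\theta$ is a hard threshold at $1$ and $0$, so one must know that for a \emph{decider} network the pre-activation value $\vec{W_{out}}\vec{h}^{t+1}$ stays bounded away from the thresholds by some margin. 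This is guaranteed because a decider produces output patterns that are exactly $(0,0)$ until the decision step and then $(0,1)$ or $(1,1)$: the validation cell $y_1$ outputs $1$ exactly when its pre-activation is $\ge 1$ and $0$ when it is $<0$ — by the architecture there is a fixed gap (one can also invoke that the relevant weights are rational with bounded denominators, so the pre-activation, being a rational combination, is bounded away from $1$ by an amount one can make smaller than the error budget by adjusting $c$). Choosing $q = cf(n)$ so that $\varepsilon^t$ is below this margin for every $t \le f(n)$, the truncated network and $\mathcal{N}$ produce identical outputs $\vec{y}^1,\dots,\vec{y}^{f(n)}$.

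\medskip

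\noindent\textbf{Main obstacle.} The routine part is the geometric error estimate; the part requiring care is justifying the threshold margin — i.e. that the exact pre-activations feeding the $\theta$ gates are bounded away from $\{0,1\}$ uniformly in the input length, so that a fixed polynomial-slope-free error bound below that margin is enough. Once the margin is pinned down (via the decider assumption together with the boundedness of the rational output weights of $\mathcal{N}$), the choice $q = cf(n)$ follows immediately, and one concludes that $\mathcal{N}$ and $\mathcal{N}|_{cf(n)}$ agree on all outputs up to time $f(n)$.
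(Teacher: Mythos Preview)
The paper does not prove this lemma; it is stated with a citation to Siegelmann--Sontag and used as a black box, so there is no in-paper proof to compare against.

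Your approach --- propagating truncation error through the $1$-Lipschitz dynamics and bounding the cumulative error geometrically in $f(n)$ --- is the standard argument, and the recursion you write is correct. The point you single out as the main obstacle is also the right one, but your proposed resolution does not close it. In a general ANN the output weights $\vec{W_{out}}$ are real, so the ``rational with bounded denominators'' argument is unavailable; and even in the restricted class $\text{ANN}[r]$ where $\vec{W_{out}}$ is rational, the hidden states $\vec{h}^t$ are genuine reals (they depend on the irrational bias $r$), so the pre-activation $\vec{W_{out}}\vec{h}^{t+1}$ can still land arbitrarily close to the threshold. The decider assumption by itself does not give a margin that is \emph{uniform in $n$}: for each fixed $n$ there are finitely many inputs and time steps, hence some positive minimum gap, but nothing in your argument prevents that gap from shrinking faster than your error bound $2^{-(c-\log_2 M)f(n)}$ as $n$ grows, so you cannot choose $c$ independently of $n$ on this basis alone. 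A complete proof needs an extra ingredient --- typically a preliminary normalisation of the network so that the cells feeding the output threshold are themselves Boolean-valued (hence the pre-activation ranges over a fixed finite set independent of $n$), or an explicit robustness hypothesis on the output layer. With that step added, your outline goes through.
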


The computational relationship between analog neural networks and Turing machines with advice can actually be strengthened. Towards this purpose, for any non-decreasing function $f : \mathbb{N} \rightarrow \mathbb{N}$ and any class of such functions $\mathcal{F}$, we define the following classes of languages decided by analog neural networks in time $f$ and $\mathcal{F}$, respectively:
\begin{eqnarray*}
\mathbf{ANN} \left[ r, f \right] & = & \left\{ L \subseteq \Sigma^\omega : L = L_f(\mathcal{N}) \text{ for some } \mathcal{N} \in \text{ANN}[r] \right\} \\
\mathbf{ANN} \left[ R, \mathcal{F} \right] & = & \bigcup_{r \in R} \bigcup_{f \in \mathcal{F}} \mathbf{ANN} \left[ r, f \right] .
\end{eqnarray*}

In addition, for any real $r \in \Delta$ and any function $f : \mathbb{N} \rightarrow \mathbb{N}$, the prefix advice $\alpha(\bar r, f) : \mathbb{N} \rightarrow \Sigma^*$ of length $f$ associated with $r$ is defined by 
$$
\alpha(\bar r,f)(n) = r_0  r_1 \cdots r_{f(n)-1}
$$
for all $n \in \mathbb{N}$. For any set of reals $R \subseteq \Delta$ and any class of functions $\mathcal{F} \subseteq \mathbb{N}^\mathbb{N}$, we naturally set
$$
\alpha(\bar R, \mathcal{F}) = \bigcup_{\bar r \in \bar R} \bigcup_{f \in \mathcal{F}} \left\{ \alpha(\bar r,f) \right\}
$$
Conversely, note that any prefix unbounded advice $\alpha : \mathbb{N} \rightarrow \Sigma^*$ is of the form $\alpha(\bar r, f)$, where $\bar r = \lim_{n \rightarrow \infty} \alpha(n)  \in \Sigma^\omega$ and $f : \mathbb{N} \rightarrow \mathbb{N}$ is defined by $f(n) = | \alpha(n) |$.

The following result clarifies the tight relationship between analog neural networks using real weights and Turing machines using related advices. Note that the real weights of the networks correspond precisely to the advice of the machines, and the computation time of the networks are related to the advice length of the machines.

\begin{proposition}
\label{ANN_prop}
Let $r \in \Delta$ be some real weight and $f : \mathbb{N} \rightarrow \mathbb{N}$ be some non-decreasing function. 
\begin{enumerate}[label=(\roman*),itemsep=0pt]
\item \label{ANN_prop_c1}
$\mathbf{ANN} \left[ r, f \right] \subseteq \mathbf{TMA} \left[ \alpha(\bar r, cf), O(f^3) \right]$, for some $c > 0$.
\item \label{ANN_prop_c2}
$\mathbf{TMA} \left[ \alpha(\bar r, f), f \right] \subseteq \mathbf{ANN} \left[ r,O(f) \right]$.
\end{enumerate}
\end{proposition}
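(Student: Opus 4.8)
The plan is to prove the two inclusions by mutual simulation, using Lemma~\ref{lemma_truncation_analog} for part~\ref{ANN_prop_c1} and the base-$4$ stack encoding from the proof of Theorem~\ref{RRNN_thm1} for part~\ref{ANN_prop_c2}.

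\textbf{Part~\ref{ANN_prop_c1}.} Let $L = L_f(\mathcal{N})$ with $\mathcal{N} \in \text{ANN}[r]$ computing in time $f$, and let $c$ be the constant furnished by Lemma~\ref{lemma_truncation_analog}, so that $\mathcal{N}$ and the truncated network $\mathcal{N}|_{cf(n)}$ agree on all outputs up to time $f(n)$ for every input of length $n$. I would exhibit the TM/A $\mathcal{M}$ that, on $w \in \Sigma^n$: queries $\alpha(\bar r, cf)(n) = r_0\cdots r_{cf(n)-1}$, from which it recovers $r$ to at least $cf(n)$ binary precision bits (each advice bit being a base-$4$ digit of $r$); then, using this value together with the fixed rationals hard-wired in its program, simulates $\mathcal{N}|_{cf(n)}$ step by step via Equations~(\ref{rnn:eq1}) and (\ref{rnn:eq2}), retruncating every activation to $cf(n)$ bits after each step, and halts (accepting or rejecting accordingly) as soon as the simulated validation output fires. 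By Lemma~\ref{lemma_truncation_analog} and $L = L_f(\mathcal{N})$ this occurs within $f(n)$ steps and yields the correct answer, so $\mathcal{M}$ need not compute $f$. Since the reservoir has a constant number $K$ of cells, one simulated step costs $O(1)$ arithmetic operations on $O(f(n))$-bit rationals, i.e.\ $O(f(n)^2)$ bit operations; over $\le f(n)$ steps this is $O(f(n)^3)$, with an extra $O(f(n))$ for reading the advice. Finally, $\alpha(\bar r, cf)$ is prefix by construction, and $\mathcal{M}$ satisfies the required consistency property because on an input of length $n$ it consults only the first $cf(n)$ advice bits, which any longer prefix of $\bar r$ extends without altering the decision.

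\textbf{Part~\ref{ANN_prop_c2}.} Let $\mathcal{M}$ decide $L$ in time $f$ with advice $\alpha(\bar r, f)$, so that on $w \in \Sigma^n$ it uses the prefix $r_0\cdots r_{f(n)-1}$ of $\bar r$. I would build $\mathcal{N}\in\text{ANN}[r]$ by viewing $\mathcal{M}$ as a $(p{+}2)$-stack machine in which two stacks realize its read-only advice tape, and applying the RNN-from-stack-machine construction of Theorem~\ref{RRNN_thm1}, with one twist: the stack cell representing the \emph{unread} portion of the advice tape is seeded, at the first computational step, with the value $r = \delta_4(\bar r) \in \Delta$ deposited through the bias $w_{02} = r$. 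That cell then holds the base-$4$ code of the whole infinite stream $\bar r$, and the $\mathrm{top}$/$\mathrm{pop}$ operations of Theorem~\ref{RRNN_thm1} reproduce the bits $r_0, r_1, r_2, \dots$ exactly as $\mathcal{M}$ advances its advice head; this cell is never required to be empty, since $\mathcal{M}$, running in time $f(n)$, reads only finitely many of its bits, and by the prefix-consistency of $\alpha(\bar r, f)$ supplying $\mathcal{M}$ with the infinite stream $\bar r$ in place of the finite advice $\alpha(n)$ does not change its decision. All remaining weights of $\mathcal{N}$ are rational, so $\mathcal{N}\in\text{ANN}[r]$; the stack-machine simulation is real time, hence $\mathcal{N}$ decides $L$ in time $O(f(n))$ (the unavoidable $O(n)$ for scanning the input being subsumed whenever $f$ is at least linear), and $\mathcal{N}$ is a decider because $\mathcal{M}$ is. Therefore $L \in \mathbf{ANN}[r, O(f)]$.

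\textbf{Main obstacle.} The arithmetic bookkeeping in~\ref{ANN_prop_c1} and the invocation of the stack-encoding machinery in~\ref{ANN_prop_c2} are routine once correctly set up. The delicate point is the interface in~\ref{ANN_prop_c2}: one must check that seeding a single cell with $r$ genuinely implements ``the advice tape prefilled with $\bar r$'', that this meshes with the conditional-update circuitry of Theorem~\ref{RRNN_thm1} (in particular that the simulated advice head never runs ahead of the decoded prefix and never needs an emptiness test), and that no reference to the a priori unknown quantity $f(n)$ creeps into the construction — which is precisely where the time bound on $\mathcal{M}$ and the prefix/consistency hypotheses on its advice are used.
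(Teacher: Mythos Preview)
Your proposal is correct and follows essentially the same approach as the paper's proof: for~\ref{ANN_prop_c1} you invoke Lemma~\ref{lemma_truncation_analog} and simulate the truncated network on a TM/A with advice $r_0\cdots r_{cf(n)-1}$, counting $O(f^2)$ bit operations per step; for~\ref{ANN_prop_c2} you seed a stack neuron with $r=\delta_4(\bar r)$ via the bias $w_{02}$ and use the $\mathrm{top}/\mathrm{pop}$ circuitry of Theorem~\ref{RRNN_thm1} to feed advice bits to the simulated machine, appealing to the consistency property when $\mathcal{M}$ would otherwise read blanks. The only minor presentational differences are that the paper keeps a separate ``source'' neuron $x_a$ distinct from the left/right tape neurons $x_l,x_r$ (whereas you fold $x_a$ into the unread-portion stack), and that you make explicit that $\mathcal{M}$ halts when the validation output fires rather than after a precomputed $f(n)$ steps---a helpful clarification, since $f$ need not be computable.
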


\begin{proof}
\ref{ANN_prop_c1} 
Let $L \in \mathbf{ANN} \left[ r, f \right]$. Then, there exists some $\text{ANN}[r]$ $\mathcal{N}$ such that $L_f(\mathcal{N}) = L$. By Lemma~\ref{lemma_truncation_analog}, there exists some constant $c > 0$ such that the network $\mathcal{N}$ and the truncated network $\mathcal{N} |_{c f(n)}$ produce the same outputs up to time step $f(n)$, for all $n \in \mathbb{N}$. 
Now, consider Procedure~\ref{algo1} below. In this procedure, all instructions except the query one (line 2) are recursive. Besides, the simulation of each step of $\mathcal{N} |_{c f(n)}$ involves a constant number of multiplications and additions of rational numbers, all representable by $c f(n)$ bits, and can thus be performed in time $O(f^2(n))$ (for the products). Consequently, the simulation of the $f(n)$ steps of $\mathcal{N} |_{c f(n)}$ can be performed in time $O(f^3(n))$. Hence, Procedure~\ref{algo1} can be simulated by some TM/A $\mathcal{M}$ using advice $\alpha(\bar r, cf)$ in time $O(f^3(n))$. In addition, Lemma~\ref{lemma_truncation_analog} ensures that $w$ is accepted by $\mathcal{M}$ iff $w$ is accepted by $\mathcal{N}$, for all $w \in \Sigma^*$. Hence, $L(\mathcal{M}) = L(\mathcal{N}) = L$, and therefore $L \in \mathbf{TMA} \left[ \alpha(\bar r, cf), O(f^3) \right]$.

\begin{algorithm}[h!]
\DontPrintSemicolon
\SetKwInOut{Input}{Inputs}
\smallskip
\KwInput{input $w \in \Sigma^n$}
{Query the advice $\alpha(\bar r, cf)(n) = r_0 r_1 \cdots r_{cf(n)-1}$}\;
\For{$t = 0, 1, \dots, f(n) - 1$}{
    {Simulate the truncated network $\mathcal{N} |_{c f(n)}$ which uses the rational approximation $\tilde r = \delta_4( r_0 r_1 \cdots r_{cf(n)-1} )$ of $r$ as its weight};
}
\Return Output of $\mathcal{N} |_{c f(n)}$ over $w$ at time step $f(n)$
\caption{}
\label{algo1}
\end{algorithm}

\ref{ANN_prop_c2} 
Let $L \in \mathbf{TMA} \left[ \alpha(\bar r, f), f \right]$. Then, there exists some TM/A $\mathcal{M}$ with advice $\alpha(\bar r, f)$ such that $L_f(\mathcal{M}) = L$. We show that $\mathcal{M}$ can be simulated by some analog neural network $\mathcal{N}$ with real weight $r$. The network $\mathcal{N}$ simulates the advice tape of $\mathcal{M}$ as described in the Proof of Theorem~\ref{RRNN_thm1}: the left and right contents of the tape are encoded and stored into two stack neurons $x_l$ and $x_r$, respectively, and the tape operations are simulated using appropriate neural circuits. On every input $w \in \Sigma^n$, the network $\mathcal{N}$ works as follows. First, $\mathcal{N}$ copies its real bias $r = \delta_4(r_0 r_1 r_2 \cdots)$ into some neuron $x_a$. Every time $\mathcal{M}$ reads some new advice bit $r_i$, then $\mathcal{N}$ first pops $r_i$ from its neuron $x_a$, which thus takes the updated activation value $\delta_4(r_{i+1} r_{i+2} \cdots)$, and next pushes $r_i$ into neuron $x_r$. This process is performed in constant time. At this point, neurons $x_l$ and $x_r$ contain the encoding of the bits $r_0 r_1 \cdots r_i$. Hence, $\mathcal{N}$ can simulates the recursive instructions of $\mathcal{M}$ in the usual way, in real time, until the next bit $r_{i+1}$ is read~\cite{SiegelmannSontag95}. Overall, $\mathcal{N}$ simulates the behavior of $\mathcal{M}$ in time $O(f(n))$. 

We now show that $\mathcal{M}$ and $\mathcal{N}$ output the same decision for input $w$. If $\mathcal{M}$ does not reach the end of its advice word $\alpha(n)$, the behaviors of $\mathcal{M}$ and $\mathcal{N}$ are identical, and so are their outputs. If at some point, $\mathcal{M}$ reaches the end of $\alpha(n)$ and reads successive blank symbols, then $\mathcal{N}$ continues to pop the successive bits $r_{|\alpha(n)|} r_{|\alpha(n)| + 1} \cdots$ from neuron $x_a$, to push them into neuron $x_r$, and to simulates the behavior of $\mathcal{M}$. In this case, $\mathcal{N}$ simulates the behavior of $\mathcal{M}$ working with some extension of the advice $\alpha(n)$, which by the consistency property of $\mathcal{M}$ (cf.~Section~\ref{sec:prelim}), produces the same output as if working with advice $\alpha(n)$. In this way, $w$ is accepted by $\mathcal{M}$ iff $w$ is accepted by $\mathcal{N}$, and thus $L(\mathcal{M}) = L(\mathcal{N})$. Therefore, $L \in \mathbf{ANN} \left[ r, O(f) \right]$.
\end{proof}

The following corollary shows that the class of languages decided in polynomial time by analog networks with real weights and by Turing machines with related advices are the same.

\begin{corollary}
\label{ANN_cor}
Let $r \in \Delta$ be some real weight and $R \subseteq \Delta$ be some set of real weights. 
\begin{enumerate}[label=(\roman*),itemsep=0pt]
\item \label{ANN_cor_c1}
$\mathbf{ANN} \left[ r, \mathrm{poly} \right] = \mathbf{TMA} \left[ \alpha(\bar r,\mathrm{poly}), \mathrm{poly} \right]$.
\item \label{ANN_cor_c2}
$\mathbf{ANN} \left[ R, \mathrm{poly} \right] = \mathbf{TMA} \left[ \alpha(\bar R,\mathrm{poly}), \mathrm{poly} \right]$.
\end{enumerate}
\end{corollary}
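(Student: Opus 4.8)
The plan is to read both equalities off Proposition~\ref{ANN_prop} by taking unions over all polynomial time bounds, using that $\mathrm{poly}$ is closed under multiplication by positive integer constants and under raising to a fixed power, together with the elementary monotonicity principle that \emph{enlarging a time bound, and replacing a prefix advice by a longer prefix of the same real, never change the language decided}. It is convenient to record this principle first: if $g \geq f$ and $g \geq t$ pointwise, then a TM/A with advice $\alpha(\bar r, g)$ can simulate a TM/A with advice $\alpha(\bar r, f)$ running in time $t$ — it computes $f(n)$ from $n$, ignores its surplus advice bits (the first $f(n)$ bits of $\alpha(\bar r, g)(n)$ being exactly $\alpha(\bar r, f)(n)$), and runs within time $t$ plus a polynomial truncation overhead; dually, an ANN deciding a language in time $h$ decides it in time $g$ for every $g \geq h$. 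Both facts are immediate from the definitions of Section~\ref{sec:prelim} and of $\mathbf{ANN}[r,f]$.

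First I would prove \ref{ANN_cor_c1}. For $\subseteq$, let $L \in \mathbf{ANN}[r,\mathrm{poly}]$, say $L \in \mathbf{ANN}[r,f]$ with $f \in \mathrm{poly}$. Proposition~\ref{ANN_prop}\ref{ANN_prop_c1} gives $c>0$ with $L \in \mathbf{TMA}[\alpha(\bar r, cf), O(f^3)]$. Picking $g \in \mathrm{poly}$ with integer coefficients that dominates $cf$, the time bound $O(f^3)$, and the polynomial truncation overhead, the monotonicity principle yields $L \in \mathbf{TMA}[\alpha(\bar r, g), g] \subseteq \mathbf{TMA}[\alpha(\bar r,\mathrm{poly}),\mathrm{poly}]$. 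For $\supseteq$, let $L \in \mathbf{TMA}[\alpha(\bar r,\mathrm{poly}),\mathrm{poly}]$, so $L \in \mathbf{TMA}[\alpha(\bar r, f), t]$ for some $f,t \in \mathrm{poly}$; choosing $g \in \mathrm{poly}$ dominating $f$, $t$ and the simulation overhead, we get $L \in \mathbf{TMA}[\alpha(\bar r, g), g]$, whence Proposition~\ref{ANN_prop}\ref{ANN_prop_c2} gives $L \in \mathbf{ANN}[r, O(g)] \subseteq \mathbf{ANN}[r,\mathrm{poly}]$, the last inclusion by the monotonicity principle for networks together with $O(g) \in O(\mathrm{poly})$.

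Then \ref{ANN_cor_c2} follows by taking unions over $r \in R$. Unfolding the definitions of $\mathbf{ANN}[R,\mathcal{F}]$ and of $\alpha(\bar R,\mathcal{F})$ gives $\mathbf{ANN}[R,\mathrm{poly}] = \bigcup_{r \in R}\mathbf{ANN}[r,\mathrm{poly}]$ and $\mathbf{TMA}[\alpha(\bar R,\mathrm{poly}),\mathrm{poly}] = \bigcup_{r \in R}\mathbf{TMA}[\alpha(\bar r,\mathrm{poly}),\mathrm{poly}]$, and these two right-hand sides agree term by term by \ref{ANN_cor_c1}.

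The proof is essentially bookkeeping, and the only delicate point is the mismatch between the roles played by advice length and by running time: Proposition~\ref{ANN_prop}\ref{ANN_prop_c2} is stated with both equal to a single $f$, whereas a generic element of $\mathbf{TMA}[\alpha(\bar r,\mathrm{poly}),\mathrm{poly}]$ carries two a priori different polynomials, and Proposition~\ref{ANN_prop}\ref{ANN_prop_c1} further inserts the constant $c$ and the cube. I expect the main (modest) obstacle to be verifying the monotonicity principle stated above — in particular that a TM/A handed a longer prefix of $r$ can faithfully recover the behaviour of one handed a shorter prefix, respecting the prefix and consistency conventions of Section~\ref{sec:prelim} — after which all these polynomial distortions are absorbed by the outer unions over $\mathrm{poly}$.
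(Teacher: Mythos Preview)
Your proposal is correct and follows essentially the same route as the paper: apply Proposition~\ref{ANN_prop} in each direction and absorb the constant $c$, the cube, and the mismatch between advice length and running time into the union over $\mathrm{poly}$. The only cosmetic difference is that where you spell out a monotonicity principle for passing to a common dominating polynomial $g$, the paper simply invokes the consistency property of TM/A from Section~\ref{sec:prelim} to assume without loss of generality that the advice-length polynomial and the time polynomial coincide; part~\ref{ANN_cor_c2} is handled identically by taking the union over $r\in R$.
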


\begin{proof}
\ref{ANN_cor_c1} Let $L \in \mathbf{ANN} \left[ r, \mathrm{poly} \right]$. Then, there exists $f \in \mathrm{poly}$ such that $L \in \mathbf{ANN} \left[ r, f \right]$.  By Proposition~\ref{ANN_prop}--\ref{ANN_prop_c1}, $L \in \mathbf{TMA} \left[ \alpha(\bar r, cf), O(f^3) \right]$, for some $c > 0$. Thus $L \in \mathbf{TMA} \left[ \alpha(\bar r, \mathrm{poly}), \mathrm{poly} \right]$. Conversely, let $L \in \mathbf{TMA} \left[ \alpha(\bar r, \mathrm{poly}), \mathrm{poly} \right]$. Then, there exist $f, f' \in \mathrm{poly}$ such that $L \in \mathbf{TMA} \left[ \alpha(\bar r, f'), f \right]$. By the consistency property of the TM/A, we can assume without loss of generality that $f' = f$. By Proposition~\ref{ANN_prop}--\ref{ANN_prop_c2}, $L \in \mathbf{ANN} \left[ r, O(f) \right]$, and hence $L \in \mathbf{ANN} \left[ r, \mathrm{poly} \right]$.

\ref{ANN_cor_c2} This point follows directly from point~\ref{ANN_cor_c1} by taking the union over all $r \in R$.
\end{proof}

\subsection{Evolving networks}

An {\it evolving recurrent neural network (ENN)} is an RNN where the weight matrices can evolve over time inside a bounded space instead of staying static~\cite{CabessaSiegelmann14}. Formally, an ENN is a tuple
$$
\mathcal{N} = \left( \vec{x}, \vec{h}, \vec{y}, \left( \vec{W^\mathnormal{t}_{in}} \right)_{t \in \mathbb{N}}, \left( \vec{W^\mathnormal{t}_{res}} \right)_{t \in \mathbb{N}}, \left( \vec{W^\mathnormal{t}_{out}} \right)_{t \in \mathbb{N}}, \vec{h}^0 \right)
$$
where $\vec{x}, \vec{h}, \vec{y}, \vec{h}^0$ are defined as in Definition~\ref{RNN:def}, and 
$$
\vec{W^\mathnormal{t}_{in}} \in \mathbb{Q}^{K \times (2+1)} \text{,~~} \vec{W^\mathnormal{t}_{res}} \in \mathbb{Q}^{K \times K} \text{~~and~~} \vec{W^\mathnormal{t}_{out}} \in \mathbb{Q}^{2 \times K}.
$$
are input, reservoir and output weight matrices at time $t$ such that $\| \vec{W^\mathnormal{t}_{in}} \|_{\max} = \| \vec{W^\mathnormal{t}_{res}} \|_{\max} = \| \vec{W^\mathnormal{t}_{out}} \|_{\max} \leq C$ for some constant $C > 1$ and for all $t \in \mathbb{N}$. The boundedness condition expresses the fact that the synaptic weights are confined into a certain range of values imposed by the biological constitution of the neurons. The successive values of an evolving weight $w_{ij}$ is denoted by $(w^t_{ij})_{t \in \mathbb{N}}$. The dynamics of an ENN is given by the following adapted equations
\begin{eqnarray}
    \vec{h}^{t+1} & = & \sigma \left( \vec{W^\mathnormal{t}_{in}} (\vec{x}^t : 1) + \vec{W^\mathrm{t}_{res}} \vec{h}^t \right) \label{ENN:eq1} \\ 
    \vec{y}^{t+1} & = & \theta \left( \vec{W^\mathnormal{t}_{out}} \vec{h}^{t+1} \right) \label{ENN:eq2} .
\end{eqnarray}
The definition of acceptance and rejection of words and decision of languages is the same as for RNNs.

In this case also, it can been shown that any ENN $\mathcal{N}$ containing the evolving weights $\bar e_1, \dots, \bar e_n \in [-C, C]^\omega$ is computationally equivalent to some ENN $\mathcal{N}'$ containing only one evolving weight $\bar e \in [-C, C]^\omega$, such that $\bar e$ evolves only among the binary values $0$ and $1$, i.e. $\bar e \in \Sigma^\omega$, and $\bar e$ is the evolving bias $(w^t_{02})_{t \in \mathbb{N}}$ of the hidden cell $h_0$~\cite{CabessaSiegelmann14}. Hence, without loss of generality, we restrict our attention to such networks. Let $\bar e \in \Sigma^\omega$ be some binary evolving weight and $\bar E \subseteq \Sigma^\omega$.
\begin{itemize}[itemsep=0pt]
    \item $\text{ENN}[\bar e]$ denoted the class of ENNs such that all weights but $w_{02}$ are static, and $(w^t_{02})_{t \in \mathbb{N}} = \bar e$.
    \item $\text{ENN}[\bar E]$ denotes the class of ENNs such that all weights but $w_{02}$ are static, and $(w^t_{02})_{t \in \mathbb{N}} \in \bar E$.
\end{itemize}

Like analog networks, evolving recurrent neural networks can decide any possible language in exponential time of computation. In polynomial time, they decide the complexity class $\mathbf{P/poly}$, and thus, are computationally equivalent to Turing machines with polynomial advice (TM/poly(A)). The following result holds~\cite{CabessaSiegelmann14}:

\begin{theorem}
\label{ENN_thm}
Let $L \subseteq \Sigma^*$ be some language. The following conditions are equivalent:
\begin{enumerate}[label=(\roman*),itemsep=0pt]
    \item $L \in \mathbf{P/poly}$; \label{ENN_thm_c1}
    \item $L$ is decidable by some $\text{ENN}$ in polynomial time. \label{ENN_thm_c3}
\end{enumerate}
\end{theorem}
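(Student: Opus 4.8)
The plan is to establish the two inclusions that together yield the equivalence, reusing the machinery already developed for analog networks. For the direction \ref{ENN_thm_c3} $\rightarrow$ \ref{ENN_thm_c1}, suppose $L$ is decided in polynomial time $f \in \mathrm{poly}$ by some $\text{ENN}[\bar e]$ $\mathcal{N}$ with $\bar e \in \Sigma^\omega$. Since $\mathcal{N}$ computes for at most $f(n)$ steps on an input of length $n$, only the first $f(n)$ bits $e_0 e_1 \cdots e_{f(n)-1}$ of the evolving weight are ever used. One would first prove a truncation lemma analogous to Lemma~\ref{lemma_truncation_analog}: up to time step $f(n)$, the network $\mathcal{N}$ behaves identically to the truncated network $\mathcal{N}|_{cf(n)}$ with all rational weights and activation values cut to $O(f(n))$ precision bits. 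Then a Turing machine with advice $\alpha(n) = e_0 e_1 \cdots e_{f(n)-1}$ — which is prefix and of polynomial size, hence a legitimate $\mathbf{P/poly}$ advice — can query its advice, read off the successive values $w^t_{02} = e_t$, and simulate the $f(n)$ steps of $\mathcal{N}|_{cf(n)}$ in polynomial time, exactly as in Procedure~\ref{algo1}. This shows $L \in \mathbf{TMA}[\alpha(\bar e, \mathrm{poly}), \mathrm{poly}] \subseteq \mathbf{P/poly}$.

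For the converse \ref{ENN_thm_c1} $\rightarrow$ \ref{ENN_thm_c3}, let $L \in \mathbf{P/poly} = \mathbf{P/poly^*}$, so there is a TM/A $\mathcal{M}$ running in polynomial time $f$ with a prefix polynomial advice $\alpha$; by the observation following the definition of $\alpha(\bar r, f)$, we may write $\alpha = \alpha(\bar e, f)$ where $\bar e = \lim_{n} \alpha(n) \in \Sigma^\omega$ and $|\alpha(n)| = f(n)$. The task is to build an $\text{ENN}[\bar e]$ simulating $\mathcal{M}$. The key point — and this is where the evolving-weight construction departs from the analog one — is how the advice bits are delivered. Rather than encoding all of $\bar e$ into one static real bias and decoding it bit by bit (as in Proposition~\ref{ANN_prop}\ref{ANN_prop_c2}), here the evolving bias $(w^t_{02})_{t}$ literally presents the bit $e_t$ at time step $t$ for free. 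So the network collects these bits one per step into a stack neuron $x_r$ (the encoded advice-tape content), while simulating the recursive part of $\mathcal{M}$ using stack neurons exactly as in the proof of Theorem~\ref{RRNN_thm1}. A minor timing issue arises because $\mathcal{M}$ may read advice bits faster or slower than one per computation step; this is handled by having the network buffer incoming bits and feed them to the simulated advice tape on demand, the buffering costing only constant-factor slowdown. When $\mathcal{M}$ runs past the end of $\alpha(n)$, the network keeps absorbing further bits $e_{f(n)}, e_{f(n)+1}, \dots$ into $x_r$, i.e. it simulates $\mathcal{M}$ on an extension of $\alpha(n)$; the consistency property of $\mathcal{M}$ guarantees the decision is unchanged. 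Hence $\mathcal{N}$ decides $L$ in polynomial time.

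**Main obstacle.** The routine parts are the stack-neuron simulation (already done in Theorem~\ref{RRNN_thm1}) and the truncation argument (mirroring Lemma~\ref{lemma_truncation_analog}). The genuinely delicate step is the synchronization between $\mathcal{M}$'s advice-reading schedule and the fixed one-bit-per-step arrival of the evolving weight: one must argue that a constant-size buffer suffices and that the simulation never "runs out" of advice bits, which is where the unboundedness of $\alpha$ and the consistency hypothesis on $\mathcal{M}$ do real work. Once that is in place, the polynomial time bound and the equality with $\mathbf{P/poly}$ follow by the same bookkeeping as in Corollary~\ref{ANN_cor}. I would structure the write-up to mirror Proposition~\ref{ANN_prop} and its corollary, stating the evolving-weight analogues $\mathbf{ENN}[\bar e, f] \subseteq \mathbf{TMA}[\alpha(\bar e, cf), O(f^3)]$ and $\mathbf{TMA}[\alpha(\bar e, f), f] \subseteq \mathbf{ENN}[\bar e, O(f)]$, from which Theorem~\ref{ENN_thm} is immediate by specializing to $\mathcal{F} = \mathrm{poly}$.
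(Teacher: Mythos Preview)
Your overall architecture is exactly what the paper does: Theorem~\ref{ENN_thm} itself is only cited in the paper, but it is recovered from Proposition~\ref{ENN_prop} and Corollary~\ref{ENN_cor}, which have precisely the shape you propose (the two inclusions $\mathbf{ENN}[\bar e, f] \subseteq \mathbf{TMA}[\alpha(\bar e, cf), O(f^3)]$ and $\mathbf{TMA}[\alpha(\bar e, f), f] \subseteq \mathbf{ENN}[\bar e, O(f)]$, together with the truncation Lemma~\ref{lemma_truncation_evolving}). The direction \ref{ENN_thm_c3}$\to$\ref{ENN_thm_c1} is handled just as you describe, via a Procedure mirroring Procedure~\ref{algo1}.

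The one place where your plan diverges from the paper, and where there is a small gap, is the synchronization step in \ref{ENN_thm_c1}$\to$\ref{ENN_thm_c3}. You push each arriving bit $e_t$ onto a stack neuron $x_r$ and feed bits to the simulated advice tape ``on demand''. But bits pushed onto a stack come out LIFO, whereas $\mathcal{M}$ needs them FIFO (it first reads $e_0$, which sits at the bottom of your stack). Your phrase ``a constant-size buffer suffices'' does not address this ordering problem, and a naive buffer will hand $\mathcal{M}$ the bits in reverse. The paper solves this with a doubling--restart trick rather than on-demand feeding: the network pushes all incoming bits into a stack $x_a$; for $k=1,2,\dots$ it waits until $x_a$ holds at least $2^k$ bits, \emph{reverses} $x_a$ into a second stack $x_{a'}$ (so $e_0$ is now on top), and simulates $\mathcal{M}$ reading from $x_{a'}$; if $\mathcal{M}$ ever asks for a bit not yet in $x_{a'}$, the network aborts and restarts with $k+1$. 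Since success is guaranteed once $2^k \geq f(n)$, the total time is $O\big(\sum_{k \leq \log f(n)} 2^k\big) = O(f(n))$. Your approach can be repaired (e.g.\ by implementing a FIFO queue with two stack neurons), but as written the ordering issue is the missing idea.
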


An analogous version of Lemma~\ref{lemma_truncation_analog} holds for the case of evolving networks~\cite{CabessaSiegelmann14}. Note that the boundedness condition on the weights is involved in this result.

\begin{lemma}
\label{lemma_truncation_evolving}
Let $\mathcal{N}$ be some $\text{ENN}$ computing in time $f : \mathbb{N} \rightarrow \mathbb{N}$. Then, there exists some constant $c$ such that, for every $n \in \mathbb{N}$ and every input $w \in \Sigma^n$, the networks $\mathcal{N}$ and $\mathcal{N} |_{c f(n)}$ produce the same outputs up to time $f(n)$.
\end{lemma}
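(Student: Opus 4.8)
The plan is to transcribe the proof of Lemma~\ref{lemma_truncation_analog} almost verbatim, the single new ingredient being that the weight matrices now depend on $t$; this time-dependence is absorbed by the uniform bound $\| \vec{W}^t_{in} \|_{\max} = \| \vec{W}^t_{res} \|_{\max} = \| \vec{W}^t_{out} \|_{\max} \leq C$, which plays exactly the role of the fixed weight bound in the static case. Fix an ENN $\mathcal{N}$ with $K$ hidden cells and bound $C > 1$, fix $n \in \mathbb{N}$ and an input $w \in \Sigma^n$, and let $\vec{h}^t$ and $\tilde{\vec{h}}^t$ be the hidden states produced when $\mathcal{N}$ and $\mathcal{N}|_q$, respectively, read $\vec{w}$. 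Put $\epsilon_t = \| \vec{h}^t - \tilde{\vec{h}}^t \|_\infty$, so that $\epsilon_0 \leq 2^{-q}$, the initial state being truncated once at the outset.

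First I would establish a one-step error recurrence. Comparing the update~(\ref{ENN:eq1}) for the two networks and using that $\sigma$ is $1$-Lipschitz, that each entry of a truncated weight matrix differs from the exact one by at most $2^{-q}$, that the entries of $(\vec{x}^t : 1)$ lie in $\{0,1\}$ and those of $\vec{h}^t,\tilde{\vec{h}}^t$ in $[0,1]$, and that the final per-step truncation of the activations adds at most $2^{-q}$, one obtains
\begin{equation*}
\epsilon_{t+1} \;\leq\; CK\,\epsilon_t + (K+4)\,2^{-q}.
\end{equation*}
The amplification factor $CK$ is independent of $t$ \emph{precisely because} the bound $C$ is uniform in $t$; this is the only point at which the evolving, as opposed to static, nature of the weights must be controlled, and it is exactly why the boundedness hypothesis is indispensable (as the remark preceding the lemma signals). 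Enlarging $C$ if necessary so that $\lambda := CK \geq 2$, and unrolling from $\epsilon_0 \leq 2^{-q}$, gives $\epsilon_t \leq (K+5)\,\lambda^{t}\,2^{-q}$ for every $t$.

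Next I would fix the precision. Taking $q = c\,f(n)$ for a constant $c > \log_2 \lambda$, the bound becomes $\epsilon_t \leq (K+5)\,2^{-(c - \log_2 \lambda)\,t}$ for all $t \leq f(n)$, which can be forced below any prescribed positive threshold by enlarging $c$ — and the same $c$ then works for every $n$ and every $w \in \Sigma^n$. Inserting this into the output update~(\ref{ENN:eq2}) and again using the uniform bound on $\vec{W}^t_{out}$, the pre-activations of the output cells computed by $\mathcal{N}$ and $\mathcal{N}|_{cf(n)}$ differ by at most $CK\,\epsilon_{t+1} + K\,2^{-q}$ at each step $t+1 \leq f(n)$; once $c$ is large enough that this quantity stays below the distance separating these pre-activations from the discontinuity point $1$ of $\theta$, the two networks emit identical Boolean outputs throughout steps $1,\dots,f(n)$, which is the assertion.

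The step I expect to be the real obstacle is the last one: deducing that the hard-threshold outputs agree requires the output pre-activations of $\mathcal{N}$ to remain bounded away from $1$ by a margin that does not shrink as $n$ grows. I would dispatch this exactly as in the analog case rather than re-derive it, invoking the corresponding argument of~\cite{SiegelmannSontag94} as adapted to evolving weights in~\cite{CabessaSiegelmann14}; everything else is a mechanical rewriting of the analog proof with the uniform bound $C$ substituted for the static weight bound.
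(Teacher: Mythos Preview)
The paper does not give a proof of this lemma: it merely states the result with a citation to~\cite{CabessaSiegelmann14} and remarks that the boundedness condition on the evolving weights is essential. Your proposal therefore supplies strictly more than the paper does, and the approach you outline --- a Lipschitz error recurrence driven by the uniform bound $C$, unrolled exponentially and then killed by choosing $q$ linear in $f(n)$ --- is exactly the standard argument underlying both Lemma~\ref{lemma_truncation_analog} (from~\cite{SiegelmannSontag94}) and its evolving variant in~\cite{CabessaSiegelmann14}.

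Two small remarks. First, your displayed bound $\epsilon_t \leq (K+5)\,2^{-(c-\log_2\lambda)t}$ is vacuous at $t=0$ (it gives $K+5$, not something small); the bound you actually want for all $t\leq f(n)$ is $\epsilon_t \leq (K+5)\,2^{-(c-\log_2\lambda)f(n)}$, obtained directly from $\lambda^t 2^{-cf(n)}$ by taking $t=f(n)$. This is cosmetic and does not affect the argument. Second, you are right to flag the final step --- that the true pre-activations stay bounded away from the threshold discontinuity by a margin independent of $n$ --- as the genuine difficulty, and deferring it to~\cite{SiegelmannSontag94,CabessaSiegelmann14} is exactly what the paper itself does.
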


Once again, the computational relationship between evolving neural networks and Turing machines with advice can be strengthen. For this purpose, we define the following classes of languages decided by evolving neural networks in time $f$ and $\mathcal{F}$, respectively:
\begin{eqnarray*}
\mathbf{ENN} \left[ \bar e, f \right] & = & \left\{ L \subseteq \Sigma^\omega : L = L_f(\mathcal{N}) \text{ for some } \mathcal{N} \in \text{ENN}[\bar e] \right\} \\
\mathbf{ENN} \left[ \bar E, \mathcal{F} \right] & = & \bigcup_{\bar e \in \bar E} \bigcup_{f \in \mathcal{F}} \mathbf{ENN} \left[ \bar e, f \right] .
\end{eqnarray*}
For any $\bar e \in \Sigma^\omega$ and any function $f : \mathbb{N} \rightarrow \mathbb{N}$, we consider the prefix advice $\alpha(\bar e, f) : \mathbb{N} \rightarrow \Sigma^*$ associated with $e$ and $f$ defined by
$$
\alpha(\bar e, f)(n) = e_{0}e_{1} \cdots e_{f(n)-1}
$$
for all $n \in \mathbb{N}$. Conversely, any prefix advice $\alpha : \mathbb{N} \rightarrow \Sigma^*$ is clearly of the form $\alpha(\bar e, f)$, where $\bar e = \lim_{n \rightarrow \infty} \alpha(n) \in \Sigma^\omega$ and $f(n) = | \alpha(n) |$ for all $n \in \mathbb{N}$.

The following relationships between neural networks with evolving weights and Turing machines with related advice hold:

\begin{proposition}
\label{ENN_prop}
Let $e \in \Sigma^\omega$ be some binary evolving weight and $f : \mathbb{N} \rightarrow \mathbb{N}$ be some non-decreasing function. 
\begin{enumerate}[label=(\roman*),itemsep=0pt]
\item \label{ENN_prop_c1}
$\mathbf{ENN} \left[ \bar e, f \right] \subseteq \mathbf{TMA} \left[ \alpha(\bar e, c f), O(f^3) \right]$, for some $c > 0$.
\item \label{ENN_prop_c2}
$\mathbf{TMA} \left[ \alpha(\bar e, f), f \right] \subseteq \mathbf{ENN} \left[ \bar e, O(f) \right]$.
\end{enumerate}
\end{proposition}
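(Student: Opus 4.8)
\emph{The plan} is to follow the proof of Proposition~\ref{ANN_prop} almost line by line, substituting the binary evolving weight $\bar e = e_0 e_1 e_2 \cdots$ for the static real weight $r$ and invoking Lemma~\ref{lemma_truncation_evolving} in place of Lemma~\ref{lemma_truncation_analog}. For part~\ref{ENN_prop_c1}, let $L \in \mathbf{ENN}[\bar e, f]$ be witnessed by some $\mathcal{N} \in \text{ENN}[\bar e]$ with $L_f(\mathcal{N}) = L$, and let $c > 0$ (which we may take $\geq 1$) be the constant of Lemma~\ref{lemma_truncation_evolving}, so that $\mathcal{N}$ and $\mathcal{N}|_{cf(n)}$ agree up to time $f(n)$ on every input of length $n$. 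I would then run a variant of Procedure~\ref{algo1}: query the advice word $\alpha(\bar e, cf)(n) = e_0 e_1 \cdots e_{cf(n)-1}$, simulate $\mathcal{N}|_{cf(n)}$ for $f(n)$ steps --- all weights other than the bias of $h_0$ being the fixed rationals of $\mathcal{N}$ built into the procedure, and the bias of $h_0$ at step $t$ being the bit $e_t$, which belongs to the queried word since $t < f(n) \leq cf(n)$ --- and return the resulting decision. Each simulated step is recursive and performs a constant number of arithmetic operations on rationals written on $cf(n)$ bits, hence costs $O(f^2(n))$; the $f(n)$ steps cost $O(f^3(n))$, and by Lemma~\ref{lemma_truncation_evolving} the procedure decides $L$. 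Hence $L \in \mathbf{TMA}[\alpha(\bar e, cf), O(f^3)]$.

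For part~\ref{ENN_prop_c2}, let $L \in \mathbf{TMA}[\alpha(\bar e, f), f]$ be witnessed by a TM/A $\mathcal{M}$ with advice $\alpha(\bar e, f)$ and $L_f(\mathcal{M}) = L$. I would build an $\mathcal{N} \in \text{ENN}[\bar e]$ simulating $\mathcal{M}$ in real time, using the stack-neuron encoding of Theorem~\ref{RRNN_thm1} for $\mathcal{M}$'s work tapes and advice tape, exactly as in the proof of Proposition~\ref{ANN_prop}--\ref{ANN_prop_c2}. The one new ingredient is the reconstruction of the advice: since the bias $w^t_{02}$ of $h_0$ equals $e_t$ at step $t$, I zero out all other incoming connections of $h_0$, so that $h_0^{t+1} = \sigma(e_t) = e_t$ and $\mathcal{N}$ harvests one fresh bit of $\bar e$ per step, pushing it onto a stack neuron $x_a$ that accumulates the prefix $e_0 \cdots e_t$ read so far. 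Running the simulation of $\mathcal{M}$ with a fixed slowdown of, say, $2$, whenever $\mathcal{M}$'s advice head reaches position $i$ the bit $e_i$ is already in $x_a$: after $k$ of its own steps $\mathcal{M}$ has moved its advice head at most $k$ cells, while by the time $\mathcal{N}$ simulates that $k$-th step it has performed at least $2k$ steps and buffered $e_0, \dots, e_{2k-1}$. So the simulation never stalls, and $\mathcal{N}$ runs in time $O(f(n))$.

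It remains to verify correctness. While $\mathcal{M}$'s advice head stays in the range $0, \dots, f(n)-1$, the network supplies it exactly the bits of $\alpha(\bar e, f)(n)$ and the two machines behave identically, so $\mathcal{N}$ reaches the correct decision for $w$. If $\mathcal{M}$ ever moves its advice head to a position $\geq f(n)$, then $\mathcal{N}$ keeps feeding genuine bits $e_{f(n)}, e_{f(n)+1}, \dots$ of $\bar e$ in place of the blanks $\mathcal{M}$ would read; since $\bar e$ is prefix-encoded, this amounts to running $\mathcal{M}$ on $w$ with the advice $\alpha(\bar e, f)(n')$ for a suitable $n' \geq n$ (chosen, using the unboundedness of $f$, so that $f(n')$ exceeds every advice position touched during the $\leq f(n)$ steps of the computation), which by the consistency property of $\mathcal{M}$ produces the same decision as $\alpha(\bar e, f)(n)$. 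Hence $L(\mathcal{N}) = L(\mathcal{M}) = L$, and $L \in \mathbf{ENN}[\bar e, O(f)]$.

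\emph{The main obstacle} lies entirely in part~\ref{ENN_prop_c2}: whereas an $\text{ANN}$ keeps its real weight $r$ in a single neuron from time $0$ and can extract any of its bits on demand, an $\text{ENN}$ discovers $\bar e$ only incrementally, one bit per step. The delicate points are therefore (a) that a buffered advice bit is never requested before it has been produced, which is a real-time bookkeeping comparison between $\mathcal{M}$'s advice-head position and $\mathcal{N}$'s clock, and (b) that continuing to feed genuine bits of $\bar e$ past the end of $\alpha(n)$ does no harm, which rests on the prefix and consistency properties of the advice. The remainder is a routine transcription of the analog arguments.
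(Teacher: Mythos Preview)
Your argument for part~\ref{ENN_prop_c1} is essentially identical to the paper's.

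For part~\ref{ENN_prop_c2} you take a genuinely different route. The paper does \emph{not} run a single constant-slowdown simulation; instead it uses a doubling/restart scheme: for $k=1,2,\dots$, the network waits until $2^k$ bits of $\bar e$ have accumulated in the stack $x_a$, reverses them into a second stack $x_{a'}$ (so that $e_0$ sits on top), and simulates $\mathcal{M}$ from scratch, feeding advice bits by popping $x_{a'}$; if $\mathcal{M}$'s head ever runs off the end of $x_{a'}$, the network abandons the attempt and restarts with $k+1$. The geometric sum $\sum_{k\le\log f(n)} O(2^k)=O(f(n))$ then yields the time bound. Your single-pass buffering idea is cleaner, and your explicit appeal to the consistency property to justify feeding $\mathcal{M}$ genuine bits of $\bar e$ past position $f(n)-1$ is correct (the paper's ENN proof leaves this point implicit).

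There is one small wrinkle in your construction, however. If $x_a$ is a plain stack with bits \emph{pushed} in arrival order, then its top is the most recent bit $e_t$, not the next bit the advice head needs; popping in the style of Proposition~\ref{ANN_prop}--\ref{ANN_prop_c2} would deliver bits in the wrong order. To make your scheme work you must either append each incoming bit to the \emph{bottom} of the base-$4$ encoding of $x_a$ (e.g.\ by maintaining an auxiliary neuron carrying $4^{-(t+1)}$ so that $x_a$ holds $\delta_4(e_0\cdots e_t)$ with $e_0$ on top), or implement the buffer as a queue via two stacks. The latter gives only amortised $O(1)$ per access, so the phrase ``a fixed slowdown of, say, $2$'' is slightly optimistic, though the overall $O(f(n))$ bound survives with either fix. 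This is precisely the bookkeeping difficulty that the paper's reverse-and-restart trick sidesteps.
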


\begin{proof}
The proof is very similar to that of Proposition~\ref{ANN_prop}.

\ref{ENN_prop_c1} 
Let $L \in \mathbf{ENN} \left[ \bar e, f \right]$. Then, there exists some $\text{ENN}[\bar e]$ $\mathcal{N}$ such that $L_f(\mathcal{N}) = L$. By Lemma~\ref{lemma_truncation_evolving}, there exists some constant $c > 0$ such that the network $\mathcal{N}$ and the truncated network $\mathcal{N} |_{c f(n)}$ produce the same outputs up to time step $f(n)$, for all $n \in \mathbb{N}$. Now, consider Procedure~\ref{algo2} below. In this procedure, all instructions except the query one are recursive. Procedure~\ref{algo2} can be simulated by some TM/A $\mathcal{M}$ using advice $\alpha(\bar e, f)$ in time $O(f^3(n))$, as described in the proof of Proposition~\ref{ANN_prop}. In addition, $\mathcal{M}$ and $\mathcal{N}$ decide the same language $L$, and therefore $L \in \mathbf{TMA} \left[ \alpha(\bar e, f), O(f^3) \right]$.

\begin{algorithm}[h!]
\DontPrintSemicolon
\SetKwInOut{Input}{Inputs}
\smallskip
\KwInput{input $w \in \Sigma^n$}
{Query the advice $\alpha(\bar e, f)(n) = e_0 e_1 \cdots e_{cf(n)-1}$}\;
\For{$t = 0, 1, \dots, f(n)$}{
    {Simulate the truncated network $\mathcal{N} |_{c f(n)}$ which uses $w^t_{02} = e_t$ as the current value if its evolving weight};
}
\Return Output of $\mathcal{N} |_{c f(n)}$ over $w$ at time step $f(n)$
\caption{}
\label{algo2}
\end{algorithm}

\ref{ANN_prop_c2} 
Let $L \in \mathbf{TMA} \left[ \alpha(\bar e, f), f \right]$. Then, there exists to some TM/A $\mathcal{M}$ with advice $\alpha(\bar e, f)$ such that $L_f(\mathcal{M}) = L$. The machine $\mathcal{M}$ can be simulated by the network $\mathcal{N}$ with evolving weight $\bar e = e_0 e_1 e_2 \cdots$ as follows. First, $\mathcal{N}$ simultaneously counts and pushes into a stack neuron $x_a$ the successive bits of $\bar e$ as they arrive. Then, for $k = 1, 2, \dots$ and until it produces a decision, $\mathcal{N}$ proceeds as follows. If necessary, $\mathcal{N}$ waits for $x_a$ to contain more than $2^k$ bits, copies the content $e_0 e_1 \cdots e_{2^k} \cdots$ of $x_a$ in reverse order into another stack neuron $x_{a'}$, and simulates $\mathcal{M}$ with advice $e_0 e_1 \cdots e_{2^k} \cdots$ in real time. Every time $\mathcal{M}$ reads a new advice bit, $\mathcal{N}$ tries to access it from its stack $x_{a'}$. If $x_{a'}$ does not contain this bit, then $\mathcal{N}$ restart the whole process with $k + 1$. When $k = \log(f(n))$, the stack $x_a$ contains $2^k = f(n)$ bits, which ensures that $\mathcal{N}$ properly simulates $\mathcal{M}$ with advice $e_0 e_1 \cdots e_{f(n)-1}$. Hence, the whole simulation process is achieved in time $O( \sum_{k=1}^{\log(f(n))} 2^k) = O(2^{\log(f(n)) + 1}) = O(f(n))$. 
In this way, $\mathcal{M}$ and $\mathcal{N}$ decide the same language $L$, and $\mathcal{M}$ is simulated by $\mathcal{N}$ in time $O(f)$. Therefore, $L \in \mathbf{ENN} \left[ \bar e, O(f) \right]$.
\end{proof}

The class of languages decided in polynomial time by evolving networks and Turing machines using related evolving weights and advices are the same.

\begin{corollary}
\label{ENN_cor}
Let $\bar e \in \Sigma^\omega$ be some binary evolving weight and $\bar E \subseteq \Sigma^\omega$ be some set of binary evolving weights. 
\begin{enumerate}[label=(\roman*),itemsep=0pt]
\item \label{ENN_cor_c1}
$\mathbf{ENN} \left[ \bar e, \mathrm{poly} \right] = \mathbf{TMA} \left[ \alpha(\bar e, \mathrm{poly}), \mathrm{poly} \right]$.
\item \label{ENN_cor_c2}
$\mathbf{ENN} \left[ \bar E, \mathrm{poly} \right] = \mathbf{TMA} \left[ \alpha(\bar E, \mathrm{poly}), \mathrm{poly} \right]$.
\end{enumerate}
\end{corollary}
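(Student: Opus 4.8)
The plan is to mirror, almost verbatim, the proof of Corollary~\ref{ANN_cor}, replacing every appeal to Proposition~\ref{ANN_prop} by the corresponding appeal to Proposition~\ref{ENN_prop}. The point is that once we restrict to polynomial computation time, the two one-sided inclusions of Proposition~\ref{ENN_prop} collapse to an equality, because $\mathrm{poly}$ is closed under the maps $f \mapsto cf$ and $f \mapsto f^3$ and under taking pointwise maxima. So the whole argument is a short piece of bookkeeping on top of Proposition~\ref{ENN_prop}.

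For part~\ref{ENN_cor_c1}, I would first take $L \in \mathbf{ENN}[\bar e, \mathrm{poly}]$, so that $L \in \mathbf{ENN}[\bar e, f]$ for some $f \in \mathrm{poly}$. Proposition~\ref{ENN_prop}--\ref{ENN_prop_c1} then yields a constant $c > 0$ with $L \in \mathbf{TMA}[\alpha(\bar e, cf), O(f^3)]$; since $cf \in \mathrm{poly}$ and $O(f^3) \subseteq \mathrm{poly}$, this gives $L \in \mathbf{TMA}[\alpha(\bar e, \mathrm{poly}), \mathrm{poly}]$. Conversely, take $L \in \mathbf{TMA}[\alpha(\bar e, \mathrm{poly}), \mathrm{poly}]$, so that $L \in \mathbf{TMA}[\alpha(\bar e, f'), f]$ for some $f, f' \in \mathrm{poly}$. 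Setting $g = \max(f, f') \in \mathrm{poly}$ and invoking the consistency property of Turing machines with prefix unbounded advice (Section~\ref{sec:prelim}) — extending the advice word past the portion the machine actually consults does not change the decision — I may replace the advice length $f'$ by $g$ and the running time $f$ by $g$, i.e.\ assume $f' = f = g$. Then Proposition~\ref{ENN_prop}--\ref{ENN_prop_c2} gives $L \in \mathbf{ENN}[\bar e, O(g)] \subseteq \mathbf{ENN}[\bar e, \mathrm{poly}]$, establishing both inclusions.

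For part~\ref{ENN_cor_c2}, I would simply take the union of the equality of part~\ref{ENN_cor_c1} over all $\bar e \in \bar E$, using the definitions $\mathbf{ENN}[\bar E, \mathcal{F}] = \bigcup_{\bar e \in \bar E} \bigcup_{f \in \mathcal{F}} \mathbf{ENN}[\bar e, f]$ and $\alpha(\bar E, \mathcal{F}) = \bigcup_{\bar e \in \bar E} \bigcup_{f \in \mathcal{F}} \{\alpha(\bar e, f)\}$. I do not expect a genuine obstacle here: the only step requiring a little care is the reduction to $f' = f$ in the converse direction of part~\ref{ENN_cor_c1}, where one must check that the consistency property of the TM/A model is exactly what licenses replacing an advice of length $f'$ by a longer prefix-compatible advice of length $g$ without altering membership — the same subtlety already handled in the proof of Corollary~\ref{ANN_cor}. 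Everything else is closure bookkeeping for the class $\mathrm{poly}$.
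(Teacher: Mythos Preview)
Your proposal is correct and matches the paper's approach exactly: the paper's proof of Corollary~\ref{ENN_cor} simply states that it is similar to that of Corollary~\ref{ANN_cor}, and your write-up is precisely the expected transcription of that argument with Proposition~\ref{ENN_prop} in place of Proposition~\ref{ANN_prop}. Your extra care in spelling out $g = \max(f, f')$ and invoking the consistency property is just a more explicit version of the paper's ``without loss of generality $f' = f$'' step.
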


\begin{proof}
The proof is similar to that of Corollary~\ref{ANN_cor}.
\end{proof}

\subsection{Stochastic networks}

A {\it stochastic recurrent neural network (SNN)} is an RNN as defined in Definition~\ref{RNN:def}, except that the network contains additional stochastic cells as inputs~\cite{Siegelmann99}. Formally, an SNN is an RNN
$$
\mathcal{N} = \left( \vec{x}, \vec{h}, \vec{y}, \vec{W_{in}},\vec{W_{res}}, \vec{W_{out}}, \vec{h}^0 \right)
$$
such that $\vec{x} = (x_0, x_1, x_2, \cdots, x_{k+1})$, where $x_0$ and $x_1$ are the data and validation input cells, respectively, and $x_2, \dots, x_{k+1}$ are $k$ additional stochastic cells. The dimension of the input weight matrix is adapted accordingly, namely $\vec{W_{in}} \in \mathbb{R}^{K \times ((k+2)+1)}$. Each stochastic cell $x_i$ is associated with a probability $p_i \in [0, 1]$, and at each time step $t \geq 0$, the activation of the cell $x^t_i$ takes value $1$ with probability $p_i$, and value $0$ with probability $1 - p_i$. The dynamics of an SNN is then governed by Equations~(\ref{rnn:eq1}) and (\ref{rnn:eq2}), but with the adapted inputs $\vec{x}^t = (x^t_0, x^t_1, x^t_2, \cdots, x^t_{k+1}) \in \mathbb{B}^{k+2}$ for all $t \geq 0$.

For some SNN $\mathcal{N}$, we assume that any input $w = w_0 w_1 \cdots w_{n-1} \in \Sigma^*$ is decided by $\mathcal{N}$ in the same amount of time $\tau(n)$, regardless of the random pattern of the stochastic cells $x^t_i \in \{ 0, 1 \}$, for all $i = 2, \dots, k+1$. Hence, the number of possible computations of $\mathcal{N}$ over $w$ is finite. The input $w$ is {\it accepted} (resp.~{\it rejected}) by $\mathcal{N}$ if the number of accepting (resp.~rejecting) computations over the total number of computations on $w$ is greater than or equal to $2/3$. 
This means that the error probability of $\mathcal{N}$ is bounded by $1/3$. If $f : \mathbb{N} \rightarrow \mathbb{N}$ is a non-decreasing function, we say that $w$ is {\it accepted} or {\it rejected} by $\mathcal{N}$ {\it in time} $f$ if it is accepted or rejected in time $\tau(n) \leq f(n)$, respectively. We assume that any SNN is a decider. The definition of decision of languages is the same as in the case of RNNs.

Once again, any SNN is computationally equivalent to some SNN with only one stochastic cell $x_2$ associated with a real probability $p \in \Delta$~\cite{Siegelmann99}. Without loss of generality, we restrict our attention to such networks. Let $p \in \Delta$ be some probability and $P \subseteq \Delta$.
\begin{itemize}[itemsep=0pt]
    \item $\text{SNN}[p]$ denotes the class of SNNs such that the probability of the stochastic cell $x_2$ is equal to $p$.
    \item $\text{SNN}[P]$ denotes the class of SNNs such that the probability of the stochastic cell $x_2$ is equal to some $p \in P$.
\end{itemize}

In polynomial time of computation, the SNNs with rational probabilities decide the complexity class $\mathbf{BPP}$. By contrast, the SNNs with real probabilities decide the complexity class $\mathbf{BPP/log^*}$, and hence, are computationally equivalent to probabilistic Turing machines with logarithmic advice (PTM/log(A)). The following result holds~\cite{Siegelmann99}:

\begin{theorem}
\label{SNN_thm}
Let $L \subseteq \Sigma^*$ be some language. The following conditions are equivalent:
\begin{enumerate}[label=(\roman*),itemsep=0pt]
    \item $L \in \mathbf{BPP/log^*}$; \label{SNN_thm_c1}
    \item $L$ is decidable by some $\text{SNN}$ in polynomial time. \label{SNN_thm_c3}
\end{enumerate}
\end{theorem}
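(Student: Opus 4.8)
The plan is to prove the two implications separately, adapting the scheme of Propositions~\ref{ANN_prop} and~\ref{ENN_prop} to the probabilistic setting. As preliminaries I would (i) record a truncation lemma for SNNs in the style of Lemmas~\ref{lemma_truncation_analog} and~\ref{lemma_truncation_evolving}: for an $\text{SNN}$ $\mathcal{N}$ computing in time $f$ there is a constant $c>0$ such that $\mathcal{N}$ and $\mathcal{N}|_{cf(n)}$ induce the same distribution over outputs up to time $f(n)$ (each computation branch is individually unaffected, hence so is its probability); and (ii) in both directions first amplify the success probability from $2/3$ to, say, $9/10$ by a constant number of independent repetitions and a majority vote (folding the extra stochastic cells back to a single one as usual), which costs only a constant factor in time and leaves slack for the approximation errors introduced below.

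For \ref{SNN_thm_c3}$\Rightarrow$\ref{SNN_thm_c1}, given $\mathcal{N}\in\text{SNN}[p]$ deciding $L$ in polynomial time $f$ with error $\le 1/10$, I would let the advice on length-$n$ inputs be the first $m(n)=\lceil\log f(n)\rceil+c'$ binary digits of $p$ --- a logarithmic-size prefix advice --- and let $\tilde p$ be the rational number they represent. The associated $\mathrm{PTM}$ with advice simulates $\mathcal{N}|_{cf(n)}$ for $f(n)$ steps, realizing the stochastic cell as a $\tilde p$-biased bit drawn from $m(n)$ of its own fair coins and performing the deterministic updates with $cf(n)$-bit arithmetic, in total time $O(f(n)^3)$. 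A coupling over the $f(n)$ coin tosses bounds the total variation distance between the $p$- and $\tilde p$-driven computations on a length-$n$ input by $f(n)2^{-m(n)}\le 2^{-c'}$; choosing $c'$ with $2^{-c'}\le 1/10$, the machine decides $L$ with error $\le 2/10<1/3$, and the same bound applied to every longer prefix of $p$ yields the consistency property~(iii). Hence $L\in\mathbf{BPP/log^*}$.

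For the converse \ref{SNN_thm_c1}$\Rightarrow$\ref{SNN_thm_c3}, take a $\mathrm{PTM}$ $\mathcal{M}$ with prefix logarithmic advice $\alpha$ of size $g$ deciding $L$ in polynomial time $f$ with error $\le 1/10$, and put $\bar\alpha=\lim_n\alpha(n)\in\Sigma^\omega$. I would encode $\bar\alpha$ into a probability $p\in\Delta\cap(0,1)$ redundantly and without dyadic ambiguity (e.g.\ placing the bits of $\bar\alpha$ at every third binary position of $p$ and fixed markers elsewhere, keeping $p$ away from $0$, $1$ and from every short dyadic), so that estimating $p$ to additive precision $n^{-O(1)}$ pins down $\alpha(n)$. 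The SNN $\mathcal{N}\in\text{SNN}[p]$ then, on $w\in\Sigma^n$: (a) tosses its cell $\mathrm{poly}(n)$ times and recovers $\alpha(n)$ from the empirical frequency, correctly except with probability $\le n^{-2}$ by Hoeffding's inequality; (b) produces $f(n)$ unbiased bits from fresh tosses by von~Neumann extraction (valid since $0<p<1$), using $O(f(n))$ tosses except with probability $2^{-\Omega(f(n))}$ by a Chernoff bound; and (c) simulates $\mathcal{M}(w,\alpha(n))$ on those bits in real time, as in the proof of Theorem~\ref{RRNN_thm1}. Everything is carried out inside a fixed polynomial bound $\tau(n)$, defaulting to reject on overrun; the overrun and mis-recovery events contribute $O(n^{-2})$, so the total error stays $\le 1/10+O(n^{-2})<1/3$, and $L$ is decided by $\mathcal{N}$ in polynomial time.

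The main obstacle is the converse direction: making the single biased cell serve simultaneously as a non-uniform oracle --- its high-order bits, read off by sampling, which is exactly what caps the recoverable advice at $O(\log n)$ bits in $\mathrm{poly}(n)$ time and thereby explains the $\log^*$ in the statement --- and as a source of fair randomness --- its low-order behaviour, whitened by von~Neumann extraction --- all while respecting the SNN's requirement of a fixed running time and keeping the several failure probabilities (the $\mathbf{BPP}$ gap, advice mis-recovery, time overrun, and, in the forward direction, the $\tilde p$-for-$p$ substitution) summing to a constant safely below $1/3$. The individual estimates are standard; the care lies in their assembly, and in choosing an encoding of $\bar\alpha$ into $p$ that is robust to the sampling noise.
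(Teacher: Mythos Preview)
Your proposal is correct and follows essentially the same route as the paper. The paper does not prove this theorem in-line (it is cited from~\cite{Siegelmann99}), but it does prove the finer Proposition~\ref{SNN_prop}, from which the theorem follows via Corollary~\ref{SNN_cor}; your two directions are precisely the two halves of that proposition, with only cosmetic differences---you use a coupling/total-variation estimate where the paper uses a union bound, Hoeffding where the paper uses Chebyshev, and an ad~hoc robust encoding of $\bar\alpha$ into $p$ where the paper relies on the ambient assumption $p\in\Delta=\delta_4(\Sigma^\omega)$ to the same effect.
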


As for the two previous models, the computational relationship between stochastic neural networks and Turing machines with advice can be precised. We define the following classes of languages decided by stochastic neural networks in time $f$ and $\mathcal{F}$, respectively:
\begin{eqnarray*}
\mathbf{SNN}\left[ p, f \right] & = & \left\{ L \subseteq \Sigma^\omega : L = L_f(\mathcal{N}) \text{ for some } \mathcal{N} \in \text{SNN}[p] \right\} \\
\mathbf{SNN} \left[ P, \mathcal{F} \right] & = & \bigcup_{p \in P}  \bigcup_{f \in \mathcal{F}} \mathbf{SNN} \left[ p, f \right] .
\end{eqnarray*}

The tight relationships between stochastic neural networks using real probabilities and Turing machines using related advices can now be described. In this case however, the advice of the machines are logarithmically related to the computation time of the networks.

\begin{proposition}
\label{SNN_prop}
Let $p \in \Delta$ be some real probability and $f : \mathbb{N} \rightarrow \mathbb{N}$ be some non-decreasing function. 
\begin{enumerate}[label=(\roman*),itemsep=0pt]
\item \label{SNN_prop_c1}
$\mathbf{SNN}[p, f] \subseteq \mathbf{PTMA} \left[ \alpha(\bar p, \log(5f)), O(f^3) \right]$. 
\item \label{SNN_prop_c2}
$\mathbf{PTMA} \left[ \alpha(\bar p, \log(f)), f \right] \subseteq \mathbf{SNN} \left[ p,O(f^2) \right]$.
\end{enumerate}
\end{proposition}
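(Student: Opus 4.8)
The plan is to follow the pattern of Propositions~\ref{ANN_prop} and~\ref{ENN_prop}; the genuinely new point is the correspondence between a stochastic cell that fires with real probability $p$ and an advice word encoding a prefix of $\bar p$. In direction~\ref{SNN_prop_c1} the probabilistic machine already has fair coins but must \emph{approximate} the Bernoulli$(p)$ source from the few bits of $\bar p$ it reads off its advice tape; in direction~\ref{SNN_prop_c2} the network already has the Bernoulli$(p)$ source but must \emph{reconstruct} enough bits of $\bar p$ from samples to recover the advice word, and must also synthesize fair coins from its biased source. I note first that, since all weight matrices of an SNN are rational, the dynamics is exactly recursive and $p$ is the only non-uniform object; hence no truncation lemma is needed, and the cubic bound in~\ref{SNN_prop_c1} will arise exactly as in Proposition~\ref{ANN_prop}, from simulating $f(n)$ steps of rational arithmetic on numbers of bit-size $O(f(n))$.

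For~\ref{SNN_prop_c1}, given $\mathcal{N}\in\text{SNN}[p]$ deciding $L$ in time $f$, I would build a PTM/A $\mathcal{M}$ that, on input $w\in\Sigma^n$, queries the prefix advice $\alpha(\bar p,\log(5f))(n)$, decodes a rational $\tilde p$ with $|\tilde p-p|$ of order $1/(5f(n))$, and simulates $\mathcal{N}$ for $f(n)$ steps, each read of the stochastic cell being replaced by a fresh Bernoulli$(\tilde p)$ draw produced from $O(\log f(n))$ fair coins. Coupling the $\le f(n)$ simulated draws with genuine Bernoulli$(p)$ draws, a given pair disagrees with probability $\le|\tilde p-p|$, so by a union bound over the steps all pairs agree except on an event of probability $\le\tfrac15$; conditioned on that event $\mathcal{M}$ and $\mathcal{N}$ return the same decision on $w$, correct with probability $\ge\tfrac23$, whence $\mathcal{M}$ is correct with probability $\ge\tfrac45\cdot\tfrac23>\tfrac12$. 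A constant number of independent repetitions then boosts the confidence above $\tfrac23$, and the running time stays $O(f^3)$, giving $L\in\mathbf{PTMA}[\alpha(\bar p,\log(5f)),O(f^3)]$.

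For~\ref{SNN_prop_c2}, given a PTM/A $\mathcal{M}$ with advice $\alpha(\bar p,\log f)$ deciding $L$ in time $f$, I would build $\mathcal{N}\in\text{SNN}[p]$ running in two phases on $w\in\Sigma^n$, after recovering $n$, $f(n)$ and $m:=\log f(n)$ from the validation bits. In Phase~1 the network draws $N=\Theta(f(n)^2)$ stochastic bits, maintains the running count $S$ in a stack neuron, and reads off from $\hat p=S/N$ a candidate word $\tilde s$ for the first $m$ bits of $\bar p$; since resolving $m=\log f(n)$ digits of $p$ amounts to locating $p$ within a window of width of order $1/f(n)$, Hoeffding's inequality ensures $\tilde s=\alpha(\bar p,\log f)(n)$ with probability $\ge 1-\varepsilon$ for any prescribed constant $\varepsilon$. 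In Phase~2 the network extracts fair bits from further stochastic bits by von~Neumann's procedure run within a budget of $O(f(n))$ source bits, and simulates $\mathcal{M}$ on $w$ with advice $\tilde s$ in real time, exactly as in the proof of Proposition~\ref{ANN_prop}--\ref{ANN_prop_c2}. When $\tilde s$ is correct and the extractor does not overrun its budget (whose failure probability is exponentially small in $f(n)$), $\mathcal{N}$ and $\mathcal{M}$ have the same decision distribution on $w$, correct with probability $\ge\tfrac23$; overall $\mathcal{N}$ is correct with probability $\ge(1-\varepsilon)\tfrac23>\tfrac12$, and a constant number of repetitions brings this above $\tfrac23$. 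Phase~1 costs $O(f^2)$ steps, Phase~2 costs $O(f)$ steps, the constantly many repetitions keep the total $O(f^2)$, and every phase has length depending only on $n$, so $\mathcal{N}$ is a decider of the right form and $L\in\mathbf{SNN}[p,O(f^2)]$.

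The hard part will be the bookkeeping of Phase~1 in~\ref{SNN_prop_c2}: I must verify that resolving $p$ to the stated precision really does pin down the first $\log f(n)$ advice bits, which requires handling the ``boundary'' case in which $p$ lies abnormally close to a cut-point of the encoding --- here the Cantor-type structure of $\Delta$ (consecutive cylinders being separated by gaps, as $\delta_4$ uses only two of the four digit values) is what is exploited --- and I must choose the encoding of the probability advice so that the sample size is genuinely $\Theta(f^2)$ rather than a larger power of $f$. A secondary subtlety, absent from the analog and evolving cases, is that an SNN must halt within a number of steps fixed \emph{independently of its random draws}; hence the unbounded von~Neumann extractor has to be replaced by a worst-case-bounded variant whose rare failures are absorbed into the $\tfrac13$ error slack, while the degenerate case in which the stochastic cell carries no randomness is either trivial or excluded.
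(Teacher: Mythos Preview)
Your proposal is correct and matches the paper's approach in both directions: approximate Bernoulli($p$) from the advice prefix and couple over the $f(n)$ steps via a union bound for~\ref{SNN_prop_c1}; sample $\Theta(f^2)$ biased bits to estimate the advice, extract fair coins by von~Neumann's trick, and simulate $\mathcal{M}$ for~\ref{SNN_prop_c2}. The only differences are that the paper uses Chebyshev rather than Hoeffding for the concentration step and does not discuss the boundary issue you flag (it silently works with $\delta_2$ there), so your treatment of that point is if anything more careful than the paper's.
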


\begin{proof}

\ref{SNN_prop_c1} Let $L \in \mathbf{SNN} \left[ p, f \right]$. Then, there exists some $\text{SNN}[p]$ $\mathcal{N}$ deciding $L$ in time $f$. Note that the stochastic network $\mathcal{N}$ can be considered as a classical rational-weighted RNN with an additional input cell $x_2$. Since $\mathcal{N}$ has rational weights, it can be noticed that up to time $f(n)$, the activation values of its neurons are always representable by rationals of $O(f(n))$ bits. 
Now, consider Procedure~\ref{algo3} below. This procedure can then be simulated by some PTM/A $\mathcal{M}$ using advice $\alpha(\bar p, \log(5f))$ in time $O(f^3)$, as described in the proof of Proposition~\ref{ANN_prop}. 

It remains to show that $\mathcal{N}$ and $\mathcal{M}$ decide the same language $L$. For this purpose, consider a hypothetical device $\mathcal{M}'$ working as follows: at each time $t$, $\mathcal{M}'$ takes the sequence of bits $\bar b$ generated by Procedure~\ref{algo3} and concatenates it with some infinite sequence of bits $\bar b' \in \Sigma^\omega$ drawn independently with probability $\tfrac{1}{2}$, thus producing the infinite sequence $\bar b \bar b' \in \Sigma^\omega$. Then, $\mathcal{M}'$ generates a bit $c'_t$ iff $\bar b \bar b' <_{lex} \bar p$, which happens precisely with probability $p$, since $p = \delta_2(\bar p)$~\cite{AroraBarak06}. Finally, $\mathcal{M}'$ simulates the behavior of $\mathcal{N}$ at time $t$ using the stochastic bit $x_2^t = c'_t$. Clearly, $\mathcal{M}'$ and $\mathcal{N}$ produce random bits with same probability $p$, behave in the same way, and thus decide the same language $L$. We now evaluate the error probability of $\mathcal{M}$ at deciding $L$, by comparing the behaviors of $\mathcal{M}$ and $\mathcal{M}'$. Let $w \in \Sigma^n$ be some input and let
$$
\bar p_{\mathcal{M}} = \alpha(\bar p, \log(5f))(n) = p_0 p_1 \cdots p_{\log(5f(n))-1} \text{ ~and~ } p_{\mathcal{M}} = \delta_2(\bar p_{\mathcal{M}}) .
$$
According to Procedure~\ref{algo3}, at each time step $t$, the machine $\mathcal{M}$ generates $c_t = 1$ iff $\bar b <_{lex} \bar p_{\mathcal{M}}$, which happens precisely with probability $p_{\mathcal{M}}$, since $p_{\mathcal{M}} = \delta_2(\bar p_{\mathcal{M}})$~\cite{AroraBarak06}. On the other hand, $\mathcal{M}'$ generates $c'_t = 1$, with probability $p$, showing that $\mathcal{M}$ and $\mathcal{M}'$ might differ in their decisions. 
Since $\bar p_{\mathcal{M}}$ is a prefix of $\bar p$, it follows that $p_{\mathcal{M}} \leq p$ and
$$
p - p_{\mathcal{M}} = \sum_{i=\log(5f(n))}^{\infty} \frac{p_{i}}{2^{i+1}} \leq \frac{1}{2^{\log(5f(n))}} = \frac{1}{5f(n)} .
$$
In addition, the bits $c_t$ and $c'_t$ are generated by $\mathcal{M}$ and $\mathcal{M}'$ based on the sequences $\bar b$ and $\bar b \bar b'$ satisfying $\bar b <_{lex} \bar b \bar b'$. Hence,
$$
\mathrm{Pr}(c_t \neq c'_t) = \mathrm{Pr}(\bar p_{\mathcal{M}} <_{lex} \bar b \bar b' <_{lex} \bar p) = p - p_{\mathcal{M}} \leq \frac{1}{5f(n)} .
$$
By a union bound argument, the probability that the sequences $\bar c = c_0 c_1 \cdots c_{f(n)-1}$ and $\bar c' = c'_0 c'_1 \cdots c'_{f(n)-1}$ generated by $\mathcal{M}$ and $\mathcal{M}'$ differ satisfies
$$
\mathrm{Pr}(\bar c \neq \bar c') \leq \frac{1}{5f(n)} f(n) = \frac{1}{5} \text{ ~and thus~ } \mathrm{Pr}(\bar c = \bar c') \geq 1 - \frac{1}{5} .
$$
Since $\mathcal{M}'$ classifies $w$ correctly with probability at least $\tfrac{2}{3}$, it follows that $\mathcal{M}$ classifies $w$ correctly with probability at least $(1 - \frac{1}{5}) \tfrac{2}{3} > \tfrac{8}{15} > \tfrac{1}{2}$. 
This probability can be increased above $\tfrac{2}{3}$ by repeating Procedure~\ref{algo3} a constant number of time and taking the majority of the decisions as output~\cite{AroraBarak06}. Consequently, the devices $\mathcal{M}$, $\mathcal{M}'$ and $\mathcal{N}$ all decide the same language $L$, and therefore, $L \in \mathbf{PTMA} \left[ \alpha(\bar p, \log(5f)), O(f^3) \right]$.

\begin{algorithm}[h!]
\DontPrintSemicolon
\SetKwInOut{Input}{Inputs}
\smallskip
\KwInput{input $w \in \Sigma^n$}
{Query the advice $\alpha(\bar p, \log(cf))(n) = \bar p_{\mathcal{M}} = p_0 p_1 \cdots p_{\log(5f(n))-1}$}\;
\For{$t = 0, 1, \dots, f(n) - 1$}{
    {Draw a sequence of independent fair bits $\bar b := b_0 b_1 \cdots b_{\log(5f(n))-1}$}\;
    \lIf{$\bar b <_{lex} \bar p_{\mathcal{M}}$}{
    $c_t = 1$
    }
    \lElse{
    $c_t = 0$
    }
    {Simulate network $\mathcal{N} |_{c f(n)}$ at time $t$ with $x_2^t = c_t$}\;
}
\Return Output of $\mathcal{N} |_{c f(n)}$ over $w$ at time step $f(n)$
\caption{}
\label{algo3}
\end{algorithm}


\ref{SNN_prop_c2} Let $L \in \mathbf{PTMA}[\alpha(\bar p, \log(f)), f]$. Then, there exists some PTM/log(A) $\mathcal{M}$ with logarithmic advice $\alpha(\bar p, \log(f))$ deciding $L$ in time $f$. 
For simplicity purposes, let the advice of $\mathcal{M}$ be denoted by $\bar p = p_0 \cdots p_{\log(f(n))-1}$ ($\bar p$ is not anymore the binary expansion of $p$ from now on). 
Now, consider Procedure~\ref{algo4} below. The first for loop computes an estimation $\bar p'$ of the advice $\bar p$ defined by
$$
\bar p' = \delta_2^{-1} ( p' ) [0:\log(f(n))-1] = p'_0 \cdots p'_{\log(f(n))-1} 
$$
where 
$$
p' = \frac{1}{k(n)} \sum_{i=0}^{k(n)-1} b_i \text{ ~and~ } k(n) = \ulcorner 10 p (1-p) f^2(n) \urcorner \,
$$
and the $b_i$ are drawn according to a Bernouilli distribution of parameter $p$.
The second for loop computes a sequence of random choices
$$
\bar c = c_0 \cdots c_{f(n)-1}
$$
using von Neumann's trick to simulate a fair coin with a biased one~\cite{AroraBarak06}. The third loop simulates the behavior of the PTM/log(A) $\mathcal{M}$ using the alternative advice $\bar p'$ and the sequence of random choices $\bar c$. This procedure can clearly be simulated by some $\text{SNN}[p]$ $\mathcal{N}$ in time $O(k + 2f) = O(f^2)$, where the random samples of bits are given by the stochastic cell and the remaining recursive instructions simulated by a rational-weighted sub-network. 

It remains to show that $\mathcal{M}$ and $\mathcal{N}$ decide the same language $L$. For this purpose, we estimate the error probability of $\mathcal{N}$ at deciding language $L$. First, we show that $\bar p'$ is a good approximation of the advice $\bar p$ of $\mathcal{M}$.
Note that $\bar p' \neq \bar p$ iff $|p' - p| > \tfrac{1}{2^{\log (f(n))}} = \tfrac{1}{f(n)}$. 
Note also that by definition, $p' = \tfrac{\#1}{k(n)}$, where $\#1 \sim \mathcal{B}(k(n), p)$ is a binomial random variable of parameters $k(n)$ and $p$ with $\mathrm{E}(\#1) = k(n)p$ and $\mathrm{Var}(\#1) = k(n) p (1-p)$. It follows that
\begin{eqnarray*}
    \mathrm{Pr} \left( \bar p' \neq \bar p \right) & = & \mathrm{Pr}\left( |p' - p| > \frac{1}{f(n)} \right) \\
    & = & \mathrm{Pr}\left( |k(n)p' - k(n)p| > \frac{k(n)}{f(n)} \right) \\
    & = & \mathrm{Pr}\left( |\#1 - \mathrm{E}(\#1)| > \frac{k(n)}{f(n)} \right) .
\end{eqnarray*}
The Chebyshev's inequality ensures that
$$
\mathrm{Pr} \left( \bar p' \neq \bar p \right)  \leq  \frac{\mathrm{Var}(\#1) f^2(n)}{k^2(n)} = \frac{p (1-p) f^2(n)}{k(n)} < \frac{1}{10}
$$
since $k(n) > 10 p (1-p) f^2(n)$. 

We now estimate the source of error coming from the simulation of a fair coin by a biased one in Procedure~\ref{algo4} (loop of Line~\ref{trick}). Note that at each step $i$, if the two bits $bb'$ are different ($01$ or $10$), then $c_t$ is drawn with fair probability $\tfrac{1}{2}$, like in the case of the machine $\mathcal{M}$. Hence, the sampling process of $\mathcal{N}$ and $\mathcal{M}$ differ in probability precisely when all of the $K$ draws produce identical bits $bb'$ ($00$ or $11$). The probability that the two bits $bb'$ are identical at step $i$ is $p^2 + (1-p)^2$, and hence, the probability that the $K = \frac{- 4 - \log(f(n))}{\log(p^2+(1-p)^2)}$ independent draws all produce identical bits $bb'$ satisfies
$$
\left( p^2 + (1-p)^2 \right)^{K} \leq 2^{- 4 - \log(f(n))} = \frac{1}{16f(n)}.
$$
by using the fact that $x^{1/\log(x)} \geq 2$.
By a union bound argument, the probability that some $c_t$ in the sequence $c_0 \cdots c_{f(n)-1}$ is not drawn with a fair probability $\tfrac{1}{2}$ is bounded by $\tfrac{1}{16}$. Equivalently, the probability that all random randoms bits $c_t$ of the sequence $c_0 \cdots cs_{f(n)-1}$ are drawn with fair probability $\tfrac{1}{2}$ is at least $\tfrac{15}{16}$. 

To safely estimate the error probability of $\mathcal{N}$, we restrict ourselves to situations when $\mathcal{M}$ and $\mathcal{N}$ behave the same, and assume that $\mathcal{N}$ always makes errors otherwise. These situations happen when $\mathcal{M}$ and $\mathcal{N}$ use the same advice as well as the same fair probability for their random processes. 
These two events are independent and of probability at least $\tfrac{9}{10}$ and at least $\tfrac{15}{16}$, respectively. Hence, $\mathcal{M}$ and $\mathcal{N}$ agree on any input $w$ with probability at least $\tfrac{9}{10} \cdot \tfrac{15}{16} > \tfrac{4}{5}$. Consequently, the probability that $\mathcal{N}$ decides correctly whether $w \in L$ or not is bounded by $\tfrac{4}{5} \cdot \tfrac{2}{3} > \tfrac{1}{2}$. As before, this probability can be made larger than $\tfrac{2}{3}$ by repeating Procedure~\ref{algo4} a constant number of times and taking the majority of the decisions as output~\cite{AroraBarak06}. This shows that $L(\mathcal{N}) = L(\mathcal{M}) = L$, and therefore $L \in \mathbf{SNN} \left[ p,O(f^2) \right]$.  
\end{proof}

\begin{algorithm}[h!]
\DontPrintSemicolon
\SetKwInOut{Input}{Inputs}
\smallskip
\KwInput{input $w \in \Sigma^n$}
\For{$i = 0, \dots, k(n) := \ulcorner 10 p (1-p) f^2(n) \urcorner$}{
    {Draw a random bit $b_i$ with probability $p$}\;
    }
{Compute the estimation of the advice of $\mathcal{M}$ $\bar p' = p'_0 \cdots p'_{\log(f(n))-1} = \delta_2^{-1} (\frac{1}{k} \sum_{i=0}^{k(n)-1} b_i)[0:\log(f(n))-1]$}\;
\For{$t = 0, \dots, f(n)-1$}{
    {$c_t = 0$;}\;
    \For{$i = 0, \dots, \ulcorner \frac{- 4 - \log(f(n))}{\log(p^2+(1-p)^2)}\urcorner$}{ \label{trick}
        {Draw $2$ random bits $b$ and $b'$ with probability $p$}\;
        \lIf{$bb' = 01$}{$c_t = 0$; break}
        \lIf{$bb' = 10$}{$c_t = 1$; break}
    }
}
\For{$t = 0, \dots, f(n)-1$}{
    {Simulate the PTM/log(A) $\mathcal{M}$ using the advice
        $\bar p' = p'_0 \cdots p'_{\log(f(n))-1}$
    and the sequence of random choices
        $\bar c = c_0 \cdots c_{f(n)-1}$
    }\;
}
\Return Output of $\mathcal{M}$ over $w$ at time step $f(n)$
\caption{}
\label{algo4}
\end{algorithm}

The class of languages decided in polynomial time by stochastic networks using real probabilities and Turing machines using related advices are the same. In this case, however, the length of the advice are logarithmic instead of polynomial.

\begin{corollary}
\label{SNN_cor}
Let $p \in \Delta$ be some real probability and $P \subseteq \Delta$ be some set of real probabilities. 
\begin{enumerate}[label=(\roman*),itemsep=0pt]
\item \label{SNN_cor_c1}
$\mathbf{SNN}[p, \mathrm{poly}] = \mathbf{PTMA}[\alpha(\bar p, \mathrm{log}), \mathrm{poly}]$.
\item \label{SNN_cor_c2}
$\mathbf{SNN}[P, \mathrm{poly}] = \mathbf{PTMA}[\alpha(\bar P, \mathrm{log}), \mathrm{poly}]$.
\end{enumerate}
\end{corollary}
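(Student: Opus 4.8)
The plan is to mirror the structure of the proof of Corollary~\ref{ANN_cor}, using Proposition~\ref{SNN_prop} as the bridge between the network classes and the advised probabilistic Turing machine classes, while being careful about the logarithmic (rather than linear or polynomial) relation between advice length and computation time. For part~\ref{SNN_cor_c1}, I would argue both inclusions. For the forward inclusion, take $L \in \mathbf{SNN}[p,\mathrm{poly}]$; then $L \in \mathbf{SNN}[p,f]$ for some polynomial $f$, and Proposition~\ref{SNN_prop}--\ref{SNN_prop_c1} gives $L \in \mathbf{PTMA}[\alpha(\bar p,\log(5f)),O(f^3)]$. Since $f$ is a polynomial, $\log(5f(n)) = \log 5 + \log f(n) \le C\log n$ for a suitable constant $C$ and large $n$ (adjusting finitely many values if needed), so $\log(5f) \in \mathrm{log}$; and $O(f^3) \subseteq \mathrm{poly}$. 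Hence $L \in \mathbf{PTMA}[\alpha(\bar p,\mathrm{log}),\mathrm{poly}]$.

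For the converse inclusion, take $L \in \mathbf{PTMA}[\alpha(\bar p,\mathrm{log}),\mathrm{poly}]$; then there are $g \in \mathrm{log}$ and a polynomial time bound $t \in \mathrm{poly}$ with $L \in \mathbf{PTMA}[\alpha(\bar p, g), t]$. To apply Proposition~\ref{SNN_prop}--\ref{SNN_prop_c2}, which requires the advice to be of the exact form $\alpha(\bar p, \log f)$ for the same $f$ that bounds the running time, I would proceed as follows: write $g(n) = C\log n$, and choose $f$ to be a polynomial large enough that simultaneously $\log f(n) \ge g(n)$ for all $n$ (so that the advice $\alpha(\bar p,g)(n)$ is a prefix of $\alpha(\bar p,\log f)(n)$) and $f(n) \ge t(n)$ (so that $f$ bounds the running time). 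By the consistency property of the PTM/A, feeding the longer advice $\alpha(\bar p,\log f)(n)$ instead of $\alpha(\bar p,g)(n)$ does not change the decision, so $L \in \mathbf{PTMA}[\alpha(\bar p,\log f), f]$. Proposition~\ref{SNN_prop}--\ref{SNN_prop_c2} then yields $L \in \mathbf{SNN}[p,O(f^2)] \subseteq \mathbf{SNN}[p,\mathrm{poly}]$.

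Part~\ref{SNN_cor_c2} follows from part~\ref{SNN_cor_c1} by taking the union over all $p \in P$, exactly as in Corollary~\ref{ANN_cor}. The main technical obstacle I anticipate is the bookkeeping in the converse direction: unlike the analog case, one cannot simply invoke the consistency property to equate the two time bounds $f'$ and $f$, because here the advice length $\log f$ and the running time $f$ are coupled through a single $f$ in the statement of Proposition~\ref{SNN_prop}--\ref{SNN_prop_c2}; one must therefore choose one polynomial $f$ that dominates both the original running time and the exponential of the original advice length, and then verify that enlarging the advice is harmless (prefix property plus consistency). Everything else is the same routine of absorbing $\log(5f)$, $O(f^3)$, and $O(f^2)$ into $\mathrm{log}$ and $\mathrm{poly}$, respectively.
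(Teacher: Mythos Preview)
Your proposal is correct and follows exactly the approach the paper intends: the paper's own proof consists of the single sentence ``The proof is similar to that of Corollary~\ref{ANN_cor},'' and what you have written is precisely that similar proof, spelled out with the appropriate replacement of Proposition~\ref{ANN_prop} by Proposition~\ref{SNN_prop} and of polynomial advice lengths by logarithmic ones. Your extra care in the converse direction---choosing a single polynomial $f$ so that $\log f$ dominates the given logarithmic advice size and $f$ dominates the given polynomial running time, then invoking the consistency property---is exactly the adaptation needed because Proposition~\ref{SNN_prop}\ref{SNN_prop_c2} couples the advice length and the time bound through the same $f$; this is implicit in the paper's one-line proof but you have made it explicit.
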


\begin{proof}
%
The proof is similar to that of Corollary~\ref{ANN_cor}.
\end{proof}

\section{Hierarchies}
\label{sec:results}

In this section, we provide a refined characterization of the super-Turing computational power of analog, evolving, and stochastic neural networks based on the Kolmogorov complexity of their real weights, evolving weights, and real probabilities, respectively. More specifically, we show the existence of infinite hierarchies of classes of analog and evolving neural networks located between $\mathbf{P}$ and $\mathbf{P/poly}$. We also establish the existence of an infinite hierarchy of classes of stochastic neural networks between $\mathbf{BPP}$ and $\mathbf{BPP/log^*}$. Beyond proving the existence and providing examples of such hierarchies, we describe a generic way of constructing them based on classes of functions of increasing complexity.

Towards this purpose, we define the {\it Kolmogorov complexity} of a real number as stated in a related work~\cite{SiegelmannEtAl97}. Let $\mathcal{M}_U$ be a universal Turing machine, $f, g : \mathbb{N} \rightarrow \mathbb{N}$ be two functions, and $\alpha \in \Sigma^\omega$ be some infinite word. We say that $\alpha \in \bar K^f_g$ if there exists $\beta \in \Sigma^{\omega}$ such that, for all but finitely many $n$, the machine $\mathcal{M}_U$ with inputs $\beta[0:m-1]$ and $n$ will output $\alpha[0:n-1]$ in time $g(n)$, for all $m \geq f(n)$. In other words, $\alpha \in \bar K^f_g$ if its $n$ first bits can be recovered from the $f(n)$ first bits of some $\beta$ in time $g(n)$. The notion expressed its interest when $f(n) \leq n$, in which case $\alpha \in \bar K^f_g$ means that every $n$-long prefix of $\alpha$ can be compressed into and recovered from a smaller $f(n)$-long prefix of $\beta$. Given two classes of functions $\mathcal{F}$ and $\mathcal{G}$, we define $\bar K^{\mathcal{F}}_{\mathcal{G}} = \bigcup_{f \in \mathcal{F}} \bigcup_{g \in \mathcal{G}} \bar K^f_g$. Finally, for any real number $r \in \Delta$ with associated binary expansion $\bar r = \delta^{-1}_4(r) \in \Sigma^{\omega}$, we say that $r \in K^f_g$ (resp.~$r \in K^{\mathcal{F}}_{\mathcal{G}}$) iff $\bar r \in \bar K^f_g$ (resp.~$\bar r \in \bar K^{\mathcal{F}}_{\mathcal{G}}$). 

Given some set of functions $\mathcal{F} \subseteq \mathbb{N}^\mathbb{N}$, we say $\mathcal{F}$ is a class of \emph{reasonable advice bounds} if the following conditions hold:
\begin{itemize}[itemsep=0pt]
    \item Sub-linearity: for all $f \in \mathcal{F}$, then $f(n) \leq n$ for all $n \in \mathbb{N}$.
    \item Dominance by a polynomially computable function: for all $f \in \mathcal{F}$, there exists $g \in  \mathcal{F}$ such that $f \leq g$ and $g$ is computable in polynomial time.
    \item Closure by polynomial composition on the right: For all $f \in \mathcal{F}$ and for all $p \in \mathrm{poly}$, there exist $g \in \mathcal{F}$ such that $f \circ p \leq g$.
\end{itemize}

For instance, $\mathrm{log}$ is a class of reasonable advice bounds. All properties in this definition are necessary for our separation theorems. The first and second conditions are necessary to define Kolmogorov reals associated to advices of bounded size. The third condition comes from the fact that RNNs can access any polynomial number of bits from their weights during polynomial time of computation. Note that our definition is slightly weaker than that of Balc\'azar et al., who further assume that the class should be closed under $O(.)$~\cite{SiegelmannEtAl97}.

The following theorem relates non-uniform complexity classes, based on polynomial time of computation $\mathbf{P}$ and reasonable advice bounds $\mathcal{F}$, with classes of analog and evolving networks using weights inside $K_{\mathrm{poly}}^\mathcal{F}$ and $\bar K_{\mathrm{poly}}^\mathcal{F}$, respectively.

\begin{theorem}
\label{hierarchy_thm1}
Let $\mathcal{F}$ be a class of reasonable advice bounds, and let $K_{\mathrm{poly}}^\mathcal{F} \subseteq \Delta$ and $\bar K_{\mathrm{poly}}^\mathcal{F} \subseteq \Sigma^\omega$ be the sets of Kolmogorov reals associated with $\mathcal{F}$. Then
$$
\mathbf{P} / \mathcal{F}^* = \mathbf{ANN} \left[ K_{\mathrm{poly}}^\mathcal{F}, \mathrm{poly} \right] = \mathbf{ENN} \left[ \bar K_{\mathrm{poly}}^\mathcal{F}, \mathrm{poly} \right] .
$$
\end{theorem}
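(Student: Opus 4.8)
The plan is to prove the two equalities by a chain of inclusions, leveraging the tight characterizations already established in Corollaries~\ref{ANN_cor} and~\ref{ENN_cor}. Since those corollaries give $\mathbf{ANN}[K_{\mathrm{poly}}^{\mathcal F},\mathrm{poly}] = \mathbf{TMA}[\alpha(\bar K_{\mathrm{poly}}^{\mathcal F},\mathrm{poly}),\mathrm{poly}]$ and likewise $\mathbf{ENN}[\bar K_{\mathrm{poly}}^{\mathcal F},\mathrm{poly}] = \mathbf{TMA}[\alpha(\bar K_{\mathrm{poly}}^{\mathcal F},\mathrm{poly}),\mathrm{poly}]$ (note that $r\in K^f_g$ is defined precisely by $\bar r\in\bar K^f_g$, so the two sets of advice strings coincide), the middle and right expressions are literally the same class. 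Thus the whole theorem reduces to the single identity $\mathbf{P}/\mathcal F^* = \mathbf{TMA}[\alpha(\bar K_{\mathrm{poly}}^{\mathcal F},\mathrm{poly}),\mathrm{poly}]$, which I would prove by two inclusions.

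For the inclusion $\mathbf{P}/\mathcal F^* \subseteq \mathbf{TMA}[\alpha(\bar K_{\mathrm{poly}}^{\mathcal F},\mathrm{poly}),\mathrm{poly}]$: take $L\in\mathbf P/\mathcal F^*$ witnessed by $L'\in\mathbf P$ and a prefix advice $\alpha$ with $|\alpha(n)|=f(n)$ for some $f\in\mathcal F$. Because $\alpha$ is prefix and unbounded, $\bar r := \lim_n \alpha(n)\in\Sigma^\omega$ is well-defined and $\alpha = \alpha(\bar r, f)$. The point is that $\bar r$ is essentially trivial from the Kolmogorov standpoint: its first $n$ bits are recoverable from its first $f(n)$ bits (indeed from $\alpha(n)$ itself, once $f$ is replaced by a polynomially computable dominating $g\in\mathcal F$ so that the prefix can be extracted in polynomial time), so $\bar r\in\bar K^{g}_{\mathrm{poly}}\subseteq\bar K_{\mathrm{poly}}^{\mathcal F}$. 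Here I would use sub-linearity ($f(n)\le n$) to make the ``compression'' meaningful and dominance-by-a-polynomially-computable-function to get the decompression running in polynomial time; a subtlety is that the decompressor $\mathcal M_U$ must, on input a prefix of $\bar r$ of length $\ge f(n)$ and the number $n$, output $\bar r[0:n-1]$ — which is immediate since that prefix already contains those bits. Then the TM/A simulating $L'$ with advice $\alpha(\bar r, g)$ decides $L$ in polynomial time, giving the inclusion.

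For the reverse inclusion $\mathbf{TMA}[\alpha(\bar K_{\mathrm{poly}}^{\mathcal F},\mathrm{poly}),\mathrm{poly}] \subseteq \mathbf{P}/\mathcal F^*$: suppose $L$ is decided in time $p\in\mathrm{poly}$ by a TM/A $\mathcal M$ with advice $\alpha(\bar r, q)$ where $q\in\mathrm{poly}$ and $\bar r\in\bar K_{\mathrm{poly}}^{\mathcal F}$, so there are $f\in\mathcal F$, a polynomial-time decompressor, and $\beta\in\Sigma^\omega$ such that $\bar r[0:n-1]$ is computable from $\beta[0:f(n)-1]$ in polynomial time. The idea is to replace the long advice $\alpha(\bar r,q)(n) = \bar r[0:q(n)-1]$ by the short advice $\beta[0:f(q(n))-1]$: a polynomial-time machine $L''$ reads this short advice, decompresses it to recover $\bar r[0:q(n)-1]$, then simulates $\mathcal M$. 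The new advice has length $f(q(n))$, and by closure under polynomial composition on the right there is $h\in\mathcal F$ with $f\circ q\le h$; padding up to exactly $h(n)$ bits (possible since $\beta$ is an infinite string, and harmless) gives an advice of size exactly $h\in\mathcal F$. One must check the consistency/prefix conditions (ii)–(iii) in the definition of $\mathcal C/f^*$: since $\beta$ is fixed and the advice is the genuine prefix $\beta[0:h(n)-1]$, the prefix property holds, and the consistency property of $\mathcal M$ together with the fact that longer prefixes of $\beta$ still decompress to consistent prefixes of $\bar r$ yields (iii). Hence $L\in\mathbf P/h^*\subseteq\mathbf P/\mathcal F^*$.

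The main obstacle I anticipate is bookkeeping around the composition of polynomials and the ``reasonable advice bounds'' axioms: one needs the decompression time, the simulation time of $\mathcal M$, and the advice-query time all to stay polynomial while the advice length stays within $\mathcal F$, and this is exactly where sub-linearity, dominance by a polynomially computable function, and closure under right composition with polynomials each get used — so the proof is really a careful matching of these three axioms to three places in the argument, rather than anything deep. A secondary point requiring care is the precise form of $\mathcal M_U$ in the definition of $\bar K^f_g$ (it takes a prefix of $\beta$ \emph{and} the index $n$, and may use \emph{any} prefix of length $\ge f(n)$), which must be respected when both constructing witnesses in the forward direction and consuming them in the backward direction.
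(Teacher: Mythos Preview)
Your reduction of the theorem to the single identity
\[
\mathbf{P}/\mathcal F^* \;=\; \mathbf{TMA}\!\left[\alpha\big(\bar K_{\mathrm{poly}}^{\mathcal F},\mathrm{poly}\big),\mathrm{poly}\right]
\]
via Corollaries~\ref{ANN_cor} and~\ref{ENN_cor} is exactly right, and your argument for the inclusion $\mathbf{TMA}[\alpha(\bar K_{\mathrm{poly}}^{\mathcal F},\mathrm{poly}),\mathrm{poly}]\subseteq\mathbf P/\mathcal F^*$ (decompress the short $\beta$-prefix, then simulate) is essentially the paper's proof of that direction.

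The forward inclusion, however, has a genuine gap. You claim that the limit string $\bar r=\lim_n\alpha(n)$ itself lies in $\bar K_{\mathrm{poly}}^{\mathcal F}$, on the grounds that ``its first $n$ bits are recoverable from its first $f(n)$ bits\ldots which is immediate since that prefix already contains those bits.'' But this is false: by sub-linearity $f(n)\le n$, so a prefix of length $f(n)$ does \emph{not} contain $\bar r[0{:}n-1]$ in general. The definition of $\bar K^{f}_{g}$ requires the decompressor, given only $f(n)$ bits of some witness $\beta$, to output the first $n$ bits of the target string; for an arbitrary advice limit $\bar r\in\Sigma^\omega$ (which carries no computability assumption whatsoever) no such $\beta$ need exist with $f\in\mathcal F$. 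In short, an arbitrary $\bar r$ is typically \emph{in}compressible, so $\bar r\in\bar K_{\mathrm{poly}}^{\mathcal F}$ cannot be asserted.

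What the paper does instead is to manufacture a new string $\bar s\in\bar K_{\mathrm{poly}}^{\mathcal F}$ from $\bar r$: pick a polynomially computable $g\in\mathcal F$ with $f\le g$, cut $\bar r$ into blocks $\bar r_i=\bar r[g(i-1){:}g(i)-1]$, and set $\bar s=\bar r_0\,0\,\bar r_1\,0\,\bar r_2\,0\cdots$. The inserted $0$'s at computable positions make $\bar s$ genuinely compressible --- its first $n$ bits are determined by $g(n)$ bits of $\bar r$ together with the polynomial-time computation of the block boundaries --- while a linear-length prefix $\bar s[0{:}2n-1]$ still suffices to reconstruct $\bar r[0{:}g(n)-1]$ and hence the original advice $\alpha(n)$. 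This padding trick is the missing idea; once you have it, the rest of your outline goes through.
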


\begin{proof}
We prove the first equality. By definition, $\mathbf{P} / \mathcal{F}^*$ is the class of languages decided in polynomial time by some TM/A using any possible prefix advice of length $f \in \mathcal{F}$, namely,
$$
\mathbf{P} / \mathcal{F}^* = \mathbf{TMA} \left[ \alpha \big( \Sigma^\omega, \mathcal{F} \big), \mathrm{poly} \right] .
$$
In addition, Corollary~\ref{ANN_cor} ensures that
$$
\mathbf{ANN} \left[ K_{\mathrm{poly}}^\mathcal{F}, \mathrm{poly} \right] = \mathbf{TMA} \left[ \alpha \big( \bar K_{\mathrm{poly}}^\mathcal{F}, \mathrm{poly} \big), \mathrm{poly} \right] .
$$ 
Hence, we need to show that
\begin{equation}
\label{desired_property_1}
\mathbf{TMA} \left[ \alpha \big( \Sigma^\omega, \mathcal{F} \big), \mathrm{poly} \right] = \mathbf{TMA} \left[ \alpha \big( \bar K_{\mathrm{poly}}^\mathcal{F}, \mathrm{poly} \big), \mathrm{poly} \right] .
\end{equation}
\noindent Equation~\ref{desired_property_1} can be understood as follows: in polynomial time of computation, the TM/As using small advices (of size $\mathcal{F}$) are equivalent to those using larger but compressible advices (of size $\mathrm{poly}$ and inside $\bar K_{\mathrm{poly}}^\mathcal{F}$).
 
For the sake of simplicity, we suppose that the polynomial time of computation of the TM/As are clear from the context by introducing the following abbreviations:
\begin{eqnarray*}
\mathbf{TMA} \left[ \alpha \big( \Sigma^\omega, \mathcal{F} \big), \mathrm{poly} \right] & := & \mathbf{TMA} \left[ \alpha \big( \Sigma^\omega, \mathcal{F} \big) \right] \\
\mathbf{TMA} \left[ \alpha \big( \bar K_{\mathrm{poly}}^\mathcal{F}, \mathrm{poly} \big), \mathrm{poly} \right] & := & \mathbf{TMA} \left[ \alpha \big( \bar K_{\mathrm{poly}}^\mathcal{F}, \mathrm{poly} \big) \right] .    
\end{eqnarray*}

We show the backward inclusion of Eq.~(\ref{desired_property_1}). Let 
$$
L \in \mathbf{TMA} \left[ \alpha \big( \bar K_{\mathrm{poly}}^\mathcal{F}, \mathrm{poly} \big) \right] .
$$
Then, there exists some TM/A $\mathcal{M}$ using advice $\alpha(\bar r, p_1)$, where $\bar r \in \bar K_{\mathrm{poly}}^\mathcal{F}$ and $p_1 \in \mathrm{poly}$, deciding $L$ in time $p_2 \in \mathrm{poly}$. Since $\bar r \in \bar K_{\mathrm{poly}}^\mathcal{F}$, there exist $\beta \in \Sigma^\omega$, $f \in \mathcal{F}$ and $p_3 \in \mathrm{poly}$ such that the $p_1(n)$ bits of $\bar r$ can be computed from the $f(p_1(n))$ bits of $\beta$ in time $p_3(p_1(n))$. Hence, the TM/A $\mathcal{M}$ can be simulated by the TM/A $\mathcal{M}'$ with advice $\alpha(\beta, f \circ p_1)$ working as follows: on every input $w \in \Sigma^n$, $\mathcal{M}'$ first queries its advice string $\beta_0 \beta_1 \cdots \beta_{f(p_1(n))-1}$, then reconstructs the advice $r_0 r_1 \dots r_{p_1(n)-1}$ in time $p_3(p_1(n))$, and finally simulates the behavior of $\mathcal{M}$ over input $w$ in real time. Clearly, $L(\mathcal{M}') = L(\mathcal{M}) = L$. In addition, $p_3 \circ p_1 \in \mathrm{poly}$, and since $\mathcal{F}$ is a class of reasonable advice bounds, there is $g \in \mathcal{F}$ such that $f \circ p_1 \leq g$. Therefore,
$$
L \in \mathbf{TMA} \left[ \alpha ( \beta, g) \right] \subseteq \mathbf{TMA} \left[ \alpha \big( \Sigma^\omega, \mathcal{F} \big) \right] .
$$

We now prove the forward inclusion of Eq.~(\ref{desired_property_1}). Let 
$$
L \in \mathbf{TMA} \left[ \alpha \big( \Sigma^\omega, \mathcal{F} \big) \right] .
$$
Then, there exists some TM/A $\mathcal{M}$ with advice $\alpha(\bar r, f)$, with $\bar r \in\Sigma^\omega$ and $f \in \mathcal{F}$, deciding $L$ in time $p_1 \in \mathrm{poly}$. Since $\mathcal{F}$ is a class of reasonable advice bounds, there exists $g \in \mathcal{F}$ such that $f \leq g$ and $g$ is computable in polynomial time. We now define $\bar s \in \bar K_{\mathrm{poly}}^\mathcal{F}$ using $\bar r$ and $g$ as follows:
for each $i \geq 0$, let $\bar r_i$ be the sub-word of $\bar r$ defined by
$$
\bar r_i = 
\begin{cases}
    r_0 r_1 \cdots r_{g(0)-1} & \text{if $i = 0$} \\
    r_{g(i-1)} r_{g(i-1)+1} \cdots r_{g(i)-1} & \text{if $i > 0$} \\
\end{cases}
$$
and let
$$ 
\bar s = \bar r_0 0 \bar r_1 0 \bar r_2 0 \cdots .
$$ 
Given the $g(n)$ first bits of $\bar r$, we can build the $g(n) + n \geq n$ first bits of $\bar s$ by computing  $g(i)$ and the corresponding block $\bar r_i$ (which can be empty) for all $i \leq n$, and then intertwining those with $0$'s. This process can be done in polynomial time, since $g$ is computable is polynomial time. Therefore, $\bar s \in \bar K_{\mathrm{poly}}^\mathcal{F}$. 

Let $q(n) = 2n$. Since $\mathcal{F}$ is a class of reasonable advice bounds, $g(n)\leq n$, and thus $q(n) = 2n \geq g(n) + n$. Now, consider the TM/A $\mathcal{M}'$ with advice $\alpha(\bar s, q)$ working as follows. On every input $w \in \Sigma^n$, the machine $\mathcal{M}'$ first queries its advice $\alpha(\bar s, q)(n) = s_0 s_1 \cdots s_{q(n) - 1}$. Then, $\mathcal{M}'$ reconstructs the string $r_0 r_1 \cdots r_{g(n)-1}$ by computing $g(i)$ and then removing $n$ $0$'s from $\alpha(\bar s, q)(n)$ at positions $g(i) + i$, for all $i\leq n$. This is done in polynomial time, since $g$ is computable in polynomial time. Finally, $\mathcal{M}'$ simulates $\mathcal{M}$ with advice $r_0 r_1 \cdots r_{g(n)-1}$ in real time. Clearly, $L(\mathcal{M}') = L(\mathcal{M}) = L$. Therefore,
$$
L \in \mathbf{TMA} \left[ \alpha(\bar s, q) \right] \subseteq \mathbf{TMA} \left[ \alpha \big( \bar K_{\mathrm{poly}}^\mathcal{F}, \mathrm{poly} \big) \right] .
$$

The property that have just been established together with Corollary~\ref{ENN_cor} proves the second equality.
\end{proof}


We now prove the analogous of Theorem~\ref{hierarchy_thm1} for the case of probabilistic complexity classes and machines. In this case, however, the class of advice bounds does not anymore correspond exactly to the Kolmogorov space bounds of the real probabilities. Instead, a logarithmic correcting factor needs to be introduced. Given some class of functions $\mathcal{F}$, we let $\mathcal{F} \circ \log$ denote the set $\lbrace f \circ \log \mid f \in \mathcal{F}\rbrace$.


\begin{theorem}
\label{hierarchy_thm2}
Let $\mathcal{F}$ be a class of reasonable advice bounds, 
then 
$$
\mathbf{BPP} / (\mathcal{F} \circ \log)^* = \mathbf{SNN} \left[ K_{\mathrm{poly}}^{\mathcal{F}}, \mathrm{poly} \right] .
$$
\end{theorem}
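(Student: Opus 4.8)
The plan is to mirror the structure of the proof of Theorem~\ref{hierarchy_thm1}, replacing deterministic Turing machines with advice by probabilistic ones and inserting the logarithmic correcting factor where needed. First I would rewrite both sides as classes of probabilistic Turing machines with advice. By definition of the starred non-uniform classes, $\mathbf{BPP} / (\mathcal{F} \circ \log)^* = \mathbf{PTMA} \left[ \alpha \big( \Sigma^\omega, \mathcal{F} \circ \log \big), \mathrm{poly} \right]$, since every prefix unbounded advice of size $f \circ \log$ is of the form $\alpha(\bar\beta, f \circ \log)$ for some $\bar\beta \in \Sigma^\omega$. On the other side, Corollary~\ref{SNN_cor}--\ref{SNN_cor_c2} gives $\mathbf{SNN} \left[ K_{\mathrm{poly}}^{\mathcal{F}}, \mathrm{poly} \right] = \mathbf{PTMA} \left[ \alpha \big( \bar K_{\mathrm{poly}}^{\mathcal{F}}, \mathrm{log} \big), \mathrm{poly} \right]$. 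Hence, writing as before the polynomial time bound implicitly, it suffices to prove
\begin{equation}
\label{desired_property_2}
\mathbf{PTMA} \left[ \alpha \big( \Sigma^\omega, \mathcal{F} \circ \log \big) \right] = \mathbf{PTMA} \left[ \alpha \big( \bar K_{\mathrm{poly}}^{\mathcal{F}}, \mathrm{log} \big) \right] ,
\end{equation}
which is the probabilistic, logarithmic analogue of Equation~(\ref{desired_property_1}): in polynomial time, PTM/As using small advices (of size $\mathcal{F} \circ \log$) are exactly as powerful as those using larger but compressible logarithmic advices (of size $\mathrm{log}$ and inside $\bar K_{\mathrm{poly}}^{\mathcal{F}}$). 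The crucial, recurring observation is that the advice manipulations below are entirely deterministic: the simulating machine reconstructs the needed advice bits and then runs the simulated machine verbatim, so the random choices and acceptance probabilities are left untouched.

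For the backward inclusion of~(\ref{desired_property_2}), let $L$ be decided in time $p_2 \in \mathrm{poly}$ by a PTM/A $\mathcal{M}$ with advice $\alpha(\bar r, \ell)$, where $\ell \in \mathrm{log}$, say $\ell(n) = c \log n$, and $\bar r \in \bar K_{\mathrm{poly}}^{\mathcal{F}}$. By definition of $\bar K_{\mathrm{poly}}^{\mathcal{F}}$, there are $\beta \in \Sigma^\omega$, $f \in \mathcal{F}$ and $p_3 \in \mathrm{poly}$ such that the first $N$ bits of $\bar r$ are computable from the first $f(N)$ bits of $\beta$ in time $p_3(N)$, for all but finitely many $N$; taking $N = \ell(n) = c \log n$ recovers the advice word $\alpha(n)$ of $\mathcal{M}$ from the first $f(c \log n)$ bits of $\beta$ in time $p_3(c \log n) = \mathrm{poly}(n)$. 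Since $\mathcal{F}$ is closed under polynomial composition on the right, there is $f' \in \mathcal{F}$ with $f(cm) \le f'(m)$ for all $m$, whence $f(c \log n) \le (f' \circ \log)(n)$. The PTM/A $\mathcal{M}'$ with advice $\alpha(\beta, f' \circ \log)$ that, on input $w \in \Sigma^n$, queries its advice, reconstructs $\alpha(n)$, and then simulates $\mathcal{M}$ on $w$ in real time, runs in polynomial time and satisfies $L(\mathcal{M}') = L(\mathcal{M}) = L$ with the same error bound (finitely many exceptional $n$ being hard-coded). As $f' \circ \log \in \mathcal{F} \circ \log$, we get $L \in \mathbf{PTMA} \left[ \alpha \big( \Sigma^\omega, \mathcal{F} \circ \log \big) \right]$.

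For the forward inclusion, let $L$ be decided in time $p_1 \in \mathrm{poly}$ by a PTM/A $\mathcal{M}$ with advice $\alpha(\bar r, f \circ \log)$, where $f \in \mathcal{F}$ and $\bar r \in \Sigma^\omega$. Fix $g \in \mathcal{F}$ with $f \le g$ and $g$ polynomial-time computable, cut $\bar r$ into the blocks $\bar r_0 = r_0 \cdots r_{g(0)-1}$ and $\bar r_i = r_{g(i-1)} \cdots r_{g(i)-1}$ for $i > 0$, and set $\bar s = \bar r_0\, 0\, \bar r_1\, 0\, \bar r_2\, 0 \cdots$, exactly as in the proof of Theorem~\ref{hierarchy_thm1}. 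Since $g$ is polynomial-time computable and $g(m) \le m$ by sub-linearity, the first $m$ bits of $\bar s$ are obtained from the first $g(m)$ bits of $\bar r$ in polynomial time (compute $g(i)$ for $i \le m$, slice out the blocks, re-insert the $0$'s), so $\bar s \in \bar K_{\mathrm{poly}}^{\mathcal{F}}$. Now pick a polynomial $q$ (e.g.\ $q(n) = n^3$) with $(\log \circ q)(n) \ge g(\log n) + \log n + 1$, which is possible because $g(\log n) + \log n + 1 = O(\log n)$ and $\mathrm{log}$ contains $n \mapsto C \log n = (\log \circ (n \mapsto n^C))(n)$ for every $C$. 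The PTM/A $\mathcal{M}'$ with advice $\alpha(\bar s, \log \circ q)$ proceeds, on input $w \in \Sigma^n$, by querying these $(\log \circ q)(n)$ bits of $\bar s$ — which contain the blocks $\bar r_0, \dots, \bar r_{\log n}$ together with their separating $0$'s — deleting the $0$'s at the computable positions $g(i) + i$ to recover $r_0 \cdots r_{g(\log n)-1}$, which includes $\alpha(n) = r_0 \cdots r_{f(\log n)-1}$ since $f \le g$, and then simulating $\mathcal{M}$ on $w$. This runs in polynomial time, leaves the randomness of $\mathcal{M}$ intact, and decides $L$ with the same error bound; since $\bar s \in \bar K_{\mathrm{poly}}^{\mathcal{F}}$ and $\log \circ q \in \mathrm{log}$, we get $L \in \mathbf{PTMA} \left[ \alpha \big( \bar K_{\mathrm{poly}}^{\mathcal{F}}, \mathrm{log} \big) \right]$.

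I expect the only real difficulty to be the bookkeeping of the logarithmic correcting factor, namely checking that it is absorbed on both sides: from right to left, that $\log \circ q \in \mathrm{log}$ for every polynomial $q$ (so the compressible advice stays logarithmic), and from left to right, that $f(c \log n) \le (f' \circ \log)(n)$ for some $f' \in \mathcal{F}$ (so the recovered source advice stays of size $\mathcal{F} \circ \log$) — this last step being exactly where closure of $\mathcal{F}$ under polynomial composition on the right is used, and being the reason the statement features $\mathcal{F} \circ \log$ rather than $\mathcal{F}$. Combining~(\ref{desired_property_2}) with the two identities of the first paragraph then yields $\mathbf{BPP} / (\mathcal{F} \circ \log)^* = \mathbf{SNN} \left[ K_{\mathrm{poly}}^{\mathcal{F}}, \mathrm{poly} \right]$.
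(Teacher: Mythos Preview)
Your proposal is correct and follows essentially the same route as the paper's proof: reduce to the equality of PTMA classes via Corollary~\ref{SNN_cor}, then reuse the padding construction $\bar s = \bar r_0 0 \bar r_1 0 \cdots$ from Theorem~\ref{hierarchy_thm1} for one inclusion and the decompression of $\bar r$ from $\beta$ for the other. Your handling of the logarithmic correcting factor is in fact slightly more explicit than the paper's --- you spell out why $f(c\log n)$ is bounded by some $(f'\circ\log)(n)$ with $f'\in\mathcal{F}$ via closure under right polynomial composition, whereas the paper writes $f\circ c\log$ and tacitly treats it as a member of $\mathcal{F}\circ\log$.
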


\begin{proof}
By definition, $\mathbf{BPP} / (\mathcal{F} \circ \log)^*$ is the class of languages decided by PTM/A using any possible prefix advice of length $f \in \mathcal{F} \circ \log$:
$$
\mathbf{BPP} / (\mathcal{F} \circ \log)^* = \mathbf{PTMA} \left[ \alpha \big( \Sigma^\omega, \mathcal{F} \circ \log \big), \mathrm{poly} \right] .
$$


\noindent Moreover, Corollary~\ref{SNN_cor} ensures that
$$
\mathbf{SNN} \left[ K_{\mathrm{poly}}^{\mathcal{F}}, \mathrm{poly} \right] = \mathbf{PTMA} \left[ \alpha  \big( \bar K_{\mathrm{poly}}^{\mathcal{F}}, \mathrm{log} \big), \mathrm{poly} \right] .
$$ 
Hence, we need to prove the following equality:
\begin{equation} \label{eq3}
\mathbf{PTMA} \left[ \alpha  \big( \bar K_{\mathrm{poly}}^{\mathcal{F}}, \mathrm{log} \big), \mathrm{poly} \right] = \mathbf{PTMA} \left[ \alpha \big( \Sigma^\omega, \mathcal{F} \circ \log \big), \mathrm{poly} \right] 
\end{equation}

We first prove the forward inclusion of Eq.~\ref{eq3}. Let 
$$
L \in \mathbf{PTMA} \left[ \alpha  \big( \bar K_{\mathrm{poly}}^{\mathcal{F}}, \mathrm{log} \big), \mathrm{poly} \right] .
$$
Then, there exists some PTM/A $\mathcal{M}$ using advice $\alpha(\bar r, c\log)$, where $\bar r \in \bar K_{\mathrm{poly}}^\mathcal{F}$ and $c > 0$, that decides $L$ in time $p_1 \in \mathrm{poly}$. Since $\bar r \in \bar K_{\mathrm{poly}}^\mathcal{F}$, there exist $\beta \in \Sigma^\omega$ and $f \in \mathcal{F}$ such that $\bar r[0: n-1]$ can be computed from $\beta[0: f(n)-1]$ in time $p_2(n) \in \mathrm{poly}$, for all $n \geq 0$. Consider the PTM/A $\mathcal{M}'$ with advice $\alpha(\beta, f \circ c\log)$ working as follows. First, $\mathcal{M}'$ queries its advice $\beta[0: f (c\log(n))-1]$, then it computes $\bar r[0: c\log(n)-1]$ from this advice in time $p_2(\log(n))$, and finally it simulates $\mathcal{M}$ with advice $\bar r[0: c\log(n)-1]$ in real time. Consequently, $\mathcal{M}'$ decides the same language $L$ as $\mathcal{M}$, and works in time $O(p_1 + p_2 \circ \log) \in \mathrm{poly}$. Therefore, 
$$
L \in \mathbf{PTMA} \left[ \alpha \big( \Sigma^\omega, \mathcal{F} \circ \log \big), \mathrm{poly} \right] .
$$

We now prove the backward inclusion of Eq.~\ref{eq3}. Let
$$
L \in \mathbf{PTMA} \left[ \alpha \big( \Sigma^\omega, \mathcal{F} \circ \log \big), \mathrm{poly} \right] .
$$
Then, there exists some PTM/A $\mathcal{M}$ using advice $\alpha(\bar r, f \circ c\log)$, where $\bar r \in \Sigma^\omega$, $f \in \mathcal{F}$ and $c > 0$, that decides $L$ in time $p_1 \in \mathrm{poly}$. Using the same argument as in the proof of Theorem~\ref{hierarchy_thm1}, there exist $\bar s \in \bar K_{\mathrm{poly}}^{\mathcal{F}}$ and $g \in \mathcal{F}$ such that $f \leq g$ and the smaller word $\bar r[0 : g(n) -1]$ can be retrieved from the larger one $\bar s[0 : 2n -1]$ in time $p_2(n) \in \mathrm{poly}$, for all $n \geq 0$. 
Now, consider the PTM/A $\mathcal{M}'$ using advice $\alpha(\bar s, 2c\log)$ and working as follows. First, $\mathcal{M}'$ queries its advice $\bar s [0: 2c\log(n)-1]$, then it reconstructs $\bar r [0: g(c\log(n))-1]$ from this advice in time $O(p_2(\log(n)))$, and finally, it simulates $\mathcal{M}$ with advice $\bar r [0: g(c\log(n))-1]$ in real time. Since $\bar r [0: g(c\log(n))-1]$ extends $\bar r [0: f(c\log(n))-1]$, $\mathcal{M}'$ and $\mathcal{M}$ decide the same language $L$. In addition, $\mathcal{M}'$ works in time $O(p_1 + p_2\circ \log ) \in \mathrm{poly}$. Therefore, 
$$
\mathbf{PTMA} \left[ \alpha  \big( \bar K_{\mathrm{poly}}^{\mathcal{F}}, \mathrm{log} \big), \mathrm{poly} \right].
$$
\end{proof}

We now prove the separation of non-uniform complexity classes of the form $\mathcal{C}/f$. Towards this purpose, we assume that each class of languages $\mathcal{C}$ is defined on the basis of a set of machines that decide these languages. For the sake of simplicity, we naturally identify $\mathcal{C}$ with its associated class of machines. 
For instance, $\mathbf{P}$ and $\mathbf{BPP}$ are identified with the set of Turing machines and probabilistic Turing machines working in polynomial time, respectively. In this context, we show that, as soon as the advice is increased by a single bit, the capabilities of the corresponding machines are also increased. To achieve this result, the two following weak conditions are required. First, $\mathcal{C}$ must contain the machines capable of reading their full inputs (of length $n$) and advices (of length $f(n)$) (otherwise, any additional advice bit would not change anything). Hence, $\mathcal{C}$ must at least include the machines working in time $O(n + f(n))$. Secondly, the advice length $f(n)$ should be smaller than $2^n$, for otherwise, the advice could encode any possible language, and the corresponding machine would have full power. The following result is proven for machines with general (i.e., non-prefix) advices, before being stated for the particular case of machines with prefix advices. 

\begin{theorem}
\label{th:diag}
Let $f, g : \mathbb{N} \rightarrow \mathbb{N}$ be two increasing functions such that $f(n) < g(n) \leq 2^n$, for all $n \in \mathbb{N}$. 
Let $\mathcal{C}$ be a set of machines containing the Turing machines working in time $O(n + f(n))$. Then $\mathcal{C}/f \subsetneq \mathcal{C}/g$.
\end{theorem}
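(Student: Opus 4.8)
The plan is to handle the two directions separately: $\mathcal{C}/f \subseteq \mathcal{C}/g$ is routine, and $\mathcal{C}/f \neq \mathcal{C}/g$ carries all the content. For the inclusion, given a witness $(L',\alpha)$ for $L \in \mathcal{C}/f$ with $|\alpha(n)| = f(n)$, one pads the advice to length $g(n)$ — say $\beta(n) = \alpha(n)\,1\,0^{g(n)-f(n)-1}$, the inserted $1$ marking where the useful part ends — and passes to a machine that recovers $\alpha(|w|)$ from $\beta(|w|)$ and then runs the $L'$-machine; since $\langle w,\beta(|w|)\rangle$ has length $\Theta(n+g(n))$, this stays in $\mathcal{C}$ under its mild closure properties (and trivially when $f$ is computable within the available time, as for all reasonable advice bounds). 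So the work is to exhibit $L \in \mathcal{C}/g \setminus \mathcal{C}/f$ by a diagonalisation that converts the one-bit surplus of $g$ over $f$ into one extra degree of freedom at each input length.

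Fix an enumeration $M_1, M_2, \dots$ of the countably many deciders in $\mathcal{C}$, and arrange to defeat $M_n$ at input length $n$. Listing $\Sigma^n = \{v_0 < \dots < v_{2^n-1}\}$ lexicographically, for $b \in \Sigma^{g(n)}$ set $T_b = \{\,v_j : j < g(n),\ b_j = 1\,\}$, the subset of the first $g(n)$ words of length $n$ addressed by $b$; the hypothesis $g(n)\le 2^n$ guarantees all $g(n)$ positions address genuine words, so $b\mapsto T_b$ is injective and there are $2^{g(n)}$ candidate slices. Meanwhile, for the fixed decider $M_n$ and each $a \in \Sigma^{f(n)}$, the set $\{\,w\in\Sigma^n : \langle w,a\rangle \in L(M_n)\,\}$ is a single subset of $\Sigma^n$, so $M_n$ with an advice of size $f$ realises at most $2^{f(n)}$ slices on $\Sigma^n$. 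As $2^{g(n)} > 2^{f(n)}$, I pick $b^{(n)}$ with $T_{b^{(n)}}$ not among those, and declare $\beta(n) = b^{(n)}$ and $L\cap\Sigma^n = T_{b^{(n)}}$.

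Now $L \in \mathcal{C}/g$ via the single language $L' = \{\,\langle w,b\rangle : \mathrm{val}(w) < |b|,\ b_{\mathrm{val}(w)} = 1\,\}$, with $\mathrm{val}(w)$ the integer coded by $w$: a machine for $L'$ moves its advice-tape head to cell $\mathrm{val}(w)$ by a double-and-add scan (rejecting if it first reaches the end of $b$), in time linear in $|\langle w,b\rangle|$, hence is among the ``Turing machines working in time $O(n+f(n))$'' and lies in $\mathcal{C}$ — this is exactly where the hypothesis on $\mathcal{C}$ is used. With advice $\beta$ this machine decides $L$ by construction. And $L\notin\mathcal{C}/f$: otherwise $L = \{\,w : \langle w,\alpha(|w|)\rangle \in L(M_j)\,\}$ for some $j$ and some advice $\alpha$ of size $f$, so $L\cap\Sigma^j$ would be one of the at most $2^{f(j)}$ slices realisable by $M_j$ with an $f(j)$-bit advice, contradicting the choice of $b^{(j)}$. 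Hence $\mathcal{C}/f \subsetneq \mathcal{C}/g$.

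The crux — and the reason this is not entirely trivial — is that $L$ must be simultaneously \emph{cheap} enough for $\mathcal{C}/g$ and \emph{rich} enough to escape $\mathcal{C}/f$ even when $g=f+1$; the address-into-the-advice coding $T_b$ reconciles the two, producing $2^{g(n)}$ candidate slices per length while keeping the decoder linear-time in input-plus-advice. The remaining points are minor: the enumeration of $\mathcal{C}$ is countable and non-deciders realise no language; for $\mathcal{C}=\mathbf{BPP}$ a probabilistic decider with a fixed advice still determines at most one bounded-error language, so the count $2^{f(n)}$ is unchanged and the deterministic decoder is trivially in $\mathbf{BPP}$; and in applications $f,g$ come from a class of reasonable advice bounds, making the computability proviso in the inclusion direction automatic.
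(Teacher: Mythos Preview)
Your proof is correct and follows the same overall architecture as the paper: padding for the inclusion, diagonalisation against $M_n$ at input length $n$ for strictness, and a linear-time ``look up bit $\mathrm{val}(w)$ of the advice'' machine to witness $L\in\mathcal{C}/g$.

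The one substantive difference is how the unreachable slice at each length is obtained. You argue by pure pigeonhole: the $2^{g(n)}$ distinct candidate sets $T_b\subseteq\{v_0,\dots,v_{g(n)-1}\}$ outnumber the at most $2^{f(n)}$ length-$n$ slices $M_n$ can realise, so some $T_b$ escapes. The paper instead runs an explicit halving construction: starting from the set $\mathcal{L}^n_n$ of all length-$n$ slices $M_n$ can produce, it fixes membership of $b(0),b(1),\dots,b(f(n))$ one bit at a time, each choice at least halving the set of still-consistent slices, so after $f(n)+1$ steps nothing remains. Your version is shorter and non-constructive, spending all $g(n)$ advice bits; the paper's is constructive and shows that $f(n)+1$ bits already suffice, isolating the minimal case $g=f+1$ before remarking that larger $g$ follow. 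Both reach the same conclusion, and on the inclusion side both are equally informal about the closure of $\mathcal{C}$ under the $O(g(n))$ preprocessing step.
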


\begin{proof}
Any Turing machine $M$ with advice $\alpha$ of size $f$ can be simulated by some Turing machine $M'$ with an advice $\alpha'$ of size $g > f$. Indeed, take $\alpha'(n) = (1)^{g(n)-f(n)-1} 0 \alpha(n)$. Then, on any input $w \in \Sigma^n$, the machine $M'$ queries its advice $\alpha'(n)$, erases all $1$'s up to and including the first encountered $0$, and then simulates $M$ with advice $\alpha(n)$. Clearly, $M$ and $M'$ decide the same language.

To prove the strictness of the inclusion, we proceed by diagonalization. Recall that the set of (probabilistic) Turing machines is countable. Let $M_0, M_1, M_2, \dots$ be an enumeration of the machines in $\mathcal{C}$. For any $M_k \in \mathcal{C}$ and any advice $\alpha : \mathbb{N} \rightarrow \Sigma^*$ of size $f$, let $M_k / \alpha$ be the associated (probabilistic) machine with advice, and let $L( M_k / \alpha)$ be its associated language.
The language $L(M_k / \alpha)$ can be written as the union of its sub-languages of words of length $n$, i.e.
$$
L(M_k / \alpha) = \bigcup_{n \in \mathbb{N}} L(M_k / \alpha)^n.
$$
For each $k, n \in \mathbb{N}$, consider the set of sub-languages of words of length $n$ decided by $M_k / \alpha$, for all possible advices $\alpha$ of size $f$, i.e.:
$$
\mathcal{L}^n_k = \big\{ L(M_k / \alpha)^n : \alpha \text{ is an advice of size $f$} \big\}.
$$
Since there are at most $2^{f(n)}$ advice strings of length $f(n)$, it follows that $| \mathcal{L}^n_k | \leq 2^{f(n)}$, for all $k \in \mathbb{N}$, and in particular, that $| \mathcal{L}^n_n | \leq 2^{f(n)}$. By working on the diagonal $\mathcal{D} = \big( \mathcal{L}^n_n \big)_{n \in \mathbb{N}}$ of the sequence $\big( \mathcal{L}^n_k \big)_{k, n \in \mathbb{N}}$ (illustrated in Table~\ref{tab_diagonal}), we will build a language $A = \bigcup_{n \in \mathbb{N}} A^n$ that cannot be decided by any Turing machine in $\mathcal{C}$ with advice of size $f$, but can be decided by some Turing machine in $\mathcal{C}$ with advice of size $g$. It follows that $A \in (\mathcal{C}/g) \setminus (\mathcal{C}/f)$, and therefore, $\mathcal{C}/f \subsetneq \mathcal{C}/g$.

\renewcommand{\arraystretch}{1.5}
\begin{table}[h!]
    \centering
    \begin{tabular}{c | cccccc}
       & $0$ & $1$ & $2$ & $\cdots$ & $n$ & $\cdots$ \\ 
    \hline
    $0$ & \hl{$\mathcal{L}^0_0$} & $\mathcal{L}^1_0$ & $\mathcal{L}^2_0$ & $\cdots$ & $\mathcal{L}^n_0$ & $\cdots$ \\ 
    $1$ & $\mathcal{L}^0_1$ & \hl{$\mathcal{L}^1_1$} & $\mathcal{L}^2_1$ & $\cdots$ & $\mathcal{L}^n_1$ & $\cdots$ \\ 
    $2$ & $\mathcal{L}^0_2$ & $\mathcal{L}^1_2$ & \hl{$\mathcal{L}^2_2$} & $\cdots$ & $\mathcal{L}^n_2$ & $\cdots$ \\ 
    $\vdots$ & $\vdots$ & $\vdots$ & $\vdots$ & \hl{$\ddots$} & $\vdots$ & $\cdots$ \\ 
    $n$ & $\mathcal{L}^0_n$ & $\mathcal{L}^1_n$ & $\mathcal{L}^2_n$ & $\cdots$ & \hl{$\mathcal{L}^n_n$} & $\cdots$ \\ 
    $\vdots$ & $\vdots$ & $\vdots$ & $\vdots$ & $\cdots$ & $\vdots$ & \hl{$\ddots$} \\
    \end{tabular}
    \caption{Illustration of the sequence $\big( \mathcal{L}^n_k \big)_{k, n \in \mathbb{N}}$. Each set $\mathcal{L}^n_k$ satisfies $| \mathcal{L}^n_k | \leq 2^{f(n)}$. The diagonal $\mathcal{D} = \big( \mathcal{L}^n_n \big)_{n \in \mathbb{N}}$ is highlighted.}
    \label{tab_diagonal}
\end{table}

Let $n \in \mathbb{N}$. For each $i < 2^n$, let $b(i) \in \Sigma^n$ be the binary representation of $i$ over $n$ bits. For any subset $\mathcal{L} \subseteq \mathcal{L}^n_n$, let
$$
\mathcal{L} \big( b(i) \big) = \left\{ L \in \mathcal{L} : b(i) \in L \right\} \text{ ~and~ } \bar{\mathcal{L}} \big(b(i) \big) = \left\{ L \in \mathcal{L} : b(i) \not \in L \right\}.
$$
Consider the sequence $( \mathcal{L}^n_{n,0}, \dots, \mathcal{L}^n_{n, f(n)+1} )$ of decreasing subsets of $\mathcal{L}^n_{n}$ and the sequence $( A^n_0, \dots, A^n_{f(n)+1} )$ of sub-languages of words of length $n$ defined by induction for every $0 \leq i \leq f(n)$ as follows
\begin{align*}
\mathcal{L}^n_{n,0} = \mathcal{L}^n_n \text{ ~and~ }
\mathcal{L}^n_{n,i+1} & = 
\begin{cases}
\mathcal{L}^n_{n,i} \big( b(i) \big) & \text{if } | \mathcal{L}^n_{n,i} \big( b(i) \big) | < | \bar{\mathcal{L}}^n_{n,i} \big( b(i) \big) | \\
\bar{\mathcal{L}}^n_{n,i} \big( b(i) \big) & \text{otherwise}
\end{cases}
\end{align*}
\begin{align*}
A^n_{0} = \emptyset \text{ ~and~ }
\mathcal{A}^n_{i+1} & =
\begin{cases}
\mathcal{A}^n_{i} \cup \{ b(i) \} & \text{if } | \mathcal{L}^n_{n,i} \big( b(i) \big) | < | \bar{\mathcal{L}}^n_{n,i} \big( b(i) \big) | \\
\mathcal{A}^n_{i} & \text{otherwise.}
\end{cases}    
\end{align*}
This construction is illustrated in Figure~\ref{fig_dichotomy}. 

\begin{figure}[h!]
    \centering
    \includegraphics[width=10cm]{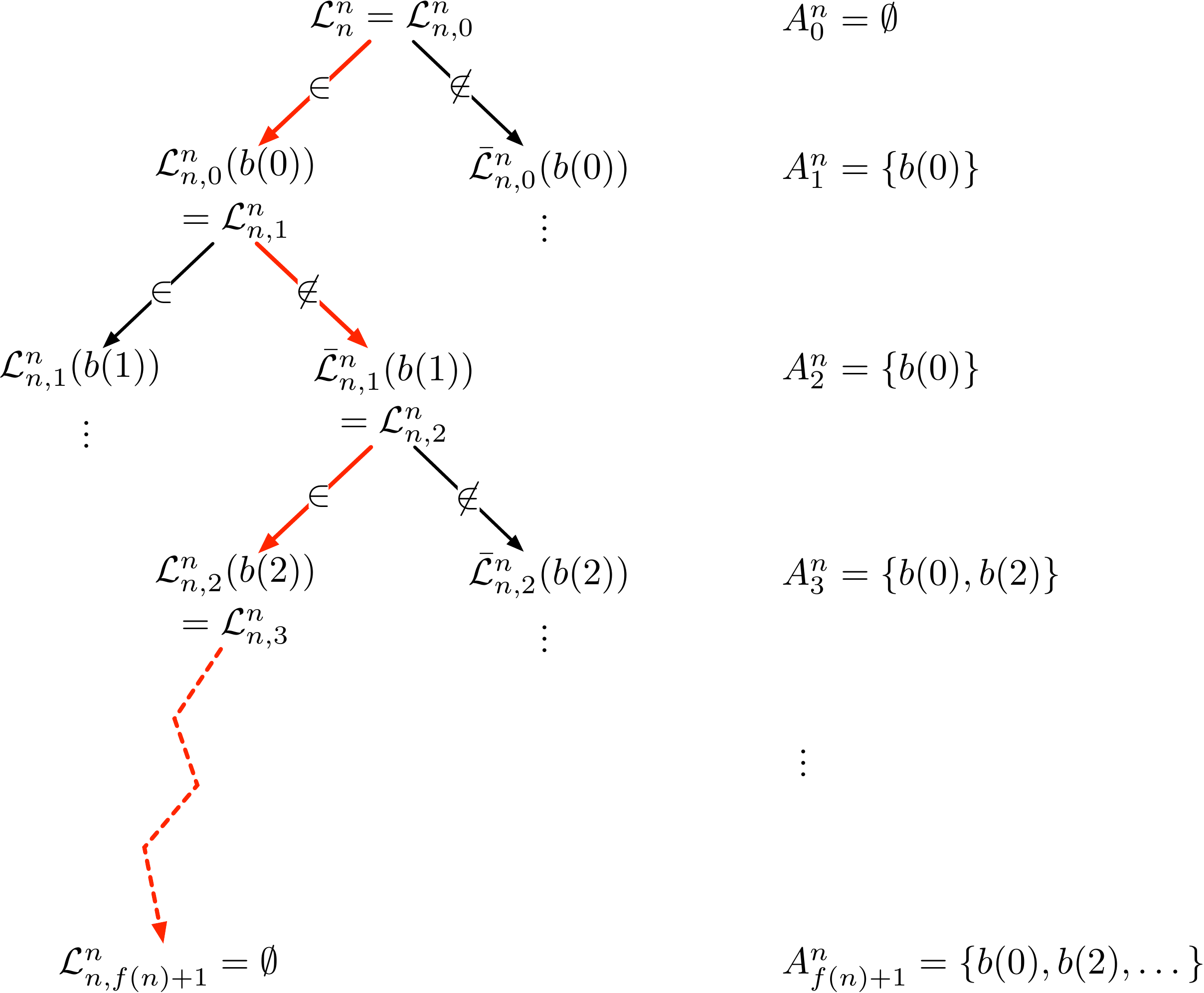}
    \caption{Inductive construction of the sequences $\big( \mathcal{L}^n_{n,i} \big)_{i=0}^{f(n)+1}$ and $\big( A^n_i \big)_{i=0}^{f(n)+1}$.}
    \label{fig_dichotomy}
\end{figure}

Note that the $n$-bit representation $b(i)$ of $i$ is well-defined, since $0 \leq i \leq f(n) < 2^n$. In addition, the construction ensures that $| \mathcal{L}^n_{n,i+1} | \leq \frac{1}{2} | \mathcal{L}^n_{n,i} |$, and since $| \mathcal{L}^n_{n,0} | = | \mathcal{L}^n_{n} | \leq 2^{f(n)}$, it follows that $| \mathcal{L}^n_{n,f(n) + 1} | = 0$, meaning that $\mathcal{L}^n_{n,f(n) + 1} = \emptyset$. Furthermore, the construction also ensure that $\mathcal{L}^n_{i} \subseteq \mathcal{L}^n_{i + 1}$ and $A^n_{f(n) + 1} \in \mathcal{L}^n_{n, i}$, for all $0 \leq i \leq f(n)$. Now, towards, a contradiction, suppose that $A^n_{f(n) + 1} \in \mathcal{L}^n_n$. 
The previous properties imply that
\begin{equation*}
A^n_{f(n) + 1} \in \bigcap_{0 \leq i \leq f(n)} \mathcal{L}^n_{n, i} = \mathcal{L}^n_{n, f(n)+1} = \emptyset    
\end{equation*}
which is a contradiction. Therefore, $A^n_{f(n) + 1} \not \in \mathcal{L}^n_n$, for all $n \in \mathbb{N}$.

Now, consider the language
$$A = \bigcup_{n \in \mathbb{N}} A^n_{f(n) + 1}.$$
By construction, $A^n_{f(n) + 1}$ is the set of words of length $n$ of $A$, meaning that $A^n_{f(n) + 1} = A^n$, for all $n \in \mathbb{N}$. We now show that $A$ cannot be decided by any machine in $\mathcal{C}$ with advice of size $f$. Towards a contradiction, suppose that $A \in \mathcal{C} / f$. Then, there exist $M_k \in \mathcal{C}$ and $\alpha : \mathbb{N} \rightarrow \Sigma^*$ of size $f$ such that $L(M_k / \alpha) = A$. On the one hand, the definition of $\mathcal{L}^k_k$ ensures that $L(M_k / \alpha)^k \in \mathcal{L}^k_k$. On the other hand, $L(M_k / \alpha)^k = A^k = A^k_{f(k) + 1} \not \in \mathcal{L}^k_k$, which is a contradiction. Therefore, $A \not \in \mathcal{C} / f$.

We now show that $A \in \mathcal{C} / g$. Consider the advice function $\alpha : \mathbb{N} \rightarrow \Sigma^*$ of size $g = f + 1$ given by
$\alpha(n) = \alpha^n_0 \alpha^n_1 \cdots \alpha^n_{f(n)}$, where
$$
\alpha^n_i = 
\begin{cases}
1 & \text{if } b(i) \in A^n \\
0 & \text{otherwise,}
\end{cases}
$$
for all $0 \leq i \leq f(n)$. Note that the advice string $\alpha(n)$ encodes the sub-language $A^n$, for all $n \in \mathbb{N}$, since the latter is a subset of $\{ b(i) : i \leq f(n) \}$ by definition. Consider the Turing machine with advice $M / \alpha$ which, on every input $w = w_0 w_1 \cdots w_{n-1}$ of length $n$, moves its advice head up to the $i$-th bit $\alpha^n_i$ of $\alpha(n)$, where $i = b^{-1}(w)$, if this bit exists (note that $i < 2^{n}$ and $| \alpha(n) | = f(n) + 1$), and accepts $w$ if and only if $\alpha^n_i = 1$. Note that these instructions can be computed in time $O(f(n) + n)$. In particular, moving the advice head up to the $i$-th bit of $\alpha(n)$ does not require to compute $i = b^{-1}(w)$ explicitly, but can be achieved by moving simultaneously the input head from the end of the input to the beginning and the advice head from left to right, in a suitable way.
It follows that
\begin{equation*}
 w \in L(M / \alpha)^n \text{ ~iff~ } \alpha^n_{b^{-1}(w)} = 1 \text{ ~iff~ } b( b^{-1} (w)) \in A^n \text{ ~iff~ } w \in A^n.
\end{equation*}
Hence, $L(M / \alpha)^n = A^n$, for all $n \in \mathbb{N}$, and thus 
$$L(M / \alpha) = \bigcup_{n \in \mathbb{N}} L(M / \alpha)^n = \bigcup_{n \in \mathbb{N}} A^n  = A.$$
Therefore, $A \in \mathcal{C} / g$. The argument can be generalized in a straightforward way to any advice size $g$ such that $f(n) + 1 \leq g(n) \leq 2^n$. 

Finally, the two properties $A \not \in \mathcal{C} / f$ and $A \in \mathcal{C} / g$ imply that $\mathcal{C} / f \subsetneq \mathcal{C} / g$.
\end{proof}

We now prove the analogous of Theorem~\ref{th:diag} for the case of machines with prefix advice. In this case, however, a stronger condition on the advice lengths $f$ and $g$ is required: $f \in o(g)$ instead of $f \leq g$.

\begin{theorem}
\label{th:diag_2}
Let $f, g : \mathbb{N} \rightarrow \mathbb{N}$ be two increasing functions such that $f \in o(g)$ and $\lim_{n \rightarrow \infty} g(n) = + \infty$. Let $\mathcal{C}$ be a set of machines containing the Turing machines working in time $O(n + f(n))$. Then $\mathcal{C}/f^* \subsetneq \mathcal{C}/g^*$.
\end{theorem}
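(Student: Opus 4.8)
The plan is to prove $\mathcal{C}/f^{*}\subseteq\mathcal{C}/g^{*}$ and $\mathcal{C}/g^{*}\not\subseteq\mathcal{C}/f^{*}$ separately, reusing the per-length dichotomy of Theorem~\ref{th:diag} but reorganising everything so that the advice stays a \emph{prefix} advice. The one new difficulty pervading the whole proof is that the advice relevant to inputs of length $n$ can no longer sit at the front of the advice word (as the pad $1^{g(n)-f(n)-1}0\,\alpha(n)$ does in Theorem~\ref{th:diag}): whatever is written for length $n$ must persist, verbatim and in place, for all larger lengths, so the data for successive lengths occupies disjoint, ever-later windows of one fixed infinite string $\bar s$. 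Managing this is both what the argument comes down to and the reason $f\in o(g)$, rather than merely $f\le g$, is required.

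For $\mathcal{C}/f^{*}\subseteq\mathcal{C}/g^{*}$, let $L$ be witnessed by $(L',\alpha)$ with $\alpha$ a prefix advice of size $f$ and $\bar r=\lim_{n}\alpha(n)$. I would take the limit $\bar s$ of the new $g$-sized prefix advice to be the string obtained by writing the bits of $\bar r$ in order, each bit $b$ as the pair $0b$, and inserting the two-bit marker $11$ immediately after the images of $\bar r[0:f(1)-1],\bar r[0:f(2)-1],\dots$. A machine reading a prefix of $\bar s$ parses pairs; after the $n$-th marker it has recovered exactly $\alpha(n)=\bar r[0:f(n)-1]$, and it then runs the machine for $L'$ on $\langle w,\alpha(n)\rangle$ in real time, which is correct by the consistency clause (iii). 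The $n$-th marker lies within the first $2f(n)+2n\le 4f(n)$ positions of $\bar s$, and since $f\in o(g)$ and $n\le f(n)$ this is $\le g(n)$ for all large $n$; the finitely many short inputs are hard-coded (their advice being just the corresponding prefix of $\bar s$). This is already where $o(g)$ rather than $\le g$ is used, to leave room for the markers and the doubling, and it also sidesteps any need for $f$ or $g$ to be computable.

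For the strict direction I would build $A\in(\mathcal{C}/g^{*})\setminus(\mathcal{C}/f^{*})$ by a diagonalisation concentrated on a sparse set of lengths. Fix an enumeration $M_{0},M_{1},\dots$ of the machines and choose inductively a strictly increasing sequence $\ell_{0}<\ell_{1}<\cdots$; at length $\ell_{k}$, run the dichotomy of Theorem~\ref{th:diag} against $M_{k}$ to obtain $A^{\ell_{k}}\subseteq\{b(i):0\le i\le f(\ell_{k})\}$ with $A^{\ell_{k}}\notin\mathcal{L}^{\ell_{k}}_{k}$ (recall $|\mathcal{L}^{\ell_{k}}_{k}|\le 2^{f(\ell_{k})}$), and set $A=\bigcup_{k}A^{\ell_{k}}$, with $A\cap\Sigma^{n}=\emptyset$ for $n\notin\{\ell_{k}\}_{k}$. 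Then, exactly as in Theorem~\ref{th:diag}, $L(M_{m}/\alpha)^{\ell_{m}}\in\mathcal{L}^{\ell_{m}}_{m}$ while $A\cap\Sigma^{\ell_{m}}=A^{\ell_{m}}\notin\mathcal{L}^{\ell_{m}}_{m}$ for every advice $\alpha$ of size $f$, so $A\notin\mathcal{C}/f\supseteq\mathcal{C}/f^{*}$. For the positive side, let the limit of the $g$-sized prefix advice be $\mathrm{desc}_{0}\,\mathrm{desc}_{1}\,\mathrm{desc}_{2}\cdots$, where $\mathrm{desc}_{k}$ self-delimitingly encodes the triple $(\ell_{k},\,f(\ell_{k})+1,\,v_{k})$, with $v_{k}\in\Sigma^{f(\ell_{k})+1}$ the characteristic vector of $A^{\ell_{k}}$ among $b(0),\dots,b(f(\ell_{k}))$, so that $|\mathrm{desc}_{k}|\le C_{0}\,f(\ell_{k})$ for a universal constant $C_{0}$ and all large $\ell_{k}$. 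On $w$ of length $n$, the $g$-side machine scans these blocks, skipping a block as soon as it reads a label $<n$, rejecting as soon as it reads a label $>n$ (recognised from the self-delimiting length-prefix after only $O(\log n)$ bits) or as soon as its advice runs out, and reading off the bit of $v_{k}$ indexed by $w$ if it completes a block labelled $n$. With $S_{k-1}:=\sum_{j<k}|\mathrm{desc}_{j}|$, both correctness and a running time of $O(n+f(n))$ — the budget allowed by $\mathcal{C}$ — require that block $k$ be complete within the first $g(\ell_{k})$ advice cells and begin within the first $\ell_{k}+f(\ell_{k})$ of them, i.e. $g(\ell_{k})-C_{0}f(\ell_{k})\ge S_{k-1}$ and $\ell_{k}+f(\ell_{k})\ge S_{k-1}$.

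The main obstacle is arranging that these two inequalities can always be met. Since $f\in o(g)$ and $g\to\infty$ we have $g(m)-C_{0}f(m)\to\infty$, and trivially $m+f(m)\to\infty$, so for each $k$ both thresholds are eventually cleared and one simply picks $\ell_{k}$ larger than $\ell_{k-1}$ and than both thresholds — a harmless non-constructive choice, since the machine recovers the $\ell_{k}$ from its advice rather than by computing them. This is exactly the point at which $f\in o(g)$ is indispensable: if $g$ dominated $f$ only by a constant, say $g=f+1$, then $g(m)-C_{0}f(m)\to-\infty$ and no admissible sequence $(\ell_{k})$ could exist, which is precisely why the prefix version needs a stronger hypothesis than Theorem~\ref{th:diag}. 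The remaining work is routine: keeping the self-delimiting parsing, the early abort, the $O(n+f(n))$ time accounting, and the prefix-consistency of $\bar s$ under clause (iii) all in order. Specialising to $\mathcal{C}=\mathbf{P}$ and $\mathcal{C}=\mathbf{BPP}$ then yields the strictness of the hierarchies in Theorems~\ref{hierarchy_thm1} and~\ref{hierarchy_thm2}.
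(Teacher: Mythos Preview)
Your strictness argument follows the paper's: choose a sparse sequence $(\ell_k)$ (the paper's $(n_i)$) so that the cumulative block lengths fit under $g$, diagonalise at each $\ell_k$ via the dichotomy of Theorem~\ref{th:diag}, show $A\notin\mathcal{C}/f\supseteq\mathcal{C}/f^*$, and encode the resulting $A^{\ell_k}$'s consecutively in a single prefix string. The main difference is bookkeeping: the paper does \emph{not} label blocks with $\ell_k$. Block $i$ is just $\beta(n_i)=h(v_i)$ (bit-doubling) followed by $\#=01$, and the machine identifies the relevant block \emph{positionally}: the advice for $n=n_i$ is arranged to end exactly after block $i$, while for $n_i<n<n_{i+1}$ it ends with a trailing $\#$, which signals immediate rejection. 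The paper's machine therefore runs in time $O(n+g(n))$, not $O(n+f(n))$; it does not attempt the tighter budget you aim for.

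This encoding difference matters, because your bound $|\mathrm{desc}_k|\le C_0\,f(\ell_k)$ silently assumes $\log\ell_k=O(f(\ell_k))$, which is not guaranteed by the hypotheses (take $f$ bounded and $g=\log$). The paper's label-free blocks cost only $2(f(n_i)+1)$ bits each and avoid this. Similarly, your inclusion argument invokes $n\le f(n)$ to get $2f(n)+2n\le 4f(n)$; this holds only if ``increasing'' means strictly increasing on $\mathbb{N}$, not under the non-decreasing reading needed for the later applications to $O((\log n)^i)$. (The paper itself glosses over the inclusion, referring back to Theorem~\ref{th:diag}, whose padding $1^{g(n)-f(n)-1}0\alpha(n)$ is \emph{not} prefix-compatible---so neither proof is entirely clean on that point.) Both issues are easily repaired and do not affect the core idea, which is the same as the paper's.
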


\begin{proof}
The proof is similar to that of Theorem~\ref{th:diag}, except that we will construct the language $A$ on the basis of a sequence of integers $( n_i )_{i \in \mathbb{N}}$. Consider the sequence $(n_i)_{i \in \mathbb{N}}$ defined for all $i \geq 0$ as follows
\begin{eqnarray*}
n_{0} & = & \min \left\{ n \in \mathbb{N} : 2(f(n) + 1) \leq g(n) \right\} \\
n_{i+1} & = & \min \Big\{ n \in \mathbb{N} : 2 \sum_{j = 0}^{i} (f(n_j) + 1) + 2(f(n) + 1) \leq g(n) \Big\}.
\end{eqnarray*}
We show by induction that the sequence $( n_i )_{i \in \mathbb{N}}$ is well-defined. Since $f \in o(g)$ and $\lim_{n \rightarrow \infty} g(n) = + \infty$, the following limits hold
\begin{eqnarray*}
\lim_{n \rightarrow \infty} \frac{2(f(n) + 1)}{g(n)} & = & 2 \lim_{n \rightarrow \infty} \frac{f(n)}{g(n)} + \lim_{n \rightarrow \infty} \frac{2}{g(n)} = 0 \\
\lim_{n \rightarrow \infty} \frac{2 \sum_{j = 0}^{i} (f(n_j) + 1) + 2(f(n) + 1)}{g(n)} & = & 2 \lim_{n \rightarrow \infty} \frac{f(n)}{g(n)} + \lim_{n \rightarrow \infty} \frac{C + 2}{g(n)} = 0
\end{eqnarray*}
where $C = 2 \sum_{j = 0}^{i} (f(n_j) + 1)$, which ensure that $n_0$ and $n_{i+1}$ are well-defined.

For each $k, i \in \mathbb{N}$, consider the set of sub-languages of words of length $n_i$ decided by the machine $M_k \in \mathcal{C}$ using any possible advice $\alpha$ of size $f$, i.e.,
$$
\mathcal{L}^{n_i}_k = \left\{ L(M_k / \alpha)^{n_i} : \alpha \text{ is an advice of size $f$} \right\}.
$$
Consider the diagonal $\mathcal{D} = \big( \mathcal{L}^{n_i}_i \big)_{i \in \mathbb{N}}$ of the set $\big( \mathcal{L}^{n_i}_k \big)_{k, n_i \in \mathbb{N}}$. Since there are at most $2^{f(n_i)}$ advice strings of length $f(n_i)$, it follows that $| \mathcal{L}^{n_i}_i | \leq 2^{f(n_i)}$. Using a similar construction as in the proof of Theorem~\ref{th:diag}, we can define by induction a sub-language $A^{n_i}_{f(n_i) + 1} \subseteq \Sigma^{n_i}$ such that $A^{n_i}_{f(n_i) + 1} \not \in \mathcal{L}^{n_i}_i$. Then, consider the language
$$
A = \bigcup_{i \in \mathbb{N}} A^{n_i}_{f(n_i) + 1} = \bigcup_{i \in \mathbb{N}} A^{n_i}.
$$
Once again, a similar argument as in the proof of Theorem~\ref{th:diag} ensures that $A \not \in \mathcal{C} / f$. Since $\mathcal{C} / f^* \subseteq \mathcal{C} / f$, it follows that $A \not \in \mathcal{C} / f^*$.

We now show that $A \in \mathcal{C} / g^*$. Recall that, by construction, $A^{n_i} \subseteq \{ b(j) : 0 \leq j \leq f(n_i) \}$. Consider the word homomorphism $h : \Sigma^* \rightarrow \Sigma^*$ induced by the mapping $0 \mapsto 00$ and $1 \mapsto 11$, and define the symbol $\# = 01$. 
For each $i \in \mathbb{N}$, consider the encoding $\beta^{n_i}_{0} \beta^{n_i}_{1} \cdots \beta^{n_i}_{f(n_i)}$ of $A^{n_i}$ given by
$$
\beta^{n_i}_j = 
\begin{cases}
1 & \text{if } b(j) \in A^{n_i}, \\
0 & \text{otherwise,}
\end{cases}
$$
for all $0 \leq j \leq f(n_i)$, and let $\beta(n_i) = h(\beta^{n_i}_{0} \beta^{n_i}_{1} \cdots \beta^{n_i}_{f(n_i)})$.
Note that $| \beta(n_i) | = 2( f(n_i) + 1 )$. Now, consider the advice function $\alpha : \mathbb{N} \rightarrow \Sigma^*$ given by the concatenation of the encodings of the successive $A^{n_i}$ separated by symbols $\#$. Formally, 
$$
\alpha(n) = 
\begin{cases}
\beta(n_0) \# \beta(n_1) \# \cdots \# \beta(n_i) & \text{ if $n = n_i$ for some $i \geq 0$} \\
\beta(n_0) \# \beta(n_1) \# \cdots \# \beta(n_i) \# & \text{ if $n_i < n < n_{i+1}$}.
\end{cases}
$$
Note that $| \alpha(n) | = 2 \sum_{j=0}^i (f(n_i) + 1) \leq g(n_i) \leq g(n)$ for $n_i \leq n < n_{i+1}$, and that $\alpha$ satisfies the prefix property: $m \leq n$ implies $| \alpha(m) | \leq | \alpha(n) |$. If necessary, the advice strings can be extended by dummy symbols $10$ in order to achieve the equality $| \alpha(n) | = g(n)$, for all $n \geq 0$ (assuming without loss of generality that $g(n)$ is even). Now, consider the machine with advice $M / \alpha$ which, on every input $w$ of length $n$, first reads its advice string $\alpha(n)$ up to the end. If the last symbol of $\alpha(n)$ is $\#$, then it means that $|w| \neq n_i$ for all $i \geq 0$, and the machine rejects $w$. Otherwise, the input is of length $n_i$ for some $i \geq 0$. Hence, the machine moves its advice head back up to the last $\#$ symbol, and then moves one step to the right. At this point, the advice head points at the beginning of the advice substring $\beta(n_i)$. Then, the machine decodes $\beta^{n_i}_{0} \beta^{n_i}_{1} \cdots \beta^{n_i}_{f(n_i)}$ from $\beta(n_i)$ by removing one out of two bits. Next, as in the proof of Theorem~\ref{th:diag}, the machine moves its advice head up to the $j$-th bit $\beta^{n_i}_j$, where $j = b^{-1}(w)$, if this bit exists (note that $j < 2^{n_i}$ and $| \beta(n_i) | = f(n_i) + 1$), and accepts $w$ if and only if $\beta^{n_i}_j = 1$. These instructions can be computed in time $O(2 \sum_{j = 0}^{i} (f(n_j) + 1) + {n_i})$. It follows that $ w \in L(M / \alpha)^{n_i}$ iff $w \in A^{n_i}$. Thus $L(M / \alpha)^{n_i} = A^{n_i}$, for all $i \in \mathbb{N}$, and hence
$$
L(M / \alpha) = \bigcup_{i \in \mathbb{N}} L(M / \alpha)^{n_i} = \bigcup_{i \in \mathbb{N}} A^{n_i}  = A
$$
Therefore, $A \in \mathcal{C} / g^*$.

Finally, the two properties $A \not \in \mathcal{C} / f^*$ and $A \in \mathcal{C} / g^*$ imply that $\mathcal{C} / f^* \subsetneq \mathcal{C} / g^*$.
\end{proof}

The separability between classes of analog, evolving, and stochastic recurrent neural networks using real weights, evolving weights, and probabilities of different Kolmogorov complexities, respectively, can now be obtained. 

\begin{corollary}
\label{cor:nn_hierarchy}
Let $\mathcal{F}$ and $\mathcal{G}$ be two classes of reasonable advice bounds such that, there is a $g \in \mathcal{G}$, such that for every $f \in \mathcal{F}$,  $f \in o(g)$ and $\underset{n \rightarrow \infty}{\lim} g(n) = + \infty$. Then
\begin{enumerate}[label=(\roman*),itemsep=0pt]
\item \label{point1} $\mathbf{ANN} \left[ K_{\mathrm{poly}}^\mathcal{F}, \mathrm{poly} \right] \subsetneq \mathbf{ANN} \left[ K_{\mathrm{poly}}^\mathcal{G}, \mathrm{poly} \right]$
\item \label{point2} $\mathbf{ENN} \left[ \bar K_{\mathrm{poly}}^\mathcal{F}, \mathrm{poly} \right] \subsetneq \mathbf{ENN} \left[ \bar K_{\mathrm{poly}}^\mathcal{G}, \mathrm{poly} \right]$
\item \label{point3} $\mathbf{SNN} \left[ K_{\mathrm{poly}}^\mathcal{F}, \mathrm{poly} \right] \subsetneq \mathbf{SNN} \left[ K_{\mathrm{poly}}^\mathcal{G}, \mathrm{poly} \right]$
\end{enumerate}
\end{corollary}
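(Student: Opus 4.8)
The plan is to reduce all three network separations to separations between non-uniform complexity classes, and then to establish the latter by a diagonalization refining Theorem~\ref{th:diag_2}. By Theorem~\ref{hierarchy_thm1}, points~\ref{point1} and~\ref{point2} are both equivalent to the single statement $\mathbf{P}/\mathcal{F}^* \subsetneq \mathbf{P}/\mathcal{G}^*$, and by Theorem~\ref{hierarchy_thm2}, point~\ref{point3} is equivalent to $\mathbf{BPP}/(\mathcal{F}\circ\log)^* \subsetneq \mathbf{BPP}/(\mathcal{G}\circ\log)^*$. So it suffices to prove that $\mathcal{C}/\mathcal{A}^* \subsetneq \mathcal{C}/\mathcal{B}^*$ whenever $\mathcal{C}$ is $\mathbf{P}$ or $\mathbf{BPP}$ (enumerated through clocked machines) and there is $g \in \mathcal{B}$ with $f \in o(g)$ for all $f \in \mathcal{A}$ and $g(n) \to \infty$. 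For point~\ref{point3} one first checks that $\mathcal{A} = \mathcal{F}\circ\log$ and $\mathcal{B} = \mathcal{G}\circ\log$ inherit this hypothesis: $f \in o(g)$ gives $f\circ\log \in o(g\circ\log)$ since $\log n \to \infty$, likewise $g\circ\log \to \infty$, and $\mathbf{BPP}$ contains every deterministic machine running in time $O(n + (f\circ\log)(n)) = O(n)$, as required by Theorem~\ref{th:diag_2}.

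The inclusion $\mathcal{C}/\mathcal{A}^* \subseteq \mathcal{C}/\mathcal{B}^*$ is easy: each $f \in \mathcal{A}$ satisfies $f \in o(g)$, so Theorem~\ref{th:diag_2} gives $\mathcal{C}/f^* \subseteq \mathcal{C}/g^* \subseteq \mathcal{C}/\mathcal{B}^*$, and one takes the union over $f \in \mathcal{A}$.

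The heart of the proof is strictness, and it amounts to diagonalizing over the entire family $\mathcal{A}$ at once rather than against a single advice bound. First I would pass to a countable core of $\mathcal{A}$: since $\mathcal{A}$ consists of reasonable advice bounds, every $f \in \mathcal{A}$ is dominated by some poly-time computable $g_f \in \mathcal{A}$, and there are only countably many poly-time computable functions, so the poly-time computable members $\mathcal{A}_0 = \{h_0, h_1, \dots\}$ of $\mathcal{A}$ already satisfy: every $f \in \mathcal{A}$ lies below some $h_j$. Enumerate the clocked machines $M_0, M_1, \dots$ of $\mathcal{C}$ and fix a pairing $i \mapsto (k(i), j(i))$ hitting every pair. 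Then I would run the construction of Theorem~\ref{th:diag_2}, but at stage $i$ diagonalizing simultaneously against \emph{all} length-$n_i$ sub-languages that $M_{k(i)}$ realizes using \emph{any} advice string of length at most $h_{j(i)}(n_i)$: there are fewer than $2^{h_{j(i)}(n_i)+1}$ of these, and since $h_{j(i)}(n_i) \le n_i$ the dichotomy of Theorem~\ref{th:diag_2} on $O(h_{j(i)}(n_i))$ of the words $b(0), b(1), \dots$ of length $n_i$ produces a set $A^{n_i} \subseteq \Sigma^{n_i}$ not realizable in this way. Choosing the $n_i$ increasing and large enough that the cumulative cost $\sum_{i' \le i} O(h_{j(i')}(n_{i'}) + 1)$ of the self-delimiting ($\#$-separated, bit-doubled) encoding of the blocks $A^{n_0}, \dots, A^{n_i}$ stays below $g(n_i)$ — possible because each $h_j \in o(g)$ and $g \to \infty$, exactly as in the inductive definition of $(n_i)$ in Theorem~\ref{th:diag_2} — one obtains $A = \bigcup_i A^{n_i}$ with $A \in \mathcal{C}/g^* \subseteq \mathcal{C}/\mathcal{B}^*$ (the prefix advice encodes the blocks, decoded in time $O(n)$). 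Conversely, for every $f \in \mathcal{A}$ and every $M_k$, the stage $i$ with $k(i) = k$ and $h_{j(i)} \ge f$ witnesses $L(M_k/\alpha) \ne A$ for any prefix advice $\alpha$ of size $f$, so $A \notin \mathcal{C}/\mathcal{A}^*$.

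I expect the main obstacle to be precisely this last step: making the diagonalization defeat the whole countable family $\{h_j\}$ of advice bounds \emph{and} all machines at once, while keeping the diagonalizing language's own prefix advice within the budget $g$. The delicate points are (i) that restricting to the poly-time computable core $\mathcal{A}_0$ loses nothing — this is exactly where the dominance axiom of reasonable advice bounds enters, and it also lets $A$ be decided in $\mathbf{P}$ (resp.\ $\mathbf{BPP}$); (ii) that at each stage one must kill sub-languages arising from advice strings of all lengths $\le h_{j(i)}(n_i)$, not a single length, which is harmless since it inflates the count only by a factor $\le n_i$; and (iii) the bookkeeping of the $\#$-separated encoding so that the prefix property holds and the total advice length never exceeds $g(n)$, which is where the hypothesis $f \in o(g)$ together with $g \to \infty$ is used, in the same manner as the choice of the sequence $(n_i)$ in the proof of Theorem~\ref{th:diag_2}.
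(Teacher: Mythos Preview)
Your high-level reduction is exactly the paper's: invoke Theorems~\ref{hierarchy_thm1} and~\ref{hierarchy_thm2} to translate the three network separations into $\mathbf{P}/\mathcal{F}^* \subsetneq \mathbf{P}/\mathcal{G}^*$ and $\mathbf{BPP}/(\mathcal{F}\circ\log)^* \subsetneq \mathbf{BPP}/(\mathcal{G}\circ\log)^*$, check that the hypotheses on $\mathcal{F},\mathcal{G}$ transfer to $\mathcal{F}\circ\log,\mathcal{G}\circ\log$, and then appeal to the diagonalization of Theorem~\ref{th:diag_2}.

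Where you go further is in the strictness step. The paper simply writes ``Theorem~\ref{th:diag_2} ensures that $\mathbf{P}/\mathcal{F}^* \subsetneq \mathbf{P}/\mathcal{G}^*$'' and stops. But Theorem~\ref{th:diag_2} is stated for a \emph{single} pair $f,g$, and the diagonalizing language $A$ constructed in its proof depends on the particular $f$ (through the sequence $(n_i)$ and the $f(n_i)+1$ dichotomy steps at each stage). From $\mathcal{C}/f^* \subsetneq \mathcal{C}/g^*$ for every $f\in\mathcal{F}$ one does not immediately obtain $\bigcup_{f\in\mathcal{F}}\mathcal{C}/f^* \subsetneq \mathcal{C}/g^*$. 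Your refinement --- extracting a countable cofinal family $\{h_j\}\subseteq\mathcal{F}$ of poly-time computable bounds via the dominance axiom, dovetailing over all pairs $(M_k,h_j)$, and choosing the $n_i$ so that the cumulative self-delimited encoding of the blocks $A^{n_0},\dots,A^{n_i}$ stays below $g(n_i)$ --- is exactly what is needed to produce one language $A\in\mathcal{C}/g^*$ outside every $\mathcal{C}/f^*$. Your sketch of this is correct (the minor point that for~\ref{point3} the cofinal subfamily should be extracted from $\mathcal{F}$ and then composed with $\log$, rather than extracted directly from $\mathcal{F}\circ\log$, is easily handled). So your argument is strictly more complete than the paper's here; the paper's one-line citation of Theorem~\ref{th:diag_2} is best read as shorthand for the family-wide adaptation you have spelled out.
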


\begin{proof}
\ref{point1} and \ref{point2}: Theorem~\ref{hierarchy_thm1} shows that
\begin{eqnarray*}
\mathbf{P} / \mathcal{F}^* & = & \mathbf{ANN} \left[ K_{\mathrm{poly}}^\mathcal{F}, \mathrm{poly} \right] = \mathbf{ENN} \left[ \bar K_{\mathrm{poly}}^\mathcal{F}, \mathrm{poly} \right] \text{ ~and~ } \\
\mathbf{P} / \mathcal{G}^* & = & \mathbf{ANN} \left[ K_{\mathrm{poly}}^\mathcal{G}, \mathrm{poly} \right] = \mathbf{ENN} \left[ \bar K_{\mathrm{poly}}^\mathcal{G}, \mathrm{poly} \right] .
\end{eqnarray*}
In addition, Theorem~\ref{th:diag_2} ensures that
\begin{eqnarray*}
\mathbf{P} / \mathcal{F}^* \subsetneq \mathbf{P} /\mathcal{G}^*
\end{eqnarray*}
The strict inclusions of Points \ref{point1} and \ref{point2} directly follow.

\ref{point3}: Theorem~\ref{hierarchy_thm2} states that
\begin{eqnarray*}
\mathbf{BPP} / (\mathcal{F} \circ \log)^* & = & \mathbf{SNN} \left[ K_{\mathrm{poly}}^\mathcal{F}, \mathrm{poly} \right] \\
\mathbf{BPP} / (\mathcal{G} \circ \log)^* & = & \mathbf{SNN} \left[ K_{\mathrm{poly}}^\mathcal{G}, \mathrm{poly} \right] .
\end{eqnarray*}
In addition, note that if $f \in \mathcal{F}$ and $g \in \mathcal{G}$ satisfy the hypotheses of Theorem~\ref{th:diag_2}, then so do $f \circ l \in \mathcal{F} \circ \log$ and $g \circ l \in \mathcal{G} \circ \log$, for all $l \in \log$. Hence, Theorem~\ref{th:diag_2} ensures that 
\begin{eqnarray*}
\mathbf{BPP} / (\mathcal{F} \circ \log)^* \subsetneq \mathbf{BPP} / (\mathcal{G} \circ \log)^* .
\end{eqnarray*}
The strict inclusion of Point \ref{point3} ensues.
\end{proof}

Finally, Corollary~\ref{cor:nn_hierarchy} provides a way to construct infinite hierarchies of classes of analog, evolving and stochastic neural networks based on the Kolmogorov complexity of their underlying weights and probabilities, respectively. The hierarchies of analog and evolving networks are located between $\mathbf{P}$ and $\mathbf{P/poly}$. Those of stochastic networks lie between $\mathbf{BPP}$ and $\mathbf{BPP/log^*}$.

For instance, define $\mathcal{F}_i = O \left( \log(n)^i \right)$, for all $i \in \mathbb{N}$. Each $\mathcal{F}_i$ satisfies the three conditions for being a class of reasonable advice bounds (note that the sub-linearity $\log(n)^i$ is satisfied for $n$ sufficiently large). By Corollary~\ref{cor:nn_hierarchy}, the sequence of classes $(\mathcal{F}_i)_{i \in \mathbb{N}}$ induces the following infinite strict hierarchies of classes of neural networks:
\begin{align*}
\mathbf{ANN} \left[ K_{\mathrm{poly}}^{\mathcal{F}_0}, \mathrm{poly} \right] && \subsetneq && \mathbf{ANN} \left[ K_{\mathrm{poly}}^{\mathcal{F}_1}, \mathrm{poly} \right] && \subsetneq && \mathbf{ANN} \left[ K_{\mathrm{poly}}^{\mathcal{F}_2}, \mathrm{poly} \right] && \subsetneq && \cdots  \\
\mathbf{ENN} \left[ \bar K_{\mathrm{poly}}^{\mathcal{F}_0}, \mathrm{poly} \right] && \subsetneq && \mathbf{ENN} \left[ \bar K_{\mathrm{poly}}^{\mathcal{F}_1}, \mathrm{poly} \right] && \subsetneq && \mathbf{ENN} \left[ \bar K_{\mathrm{poly}}^{\mathcal{F}_2}, \mathrm{poly} \right] && \subsetneq && \cdots  \\
\mathbf{SNN} \left[ K_{\mathrm{poly}}^{\mathcal{F}_0}, \mathrm{poly} \right] && \subsetneq && \mathbf{SNN} \left[ K_{\mathrm{poly}}^{\mathcal{F}_1}, \mathrm{poly} \right] && \subsetneq && \mathbf{SNN} \left[ K_{\mathrm{poly}}^{\mathcal{F}_2}, \mathrm{poly} \right] && \subsetneq && \cdots 
\end{align*}

We provide another example of hierarchy for stochastic networks only. In this case, it can be noticed that the third condition for being a class of reasonable advice bounds can be relaxed: we only need that $\mathcal{F} \circ \log$ is a class of reasonable advice bounds and that the functions of $\mathcal{F}$ are bounded by $n$. Accordingly, consider some infinite sequence of rational numbers $(q_i)_{i \in \mathbb{N}}$ such that $0 < q_i < 1$ and $q_i < q_{i+1}$, for all $i \in \mathbb{N}$, and define $\mathcal{F}_i = O(n^{q_i})$, for all $i \in \mathbb{N}$. Each $\mathcal{F}_i$ satisfies the required conditions. By Corollary~\ref{cor:nn_hierarchy}, the sequence of classes $(\mathcal{F}_i)_{i \in \mathbb{N}}$ induces the following infinite strict hierarchies of classes of neural networks:
\begin{align*}
\mathbf{SNN} \left[ K_{\mathrm{poly}}^{\mathcal{F}_0}, \mathrm{poly} \right] ~\subsetneq~ 
\mathbf{SNN} \left[ K_{\mathrm{poly}}^{\mathcal{F}_1}, \mathrm{poly} \right] ~\subsetneq~ 
\mathbf{SNN} \left[ K_{\mathrm{poly}}^{\mathcal{F}_2}, \mathrm{poly} \right] ~\subsetneq~ \cdots
\end{align*}




\section{Conclusion}
\label{sec:discussion}

We provided a refined characterization of the super-Turing computational power of analog, evolving, and stochastic recurrent neural networks based on the Kolmogorov complexity of their underlying real weights, rational weights, and real probabilities, respectively. For the two former models, infinite hierarchies of classes of analog and evolving networks lying between $\mathbf{P}$ and $\mathbf{P/poly}$ have been obtained. For the latter model, an infinite hierarchy of classes of stochastic networks located between $\mathbf{BPP}$ and $\mathbf{BPP/log^*}$ has been achieved. Beyond proving the existence and providing examples of such hierarchies, Corollary~\ref{cor:nn_hierarchy} establishes a generic way of constructing them based on classes of functions satisfying the reasonable advice bounds conditions.

This work is an extension of the study from Balc\'azar et al.~\cite{SiegelmannEtAl97} about a Kolmogorov-based hierarchization of analog neural networks. In particular, the proof of Theorem~\ref{hierarchy_thm1} draws heavily on their Theorem 6.2~\cite{SiegelmannEtAl97}. In our paper however, we adopted a relatively different approach guided by the intention of keeping the computational relationships between recurrent neural networks and Turing machines with advice as explicit as possible. In this regard, Propositions~\ref{ANN_prop}, \ref{ENN_prop} and \ref{SNN_prop} characterize precisely the connections between the real weights, evolving weights, or real probabilities of the networks, and the advices of different lengths of the corresponding Turing machines. On the contrary, the study of Balc\'azar et al.~keeps these relationships somewhat hidden, by referring to an alternative model of computation: the Turing machines with tally oracles. Another difference between the two works is that our separability results (Theorem~\ref{th:diag} and \ref{th:diag_2}) are achieved by means of a diagonalization argument holding for any non-uniform complexity classes, which is a result of specific interest per se. In particular, our method does not rely on the existence of reals of high Kolmogorov complexity. A last difference is that our conditions for classes of reasonable advice bounds are slightly weaker than theirs.

The computational equivalence between stochastic neural networks and Turing machines with advice relies on the results from Siegelmann~\cite{Siegelmann99b}. Our Proposition~\ref{SNN_prop} is fairly inspired by their Lemma 6.3~\cite{Siegelmann99b}. Yet once again, while their latter Lemma concerns the computational equivalence between two models of bounded error Turing machines, our Proposition~\ref{SNN_prop} describes the explicit relationship between stochastic neural networks and Turing machines with logarithmic advice. Less importantly, our appeal to union bound arguments allows for technical simplifications of the arguments presented in their Lemma.

The main message conveyed by our study is twofold: (1) the complexity of the real or evolving weights does matter for the computational power of analog and evolving neural networks; (2) the complexity of the source of randomness does also play a role in the capabilities of stochastic neural networks. These theoretical considerations contrast with the practical research path about approximate computing, which concerns the plethora of approximation techniques -- among which precision scaling -- that could sometimes lead to disproportionate gains in efficiency of the models~\cite{Mittal16}. In our context, the less compressible the weights or the source of stochasticity, the more information they contain, and in turn, the more powerful the neural networks employing them.

For future work, hierarchies based on different notions than the Kolmogorov complexity could be envisioned. In addition, 
the computational universality of echo state networks could be studied from a probabilistic perspective. Given some echo state network $\mathcal{N}$ complying with well-suited conditions on its reservoir, and given some computable function $f$, is it possible to find output weights such that $\mathcal{N}$ computes like $f$ with high probability?

Finally, the proposed study intends to bridge some gaps and present a unified view of the refined capabilities of analog, evolving and stochastic recurrent neural networks. The debatable question of the exploitability of the super-Turing computational power of neural networks lies beyond the scope of this paper, and fits within the philosophical approach to hypercomputation~\cite{Siegelmann99,Copeland02,Copeland04}. Nevertheless, we believe that the proposed study could contribute to the progress of analog computation~\cite{BournezPouly21}.

\section*{Acknowledgements}

This research was partially supported by Czech Science Foundation, grant AppNeCo \#GA22-02067S, institutional support RVO: 67985807.

\bibliographystyle{plain}
\bibliography{biblio}

\end{document}